\tikzstyle{vertex}=[auto=left,circle,fill=black!25,minimum size=20pt,inner sep=0pt]
\theoremstyle{definition}
\newtheorem{defn}{Definition}[section]
\newtheorem{thm}[defn]{Theorem}
\newtheorem{cor}[defn]{Corollary}
\newtheorem{prop}[defn]{Proposition}
\newtheorem{lemma}[defn]{Lemma}
\theoremstyle{remark}
\newtheorem{rmk}{Remark}
\DeclareSymbolFont{symbolsC}{U}{txsyc}{m}{n}
\DeclareMathSymbol{\strictif}{\mathrel}{symbolsC}{74}
\providecommand{\bigsqcap}{%
	\mathop{%
		\mathpalette\@updown\bigsqcup
	}%
}
\newcommand*{\@updown}[2]{%
	\rotatebox[origin=c]{180}{$\m@th#1#2$}%
}
\newcommand{\prim}{^\prime}
\newcommand{\cneg}{{\sim}}
\newcommand{\obl}{\textsf{O}}
\newcommand{\per}{\textsf{P}}
\newcommand{\defeq}{=_{\text{def}}}
\newcommand{\aland}{\tilde{\land}}
\newcommand{\alor}{\tilde{\lor}}
\newcommand{\ato}{\tilde{\to}}
\newcommand{\aneg}{\tilde{\neg}}
\newcommand{\aobl}{\tilde{\obl}}
\newcommand{\acirc}{\tilde{\circ}}
\newcommand{\nuciw}{v^{DmbCciw}}
\newcommand{\nuci}{v^{DmbCci}}
\newcommand{\nubc}{v^{DbC}}
\newcommand{\nuCi}{v^{DCi}}
\newcommand{\nucL}{v^{DmbCcl}}
\newcommand{\nucila}{v^{DCila}}
\newcommand{\nucone}{v^{C_1}}
\newcommand{\sdl}{\ensuremath{SDL}}
\newcommand{\mbC}{\ensuremath{mbC}}
\newcommand{\DbC}{\ensuremath{DbC}}
\newcommand{\DCi}{\ensuremath{DCi}}
\newcommand{\DCila}{\ensuremath{DCila}}
\newcommand{\Cila}{\ensuremath{Cila}}
\newcommand{\DCone}{\ensuremath{DC_1}}
\newcommand{\mbCcl}{\ensuremath{mbCcl}}
\newcommand{\mbCciw}{\ensuremath{mbCciw}}
\newcommand{\DmbC}{\ensuremath{DmbC}}
\newcommand{\DmbCciw}{\ensuremath{DmbCciw}}
\newcommand{\DmbCci}{\ensuremath{DmbCci}}
\newcommand{\DmbCcl}{\ensuremath{DmbCcl}}
\newcommand{\lfis}{\textit{LFI}s}
\newcommand{\ldis}{\textit{LDI}s}
\newcommand{\cplp}{\ensuremath{CPL^+}}
\newcommand{\cpl}{\ensuremath{CPL}}
\newcommand{\A}{\ensuremath{\mathcal{A}}}
\newcommand{\B}{\ensuremath{\mathcal{B}}}
\newcommand{\D}{\ensuremath{\mathcal{D}}}
\newcommand{\U}{\ensuremath{\mathcal{U}}}
\newcommand{\M}{\ensuremath{\mathcal{M}}}
\newcommand{\axK}{\textsf{(K)}}
\newcommand{\axD}{\textsf{(D)}}
\title{Swap Kripke models for deontic \lfis}
\author{Mahan Vaz\footnote{Instituto de Filosofia e Ciências Humanas (IFCH), Universidade Estadual de Campinas (UNICAMP), Brazil, and Institut für Philosophie I, Logik und Erkenntnistheorie, Ruhr-Universität Bochum, Germany. E-mail: \texttt{Mahan.VazSilva@edu.ruhr-uni-bochum.edu}} \ and  Marcelo Esteban Coniglio\footnote{Instituto de Filosofia e Ci\^{e}ncias Humanas (IFCH), and
Centro de L\'ogica, Epistemologia e Hist\'oria da Ci\^encia (CLE),  Universidade Estadual de Campinas (UNICAMP), Brazil. E-mail: \texttt{coniglio@unicamp.br}}}
\begin{document}
	\maketitle

\begin{abstract}
    We present a construction of nondeterministic semantics for some deontic logics based on the class of paraconsistent logics known as {\em Logics of Formal Inconsistency} (\lfis), for the first time combining swap structures and Kripke models through the novel notion of swap Kripe models. We start by making use of Nmatrices to characterize systems based on \lfis\ that do not satisfy axiom \textbf{(cl)}, while turning to RNmatrices when the latter is considered in the  underlying \lfis. This paper also presents, for the first time, a full axiomatization and a semantics for the $C^{D}_n$ hierarchy, by use of the aforementioned mixed semantics with RNmatrices. This includes the historical system $C^{D}_1$ of da Costa-Carnielli (1986), the first deontic paraconsistent system proposed in the literature.
    
\end{abstract}

    {\bf Keywords:} Deontic logic; paraconsistent logic; nondeterministic semantics;

\section{Introduction}

One proposal to deal with deontic paradoxes which derive a conflict of obligations was by means of changing the basic logic to a paraconsistent one.
This was pioneered by Carnielli and da Costa~\cite{CostaCarnielli1986}, when the authors proposed the system $C_1$ as the base logic for a deontic system. The idea behind this move was simple: given a formula $\alpha$ in $C_1$, the occurrence of $\alpha \land \neg \alpha$ does not trivialize the system. Hence, extending it to a modal system with the axioms of Standard Deontic Logic would also prevent such a system to be trivialized. Moreover, since $C_1$ is the first logic in a hierarchy of logics $C_n$, for $n \in \mathbb{N}$, then the move could be similarly applied to any logic in the hierarchy. The latter, however, has hitherto never been done. In addition, a formal semantics treatment for such systems  has never been  provided in the literature. One of the aims of this  paper is to fill these gaps.

With the passing years, the notion of deontic paraconsistency has evolved. Not only other authors worked on the subject \cite{McGinnis2007, BeirlanStrasser2011}, but in particular, Coniglio and Peron have developed and studied the systems named \ldis, which expand the notion of paraconsistent deontic logics developed by da Costa and Carnielli, now applying the axioms of Standard Deontic Logic to many {\lfis}.\footnote{Also to some other logics, such as Batens' CLuN, previously named DPI \cite{Batens1980a, Batens1980b}.} This also brought the notion of deontic paraconsistency to light; a logic is \textit{deontically paraconsistent} if it is not deontically explosive, i.e., for some $\alpha, \beta$ in the set of formulas, we have the following:  $$\obl \alpha, \obl \neg \alpha \nvdash \obl \beta$$

Moreover, a logic is a Logic of Deontic Inconsistency, \textit{LDI}, if it is not deontically explosive and, besides, there  a unary connective (primitive or defined) $\bar{\boxminus}$ such that the following holds:\footnote{As in the case of \lfis, in the general case $\bar{\boxminus}(p)$ can be considered as being a set of modal formulas depending on a single  propositional letter $p$. \ldis\ where introduced by Coniglio in~\cite{Coniglio2009b}. Additional developments and applications of  \ldis\ can be found in \cite{Coniglio2009}.}
\begin{itemize}
        \item For some sentences $\alpha, \alpha\prim,\beta,\beta\prim$,
        \begin{itemize}
            \item $\bar{\boxminus}(\alpha), \obl\alpha \nVdash \obl\beta $
            \item $\bar{\boxminus}(\alpha\prim), \obl\neg\alpha\prim \nVdash \obl\beta\prim $
        \end{itemize}  
        \item For any $\alpha, \beta$
        \begin{itemize}
            \item $\bar{\boxminus}(\alpha), \obl\alpha, \obl \neg \alpha \Vdash \obl\beta $
        \end{itemize}
\end{itemize}

Any normal modal logic based on an \textit{LFI} can be seen as an \textit{LDI} simply by taking $\bar{\boxminus}(\alpha):=\obl{\circ}\alpha$. The original semantics for \ldis\ based on \lfis\ provided in~\cite{Coniglio2009b,Coniglio2009} was given in terms of Kripke structures together with bivaluation semantics.\footnote{It is worth noting that Bueno-Soler has introduced a wide class of  paraconsistent modal systems based on \lfis, also with a semantics given by  Kripke structures equipped with bivaluation semantics, and alternatively with a possible-translations semantics. See, for instance, \cite{Bueno-Soler10}.} 

The reappearance of nondeterministic matrix (Nmatrix) semantics also sparked the interest in modal logics. The recent years have seen a rise in developments in the area, with prominent works (in chronological order) by Coniglio, Fari\~nas del Cerro and Peron~\cite{Coniglio2015},  Skurt and Omori~\cite{OmoriSkurt16}, Coniglio and Golzio~\cite{Coniglio19}, Grätz~\cite{Graetz2021}, Pawlowski and Skurt~\cite{Pawlowski2024}, Pawlowski, Coniglio and Skurt~\cite{ConPawSku2024} and Leme et al.~\cite{leme2025}, among others. In general, these works show that Nmatrices and RNmatrices (i.e., restricted Nmatrices) allow for the characterization of many non-normal modal logics.  These results motivated the aim to tackle deontic \lfis\ nondeterministically and the pursue to cover the whole $C_n$ hierarchy, as envisioned by Carnielli and da Costa in 1986. 

It is important to note that, although the works of Coniglio and Peron, as well as Bueno-Soler, cover a portion of the \lfis' hierarchy, some of the \lfis\  were not studied at the time. Pertaining to the latter were the systems satisfying axiom {\bf (cl)}, which are not characterizable by finite Nmatrices, by the Dugundji-like theorem proved by Arnon Avron \cite[Theorem~11]{Avron2007a}. In order to characterize them one must resort to RNmatrices \cite{ConiglioToledo2022}, with their respective modal versions being rather complex and needing an extra layer of caution when being dealt with. Despite all the caution, the results by Coniglio and Toledo present a new possibility for a nondeterministic semantic characterization of these logics.

Having these details in mind, this paper starts by presenting a semantics for some deontic \lfis, starting with \DmbC. The novelty at this point is that, different to the previous approaches to modal \lfis\ found in the literature,  we present a novel semantics given by a combination  between swap structures and Kripke models, by allowing sets of worlds and relations as a frame, where each of the worlds is nondeterministic. We then move to extensions of this logic eventually reaching \DmbCcl, which, by previous known results, is not characterizable by finite Nmatrices. We then show, following the results presented in \cite{ConiglioToledo2022}, that the combination between RNmatrices and Kripke models allows for a characterization of this logic, as well as its extensions, namely \DCila, its conservative reduct $C^{D}_1$ and the hierarchy extension $C^{D}_n$, for all $C_n$. Regarding the latter, we present for the first time an explicit characterization of these logics, describing its axioms and a semantics (once again in terms of swap Kripke models), given that the deontic systems for the whole hierarchy were never explicitly described. We end this paper with a brief discussion of our results, applying the $C^{D}_n$ hierarchy to moral dilemmas.

\section{The paraconsistent deontic system \DmbC}
    \label{sec:DmbC}
    We give a modal account of the fundamental \textit{LFI}, {\mbC} together with a modalization proposed in \cite{Coniglio2009b, Peron2008, Coniglio2009}, which is a \textit{deontic} version of $\mbC$, which we call \DmbC.

\begin{defn} \label{def:DmbC}
	Let $\Sigma = \{\to , \neg, \lor, \land, \obl,\circ\}$ be a signature for {\em LFI}s. The logic {\DmbC} defined over $\Sigma$ is the system characterized by all CPL$^+$ axioms, that is, the axioms corresponding to the positive fragment of classical propositional logic, plus the following axioms for $\neg$ and $\circ$:
	
	\begin{description}
		\item[(EM)] $\alpha \lor \neg \alpha$
		\item[(bc)] $\circ\alpha \to (\alpha \to (\neg \alpha \to \beta))$
	\end{description}
	
\noindent	together with the following modal axioms: 
	
	\begin{description}
		\item[($\obl-K$)] $\obl(\alpha \to \beta) \to (\obl \alpha \to \obl \beta)$ 
		\item [($\obl-E$)] $ \obl\bot_\alpha \to \bot_\alpha$, where $\bot_\alpha := (\alpha \land \neg \alpha) \land \circ \alpha $
	\end{description}
	
\noindent		such that the only inference rules are Modus Ponens and $\obl$-necessitation.
\end{defn}

Observe that $\obl$-necessitation is a {\em global} inference rule (i.e., it only can be applied to premises which are theorems). From this, the notion of {\em derivation from premises} needs to be adjusted in \DmbC, as it is usually done in normal modal systems.

\begin{defn} \label{def:derivations}
Let $\Gamma \cup \{\varphi\} \subseteq For(\Sigma)$.\\[1mm]
(1) A {\em derivation} of $\varphi$ in \DmbC\ is a finite sequence of formulas $\varphi_1 \ldots\varphi_n$ such that $\varphi_n=\varphi$ and, for every  $1 \leq i \leq n$, either $\varphi_i$ is an instance of an axiom, or $\varphi_i$ follows from $\varphi_j$ and $\varphi_k=\varphi_j \to \varphi_i$ (for $j,k<i$) by Modus Ponens, or $\varphi_i=\obl\varphi_j$ follows from $\varphi_j$ (for $j<i$) by  $\obl$-necessitation. In this case, we say that $\varphi$ is {\em derivable}  in \DmbC, or it is a {\em theorem} of \DmbC, which will be denoted by $\vdash_{\DmbC} \varphi$.\\[1mm]
(2) We say that {\em $\varphi$ is derivable from $\Gamma$ in \DmbC}, denoted by  $\Gamma\vdash_{\DmbC} \varphi$, if either  $\vdash_{\DmbC} \varphi$, or there exist formulas $\gamma_1,\ldots,\gamma_k \in \Gamma$ (for a finite $k \geq 1$) such that $\vdash_{\DmbC} (\gamma_1  \land\ldots \land \gamma_k) \to \varphi$ .
\end{defn} 

Observe that $\emptyset\vdash_{\DmbC} \varphi$ iff $\vdash_{\DmbC} \varphi$.
We know that the system \DmbC\ is metatheoretically well-behaved, as the results by Coniglio \cite{Coniglio2009b} show, where both the deduction metatheorem and proof-by-cases is shown to hold for \DmbC. 

It is important to note a few things about the conception of this logic. We define $\bot_\alpha$ being equivalent to  $(\alpha \land \neg \alpha) \land \circ \alpha$. Since \mbC\ is a minimal \textit{LFI} \cite{Marcos2007}, it contains the consistency operator $\circ$ and it serves the purpose of saying when a certain formula $\alpha$ is metatheoretically well-behaved in the system from a \textit{logical} perspective. 
By taking $\cneg \alpha :=(\alpha \to \bot_\alpha)$, we recover classical negation and can again define permission in terms of $\cneg$. This allows us to add $(\obl-E)$ instead of the usual deontic axiom
\begin{description}
    \item[$(\obl-D)$] $\obl\alpha \to \per\alpha $
\end{description}
 where $ \per\alpha := \cneg \obl\cneg \alpha$, for characterizing \DmbC. Observe that, as usual, $ \per\alpha$ means that `$\alpha$ is permitted'.
 
So let us take $\bot$ to be a bottom formula in \cpl. By our definition of $\per\alpha$, the following result ensues: $$\obl\alpha \to \cneg \obl\cneg \alpha \equiv \obl\alpha \to (\obl\cneg \alpha \to \bot) \equiv (\obl\alpha \land \obl\cneg \alpha) \to \bot$$ and given that $\obl\alpha \land \obl\cneg \alpha \equiv \obl(\alpha \land \cneg \alpha) $, then we get $\obl(\alpha \land \cneg \alpha) \to \bot$ or, equivalently, $\cneg\obl(\alpha \land \cneg \alpha)$ (another standard way to represent the deontic axiom). In turn, if we define $f_\alpha := (\alpha \land \cneg \alpha)$, then the last result is equivalent to $\obl f_\alpha \to f_\alpha$. The equivalence used to obtain the last result is target for many criticisms in deontic logics, however, it will not be within the scope of this paper to address such criticisms.

    \subsection{Swap Kripke models for \DmbC}
    
    \textit{Swap} structures are multialgebras of a particular kind,  defined over ordinary algebras, and whose domains are the truth values of a certain logic, but presented as finite sequences of values of the underlying algebra. These sequences, called {\em snapshots},  represent (semantical) states of a given formula, described by the components of the sequence. For \DmbC, the snapshots consists of pairs  over the two-element Boolean algebra with domain $2=\{0,1\}$ representing the semantical state of a formula and of its paraconsistent negation $\neg$. The {\em consistency} (or {\em classicality}) operator $\circ$ is defined in terms of its relation with contradiction w.r.t. the paraconsistent negation. 
  
	\begin{defn}
    \label{def:Nmatrix-DmbC}
	Let $\mathcal{A}_3:= \langle A, \tilde{\land}, \tilde{\lor}, \tilde{\neg}, \tilde{\to}, \tilde{O}, \tilde{\circ}\rangle$ be a multialgebra with domain $A = \{T,t,F\}$.  Let $\mathcal{D} = \{T,t\} $ denote the designated truth values and define $\mathcal{M}_3:= \langle \mathcal{A}_3, \mathcal{D} \rangle$ to be an $N$matrix over signature $\Sigma$.
	\end{defn} 
	
	\begin{rmk}
    \label{rmk:values-DmbC}
	As mentioned above, the domain of the multialgebras in which we are interested is formed by pairs over 2 intending to represent the (simultaneous) values in $2=\{ 0 ,1\}$ assigned to the formulas $\varphi$ and  $\neg\varphi$. That domain is the set of truth values for such logic. Thus, we let $A \subseteq 2^2$ and define $T := (1,0)$ ($\varphi$ is true, $\neg\varphi$ is false); $t:= (1,1)$ ($\varphi$ is true, $\neg\varphi$ is true); and $F:= (0,1)$ ($\varphi$ is false, $\neg\varphi$ is true). We remove from the domain the pair $(0,0)$  ($\varphi$ is false, $\neg\varphi$ is false) since the paraconsistent negation is assumed to satisfy the excluded-middle law (EM) (recall Definition~\ref{def:DmbC}). From now on, we mention whenever possible the snapshots instead of their labels. Notice that $\D = \{(1,0), (1,1)\}=\{z \in A \ : \ z_1=1 \}$. 
	\end{rmk}
	
	\begin{defn}  \label{def:swapOperations}
		The modal swap structure for {\DmbC} is $\mathcal{A}_3$ (cf. \cref{def:Nmatrix-DmbC}) such that its domain is $ \mathcal{B}_{\mathcal{A}_3}^{DmbC}=\{(c_1,c_2) \in A : c_1 \sqcup c_2 = 1\}$ and the multioperations $\tilde{\land}, \tilde{\lor}, \tilde{\to}, \tilde{\neg}, \tilde{\circ}$, as well as a special multioperator $\tilde{O}:\wp_+(A) \to \wp_+(A)$,\footnote{In this paper, $\wp_+(Y)$ will denote the set of non-empty subsets of a set $Y$.} are defined as follows, for every $a, b \in A $ and  $\emptyset \neq X \subseteq A$:
		
		\begin{enumerate}
			\item $a \tilde{\land} b := \{(c_1, c_2) \in A \ : \ c_1 = a_1 \sqcap b_1 \}$
            \item $a \tilde{\lor} b := \{(c_1, c_2) \in A \ : \  c_1 = a_1 \sqcup b_1 \}$
            \item $a \tilde{\to} b := \{(c_1, c_2) \in A \ : \ c_1 = a_1 \supset b_1 \}$
			\item $\aneg a := \{ (c_1, c_2) \in A \ : \ c_1 = a_2 \}$
            \item $\acirc a := \{(c_1,c_2) \in A \ : \ c_1 \leq \cneg(a_1 \sqcap a_2)\}$
			\item $\aobl(X) := \{ (c_1, c_2) \in A\ : \ c_1 = \bigsqcap \{x_1 \ : \ x \in X\}\} $ .
		\end{enumerate}
        
	\end{defn}
  
    \begin{rmk}
        The symbols $\sqcap, \sqcup, \supset, \cneg$ refer to the Boolean operations of meet, join, implication and boolean complement in 2, respectively. The symbol $\bigsqcap$ is applied to a non-empty subset of $2$ and denotes the meet of all the elements of that set.
    \end{rmk}

	\begin{rmk}
		$\B_{\A_3}^{\DmbC} = A$ via the analytical representation of the truth values, shown in the previous remark. The non-deterministic truth-tables for the non-modal operators are displayed below, where $\U = \{F\} $ is the set of non-designated truth values.

\begin{center}

\begin{tabular}{|l|c|c|r|}
\hline
$\tilde{\wedge}$ & $T$ & $t$ & $F$ \\ \hline
$T$ & $\D$ & $\D$ & $\U$ \\ \hline
$t$ & $\D$ & $\D$ & $\U$ \\ \hline
$F$ & $\U$ & $\U$ & $\U$ \\ \hline
\end{tabular}
\hspace{2cm}
\begin{tabular}{|l|c|c|r|}
\hline
$\tilde{\vee}$ & $T$ & $t$ & $F$ \\ \hline
$T$ & $\D$ & $\D$ & $\D$ \\ \hline
$t$ & $\D$ & $\D$ & $\D$ \\ \hline
$F$ & $\D$ & $\D$ & $\U$ \\ \hline
\end{tabular}
\end{center}

\begin{center}
\begin{tabular}{|l|c|c|r|}
\hline
$\tilde{\rightarrow}$ & $T$ & $t$ & $F$ \\ \hline
$T$ & $\D$ & $\D$ & $\U$ \\ \hline
$t$ & $\D$ & $\D$ & $\U$ \\ \hline
$F$ & $\D$ & $\D$ & $\D$ \\ \hline
\end{tabular}
\hspace{1.5cm}
\begin{tabular}{|l|r|}
\hline
 & $\tilde{\neg}$ \\ \hline
$T$ & $\U$ \\ \hline
$t$ & $\D$\\ \hline
$F$ & $\D$\\ \hline
\end{tabular}
\hspace{1.5cm}
\begin{tabular}{|l|r|}
\hline
 & $\tilde{\circ}$ \\ \hline
$T$ & $A$ \\ \hline
$t$ & $\U$\\ \hline
$F$ & $A$\\ \hline
\end{tabular}

\end{center}

	\end{rmk}
    
	\begin{defn}
		\label{def:nu_w}
			Let $W$ be a non-empty set (of {\em possible worlds}), and $R \subseteq W^2$ be an {\em accessibility} relation on $W$. For each $w, w^\prime \in W$ a function $v_w : For(\Sigma) \to A$ is a \textit{swap valuation} for {\DmbC} if every condition below is satisfied, for $\alpha,\beta\in For(\Sigma)$:
			
			\begin{description}
				\item[1.] $v_w(\alpha \land \beta) \in v_w(\alpha) \aland v_w(\beta)$				
                \item[2.] $v_w(\alpha \lor \beta) \in v_w(\alpha) \alor v_w(\beta)$
                \item[3.] $v_w(\alpha \to \beta) \in v_w(\alpha) \ato v_w(\beta)$
				\item[4.] $v_w(\neg \alpha) \in \aneg v_w(\alpha)$								
				\item[5.] $v_w(\circ\alpha) \in \acirc v_w(\alpha)$
                \item[6.] $v_w(\obl\alpha) \in \aobl\big(\{v_{w\prim}(\alpha) \ : \ w R w^\prime\}\big)$ 
			\end{description}
	\end{defn}

    \begin{defn} \label{def:swap-Kripke-DmbC}
    Let $W$ be a non-empty set of worlds, $R \subseteq W^2$ be a serial accessibility relation\footnote{Recall that a relation $R \subseteq W^2$ is {\em serial} if, for every $w \in W$, there exists $w'\in W$ such that $wRw'$.} and $\{v_w\}_{w \in W}$ a family of swap valuations for \DmbC.  We say that the triple $\M^{K-S}_{\DmbC} = \langle W,R,\{v_w\}_{w \in W} \rangle$ is a {\em swap Kripke model} for the logic {\DmbC}.
    \end{defn}

   \begin{rmk}
   \label{rmk:notation} Let $i \in \{1,2\}$ and $w \in W$.  We define $\pi_i(v_w(\alpha))$ to be the projection of the pair $v_w(\alpha)$ on its $i$-th coordinate. For the sake of simplicity, we adopt the notation $\alpha_{(i,w)}$ to denote $\pi_i(v_w(\alpha))$. 
    \end{rmk} 
    
    \begin{lemma}
    \label{lemma:proj-nu_w}
        Let $w \in W$ and $\alpha,\beta \in For(\Sigma)$. Moreover, let $v_w$ be as in \cref{def:nu_w}. Then 
            \begin{description}
                \item[1.] $(\alpha \land \beta)_{(1,w)} = \alpha_{(1,w)} \sqcap \beta_{(1,w)}$
                \item[2.] $(\alpha \lor \beta)_{(1,w)} = \alpha_{(1,w)} \sqcup \beta_{(1,w)}$
                \item[3.] $(\alpha \to \beta)_{(1,w)} = \alpha_{(1,w)} \supset \beta_{(1,w)}$
                \item[4.] $(\neg \alpha)_{(1,w)} = \alpha_{(2,w)}$
                \item[5.] $(\circ \alpha)_{(1,w)} \leq {\sim} (\alpha_{(1,w)} \sqcap \alpha_{(2,w)})$ 
                \item[6.] $(\obl \alpha)_{(1,w)} = \bigsqcap \{\alpha_{(1,w\prim)} \ : \ wRw\prim\}$ 
            \end{description}
   \end{lemma}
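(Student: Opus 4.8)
The plan is to establish each of the six identities by directly unwinding Definition~\ref{def:nu_w} together with the definitions of the multioperations in Definition~\ref{def:swapOperations}, and then reading off the first coordinate. I would fix $w \in W$ and $\alpha, \beta \in For(\Sigma)$ and abbreviate $a := v_w(\alpha)$, $b := v_w(\beta)$, so that $a_1 = \alpha_{(1,w)}$, $a_2 = \alpha_{(2,w)}$, and likewise for $b$.

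For clause~1: condition~1 of Definition~\ref{def:nu_w} gives $v_w(\alpha \land \beta) \in a \aland b$, and clause~1 of Definition~\ref{def:swapOperations} tells us that every pair $(c_1,c_2) \in a \aland b$ satisfies $c_1 = a_1 \sqcap b_1$; applying this to $(c_1,c_2) = v_w(\alpha \land \beta)$ yields $(\alpha \land \beta)_{(1,w)} = \alpha_{(1,w)} \sqcap \beta_{(1,w)}$. Clauses~2, 3 and 4 go through in exactly the same manner, pairing conditions~2--4 of Definition~\ref{def:nu_w} with clauses~2--4 of Definition~\ref{def:swapOperations} (for clause~4 one uses $\aneg a = \{(c_1,c_2) : c_1 = a_2\}$, so the first coordinate of $v_w(\neg\alpha)$ is $a_2 = \alpha_{(2,w)}$).

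Clause~5 is the single place where the conclusion is an inequality rather than an equality: since $\acirc a = \{(c_1,c_2) \in A : c_1 \leq \cneg(a_1 \sqcap a_2)\}$ only bounds the first coordinate from above, combining it with condition~5 of Definition~\ref{def:nu_w} gives $(\circ \alpha)_{(1,w)} \leq \cneg(\alpha_{(1,w)} \sqcap \alpha_{(2,w)})$. For clause~6, condition~6 of Definition~\ref{def:nu_w} gives $v_w(\obl \alpha) \in \aobl(X)$ for $X = \{v_{w'}(\alpha) : wRw'\}$; noting that $\{x_1 : x \in X\} = \{\alpha_{(1,w')} : wRw'\}$ and invoking clause~6 of Definition~\ref{def:swapOperations} then yields $(\obl \alpha)_{(1,w)} = \bigsqcap\{\alpha_{(1,w')} : wRw'\}$. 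One should record here that $X$ is non-empty --- which is guaranteed in a swap Kripke model by the seriality of $R$ --- so that $\bigsqcap$ is applied to a non-empty subset of $2$ as required.

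I do not expect any genuine obstacle: the lemma is a bookkeeping consequence of the two definitions, resting on the fact that for $\aland, \alor, \ato, \aneg$ and $\aobl$ the first coordinate of the output is fully determined by the inputs (all the nondeterminism of these multioperations lives in the second coordinate). The only two points that merit a line of comment are the asymmetric clause~5, where the consistency multioperation genuinely constrains the first coordinate only by $\leq$ (it is nondeterministic in both coordinates at the snapshots $T$ and $F$), and the implicit non-emptiness needed in clause~6.
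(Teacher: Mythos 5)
Your proof is correct and follows essentially the same route as the paper's: items 1--5 are read off directly by pairing each clause of Definition~\ref{def:nu_w} with the corresponding multioperation in Definition~\ref{def:swapOperations}, and item 6 is obtained by unwinding $\aobl$ applied to $\{v_{w'}(\alpha) : wRw'\}$. Your explicit remark that the non-emptiness of this set (via seriality of $R$) is needed for $\bigsqcap$ to be well-defined is a detail the paper's proof leaves tacit, but it is not a different argument.
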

    \begin{proof}
        Items 1. through 5. are immediate from the definitions above. For item 6., consider $\aobl\big(\{v_{w\prim}(\alpha) \ : \ w R w^\prime\}\big)$. By \cref{def:swapOperations}, $$\aobl\big(\{v_{w\prim}(\alpha) \ : \ w R w^\prime\}\big) = \{c \in A \ : \ c_1 = \bigsqcap \{x_{(1,w\prim)} \ : \ x \in \{v_{w\prim}(\alpha) \ : \ w R w^\prime\} \}\}$$
        thus our result follows by item 6. of \cref{def:nu_w}.
    \end{proof}

   \begin{rmk}
    \label{prop:designated-dmbc}
        By the very definitions, for any $w \in W$ and for any $\alpha \in For(\Sigma)$, $v_w(\alpha) \in \D$ if and only if $\alpha_{(1,w)} = 1$.
   \end{rmk}

    \begin{rmk} \label{rem:O-E}
        We can now give a clear picture of what our models will look like. In particular, we show how any model of $\DmbC$ satisfy axiom ($\obl$-E). It is easy to see  that $v_w(\obl\bot_\alpha)=v_w(\bot_\alpha)$. Indeed, by item~6 of Lemma~\ref{lemma:proj-nu_w},  $(\obl \bot_\alpha)_{(1,w)} = \bigsqcap \{(\bot_\alpha)_{(1,w\prim)} \ : \ wRw\prim\} =  \bigsqcap \{0 \ : \ wRw\prim\}=0$, given that $(\bot_\alpha)_{(1,w\prim)}=0$ for every $w\prim  \in W$ (by \cref{lemma:proj-nu_w} items 1, 4 and 5), and $\{w\prim \in W \ : \  w R w\prim\}\neq\emptyset$, since $R$ is serial. 
    \end{rmk}

    \begin{defn}
        \label{def:truth}
		Let $\mathcal{M}^{K-S}_{DmbC} = \langle W, R, \{v_w\}_{w \in W}  \rangle$ be as in \cref{def:swap-Kripke-DmbC}. For any formula $\alpha \in For(\Sigma)$, we say that $\alpha$ is $\mathcal{M}^{K-S}_{DmbC}$-true in a world $w$, denoted $\mathcal{M}^{K-S}_{DmbC}, w \vDash \alpha$, if $v_w(\alpha) \in \mathcal{D}$. 
	\end{defn}

	\begin{defn}
        \label{def:l-cons}
        Let $\Gamma \cup\{\alpha\} \subseteq For(\Sigma)$. We say that $\alpha$ is a logical consequence of $\Gamma$ in \DmbC, denoted $\Gamma \vDash_{\DmbC}  \alpha$, if for all $\mathcal{M}^{K-S}_{DmbC}$ and $w \in W$: $\mathcal{M}^{K-S}_{DmbC}, w \vDash \Gamma$ implies that $\mathcal{M}^{K-S}_{DmbC}, w \vDash \alpha$.
	\end{defn}

    \begin{thm}[Soundness of \DmbC\ w.r.t. swap Kripke models] \ \\
        For every $\Gamma \cup \{\varphi\} \subset For(\Sigma)$, if $\Gamma \vdash_{\DmbC} \varphi$, then $\Gamma \vDash_{\DmbC}  \varphi$.
    \end{thm}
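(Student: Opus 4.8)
The plan is to argue in the two stages customary for a modal system whose only global rule is necessitation. \textbf{Stage~1} establishes the ``theorems are valid'' case: if $\vdash_{\DmbC}\varphi$ then $v_w(\varphi)\in\D$ at every world $w$ of every swap Kripke model for \DmbC. \textbf{Stage~2} lifts this to premises by invoking \cref{def:derivations}(2), which reduces $\Gamma\vdash_{\DmbC}\varphi$ either to $\vdash_{\DmbC}\varphi$ or to the existence of finitely many $\gamma_1,\dots,\gamma_k\in\Gamma$ with $\vdash_{\DmbC}(\gamma_1\land\cdots\land\gamma_k)\to\varphi$.

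The engine of Stage~1 is the observation, already packaged in \cref{lemma:proj-nu_w} together with \cref{prop:designated-dmbc}, that in any swap valuation the first coordinate $\alpha_{(1,w)}$ is a genuine two-valued Boolean valuation for $\land,\lor,\to$ and obeys the meet-semantics for $\obl$, while $v_w(\alpha)\in\D$ iff $\alpha_{(1,w)}=1$; the nondeterminism lives entirely in the second coordinate and is invisible to designation. One then proceeds by induction on the length of a derivation. Every \cplp\ axiom instance has first coordinate $1$ at every $w$, because items~1--3 of \cref{lemma:proj-nu_w} make $\alpha\mapsto\alpha_{(1,w)}$ a Boolean homomorphism into $2$ on the $\{\land,\lor,\to\}$-fragment and \cplp\ is classically valid. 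For \textbf{(EM)}, items~2 and~4 give $(\alpha\lor\neg\alpha)_{(1,w)}=\alpha_{(1,w)}\sqcup\alpha_{(2,w)}$, which is $1$ by the snapshot constraint $c_1\sqcup c_2=1$. For \textbf{(bc)}, item~5 gives $(\circ\alpha)_{(1,w)}\le\cneg(\alpha_{(1,w)}\sqcap\alpha_{(2,w)})$, so $(\circ\alpha)_{(1,w)}=\alpha_{(1,w)}=(\neg\alpha)_{(1,w)}=1$ would force $\alpha_{(1,w)}\sqcap\alpha_{(2,w)}=1$ and hence $(\circ\alpha)_{(1,w)}=0$, a contradiction; thus the whole implication has first coordinate $1$. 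For \textbf{($\obl$-K)} one uses item~6 and the classical validity of the $\lk$-axiom in the two-element algebra, noting that $\{w'\in W: wRw'\}\neq\emptyset$ by seriality. The case \textbf{($\obl$-E)} is precisely \cref{rem:O-E}. Finally, Modus Ponens preserves validity since $\psi_{(1,w)}=\varphi_{(1,w)}\supset\psi_{(1,w)}$ and $\varphi_{(1,w)}=1$ force $\psi_{(1,w)}=1$; and $\obl$-necessitation preserves it because, for a valid $\varphi$ (available by the induction hypothesis, since this is a global rule), $(\obl\varphi)_{(1,w)}=\bigsqcap\{\varphi_{(1,w')}: wRw'\}=\bigsqcap\{1:wRw'\}=1$, the index set being nonempty by seriality.

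For Stage~2, if $\vdash_{\DmbC}\varphi$ we conclude immediately from Stage~1 and \cref{def:l-cons}. Otherwise fix $\gamma_1,\dots,\gamma_k\in\Gamma$ with $\vdash_{\DmbC}(\gamma_1\land\cdots\land\gamma_k)\to\varphi$, take any swap Kripke model and world $w$ with $\M^{K-S}_{\DmbC},w\vDash\Gamma$, so $(\gamma_i)_{(1,w)}=1$ for each $i$; iterating item~1 of \cref{lemma:proj-nu_w} yields $(\gamma_1\land\cdots\land\gamma_k)_{(1,w)}=1$, and since the implication is valid by Stage~1, its antecedent being true at $w$ forces $\varphi_{(1,w)}=1$, i.e.\ $\M^{K-S}_{\DmbC},w\vDash\varphi$, as required by \cref{def:truth}.

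I expect no genuine obstacle here: because the nondeterminism never touches designation, the whole argument is essentially bookkeeping over \cref{lemma:proj-nu_w}. The only points that ask for a little care are the verification of \textbf{(bc)} from the $\acirc$-clause and the systematic appeal to seriality wherever an empty set of accessible worlds would otherwise leave $\bigsqcap$ undefined --- namely in \textbf{($\obl$-K)}, \textbf{($\obl$-E)} and the necessitation step.
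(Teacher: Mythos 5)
Your proposal is correct and follows essentially the same route as the paper's own proof: verify each axiom schema coordinate-wise via \cref{lemma:proj-nu_w} (with \textbf{($\obl$-E)} handled by \cref{rem:O-E}), check that Modus Ponens and the global necessitation rule preserve validity, and then lift to derivations from premises through \cref{def:derivations}(2). The only difference is that you spell out two points the paper leaves implicit --- the validity of \textbf{(EM)} via the snapshot constraint $c_1\sqcup c_2=1$ and the Stage~2 reduction to a valid implication $(\gamma_1\land\cdots\land\gamma_k)\to\varphi$ --- both of which are handled correctly.
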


    \begin{proof}
        We first show that the theorem holds for the axioms of {\DmbC}.  The result for the {\cplp} axioms follows from \cref{lemma:proj-nu_w}. 
        
        For (bc), we must show that $(\circ \alpha)_{(1,w)} \ato (\alpha \to (\neg \alpha \to \beta))_{(1,w)}=1$. If $(\circ \alpha)_{(1,w)}=0$, we are done.  Otherwise, ${\sim}(\alpha_{(1,w)} \land \alpha_{(2,w)})=1$, so either $\alpha_{(1,w)}=0$ or $\alpha_{(2,w)}=0$. The first guarantees that $(\alpha \to (\neg\alpha \to \beta))_{(1,w)} = 1$. The second makes it so that $(\neg \alpha \to \beta)_{(1,w)}=1$, hence $(\alpha \to (\neg\alpha \to \beta))_{(1,w)} = 1$.
        
        For (\obl-K), assume $(\obl(\alpha \to \beta))_{(1,w)}=1$ and that $(\obl(\alpha))_{(1,w)} = 1$. We then have that $(\alpha \to \beta)_{(1,w\prim)}=1$ and that $\alpha_{(1,w\prim)}=1$ for every $w\prim \in W$ such that $wRw\prim$. Hence, it follows that $\beta_{(1,w\prim)}=1$ for every $w\prim$ such that $wRw\prim$.

        The proof for (\obl-E) follows from Remark~\ref{rem:O-E}.

        For Modus Ponens, it is immediate to see that it satisfies the criteria, by definition of $\tilde{\to}$. For Necessitation, suppose $\alpha$ is a theorem. Then for every $w \in W$, $\alpha_{(1,w)}=1$. In particular, it is the case for every $w\prim \in W$ such that $wRw\prim$,  from which it follows that $(\obl\alpha)_{(1,w)}=1$.
    \end{proof}

    In order to prove completeness for {\DmbC}, we build canonical models based on swap structures. We use the method of $\psi$-saturation for construction of maximal consistent sets, together with the denecessitation for the accessibility relation. This will allow us to create the models for which completeness results will hold. 

     \begin{defn}
         Given a Tarskian and finitary\footnote{A logic {\bf L} is Tarskian if for $\Gamma \subseteq For(\Sigma)$ the following holds:\\
         {\bf (Reflexivity)} for every $\varphi \in \Gamma$, $\Gamma \vdash \varphi$;\\
         {\bf (Monotonicity)} if $\Gamma \vdash \varphi$ and $\Gamma \subseteq \Delta$, then $\Delta \vdash \varphi$;\\
         {\bf (Cut)} if $\Gamma \vdash \varphi$ for every $\varphi \in \Delta$ and $\Delta \vdash \psi$, then $\Gamma \vdash \psi$. \\
         {\bf L} is finitary if it satisfies\\
         {\bf (Finiteness)} $\Gamma \vdash \varphi$ implies $\Gamma_0 \vdash \varphi$, for some finite $\Gamma_0 \subseteq \Gamma$.} logic {\bf L}, a set of formulas $\Delta$ is  $\psi$-saturated in {\bf L} if $\Delta  \nvdash \psi$ and, if $\varphi \notin \Delta$, then $\Delta \cup \{\varphi\} \vdash \psi$.
     \end{defn}

      \begin{rmk}
      \label{prop:phi-sat}
         It is well-known that any  $\psi$-saturated set is a closed theory. Moreover, if $\Gamma \nvdash \psi$ in {\bf L} then there exists a set $\Delta$ which is $\psi$-saturated in {\bf L} and contains $\Gamma$. In particular, this property holds for \DmbC\ and all the other logics to be considered in this paper.
       \end{rmk}

     \begin{defn}
		\label{def:W-can}
		Consider the set $$W_{can} = \{\Delta \subseteq For(\Sigma) \ : \ \Delta \text{ is a } \psi\text{-saturated in \DmbC, for some $\psi \in For(\Sigma)$}\}.$$
	\end{defn}
 
	\begin{defn}
		Let $Den(\Delta):= \{\varphi \in For(\Sigma) \ : \ O\varphi \in \Delta\}$
	\end{defn}
	
	\begin{defn}
		\label{def:R-can}
		Let $R_{can}\subseteq W \times W$ be given by:  $$\Delta R_{can} \Theta \text{ iff } Den(\Delta) \subseteq \Theta$$ for $\Delta, \Theta \in W$
	\end{defn}
	
	\begin{defn}
    \label{def:can-bival-truth}
		For each $\Delta \in W_{can}$, let $v_\Delta: For(\Sigma) \to \A_3$ defined as follows:
        \begin{equation}
            v_\Delta(\alpha) = 
            \begin{dcases}
                T,& \text{if } \alpha \in \Delta, \neg\alpha \notin \Delta\\
                t,& \text{if } \alpha, \neg\alpha \in \Delta\\
                F,& \text{if } \neg\alpha \in \Delta, \alpha \notin \Delta 
            \end{dcases}
        \end{equation}  
	\end{defn}

    \begin{lemma}
        \label{lemma:truth-sat}
        For any $\Delta \in W_{can}$, the following holds:
        \begin{description}
            \item[1.] $\alpha \land \beta \in \Delta$ iff $\alpha,\beta \in \Delta$
            \item[2.]   $\alpha \lor \beta \in \Delta$ iff $\alpha \in \Delta$ or $\beta \in \Delta$
            \item[3.]  $\alpha \to \beta \in \Delta$ iff $\alpha \notin \Delta$ or $\beta \in \Delta$
            \item[4.]  if $\neg \alpha \notin \Delta$ then $\alpha \in \Delta$
            \item[5.]  If $\alpha \in \Delta$ and $\neg \alpha \in \Delta$ then $\circ \alpha \notin \Delta$
            \item[6.]  $\obl \alpha \in \Delta$ iff $\alpha \in \Delta\prim$ for all $\Delta\prim \in W$ such that $\Delta R_{can} \Delta\prim$.
        \end{description}
    \end{lemma}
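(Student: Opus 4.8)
The plan is to exploit the fact, recalled in \cref{prop:phi-sat}, that every $\psi$-saturated set $\Delta$ is a closed theory of \DmbC: hence ``$\varphi\in\Delta$'' and ``$\Delta\vdash_{\DmbC}\varphi$'' are interchangeable, and in particular $\Delta$ contains every theorem of \DmbC. With this in hand, item~1 and the easy halves of items~2 and~3 are immediate from the \cplp\ fragment together with Modus Ponens: $\vdash_{\DmbC}(\alpha\land\beta)\to\alpha$, $\vdash_{\DmbC}(\alpha\land\beta)\to\beta$ and $\vdash_{\DmbC}\alpha\to(\beta\to(\alpha\land\beta))$ give item~1; $\vdash_{\DmbC}\alpha\to(\alpha\lor\beta)$ and $\vdash_{\DmbC}\beta\to(\alpha\lor\beta)$ give the right-to-left half of item~2; and $\vdash_{\DmbC}\beta\to(\alpha\to\beta)$ together with Modus Ponens settle item~3, once primality (the left-to-right half of item~2) is known.

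The key preliminary is thus to show that $\Delta$ is \emph{prime}. I would use the proof-by-cases property of \DmbC\ (which holds by the results of Coniglio cited above): if $\alpha\lor\beta\in\Delta$ but $\alpha\notin\Delta$ and $\beta\notin\Delta$, then $\psi$-saturation gives $\Delta\cup\{\alpha\}\vdash_{\DmbC}\psi$ and $\Delta\cup\{\beta\}\vdash_{\DmbC}\psi$, whence $\Delta\vdash_{\DmbC}\psi$ by proof-by-cases, contradicting $\Delta\nvdash_{\DmbC}\psi$. Once primality is in place, item~3 follows from the \cplp-theorem $\vdash_{\DmbC}\alpha\lor(\alpha\to\beta)$; item~4 follows from axiom \textbf{(EM)}, since $\alpha\lor\neg\alpha\in\Delta$ forces $\alpha\in\Delta$ or $\neg\alpha\in\Delta$; and item~5 follows from axiom \textbf{(bc)}: were $\alpha,\neg\alpha,\circ\alpha$ all in $\Delta$, three applications of Modus Ponens would place $\psi$ in $\Delta$, again contradicting $\Delta\nvdash_{\DmbC}\psi$.

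Item~6 is the substantive one. Its left-to-right direction is immediate: $\obl\alpha\in\Delta$ means $\alpha\in Den(\Delta)$, hence $\alpha\in\Delta'$ for every $\Delta'$ with $Den(\Delta)\subseteq\Delta'$, i.e. for every $R_{can}$-successor of $\Delta$. For the converse I argue by contraposition: assuming $\obl\alpha\notin\Delta$, I claim $Den(\Delta)\nvdash_{\DmbC}\alpha$; granting this, \cref{prop:phi-sat} produces an $\alpha$-saturated $\Delta'\supseteq Den(\Delta)$, which lies in $W_{can}$, satisfies $\Delta R_{can}\Delta'$, and has $\alpha\notin\Delta'$, as needed. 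To prove the claim, suppose $Den(\Delta)\vdash_{\DmbC}\alpha$. By \cref{def:derivations}(2), either $\vdash_{\DmbC}\alpha$, and then $\obl$-necessitation gives $\vdash_{\DmbC}\obl\alpha$, so $\obl\alpha\in\Delta$; or there are $\gamma_1,\dots,\gamma_n\in Den(\Delta)$ with $\vdash_{\DmbC}(\gamma_1\land\cdots\land\gamma_n)\to\alpha$, in which case $\obl$-necessitation followed by $(\obl\text{-}K)$ yields $\vdash_{\DmbC}\obl(\gamma_1\land\cdots\land\gamma_n)\to\obl\alpha$. Since each $\obl\gamma_i$ lies in $\Delta$ and $\obl$ distributes over finite conjunctions (a short consequence of $\obl$-necessitation and $(\obl\text{-}K)$), the closed theory $\Delta$ contains $\obl(\gamma_1\land\cdots\land\gamma_n)$, hence also $\obl\alpha$. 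Either way $\obl\alpha\in\Delta$, contradicting the hypothesis.

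I expect the main obstacle to be precisely this last part of item~6: since $\obl$-necessitation is a \emph{global} rule, the assumption $Den(\Delta)\vdash_{\DmbC}\alpha$ cannot simply be ``boxed'' pointwise and must instead be routed through \cref{def:derivations}(2), axiom $(\obl\text{-}K)$, and the auxiliary distribution-over-conjunction lemma; all the other items are routine once the closed-theory property and primality of $\Delta$ are established.
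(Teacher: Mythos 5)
Your proof is correct and follows essentially the same route as the paper's: items 1--5 via the closed-theory property of $\psi$-saturated sets (with primality and the axioms \textbf{(EM)} and \textbf{(bc)}), and item~6 by contraposition through the claim $Den(\Delta)\nvdash_{\DmbC}\alpha$, using $\obl$-necessitation, $(\obl\text{-}K)$ and distribution of $\obl$ over finite conjunctions exactly as in the paper. The only difference is that you spell out items 1--5 in more detail than the paper, which simply declares them immediate.
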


    \begin{proof}
        Items 1. through 5. are immediate from the definitions and the fact that $\Delta$ is $\varphi$-saturated, hence, by \cref{prop:phi-sat}, it is closed under logical consequences. In particular, it contains any instance of the axioms of \DmbC, and it is closed under Modus Ponens.

        To prove the right-to-left direction for 6., assume that $\obl \alpha \notin \Delta$. Since $\Delta$ is closed under logical consequences, it follows that $\nvdash_{\DmbC} \alpha$, because of the $\obl$-necessitation rule. Suppose, for a contradiction, that $Den(\Delta) \vdash_{\DmbC} \alpha$. By \cref{def:derivations}, there are $\beta_1, \dots, \beta_n \in Den(\Delta)$ (for $n\geq 1$) such that  
        $$\vdash_{\DmbC} \beta \to \alpha,$$ 
        where $\beta=\beta_1 \land \ldots \land \beta_n$. By applying necessitation and \axK, 
        $$\vdash_{\DmbC} \obl \beta \to \obl \alpha.$$ 
        Observe now that $\obl\beta_i \in \Delta$, by definition of  $Den(\Delta)$, hence $\obl\beta_1 \land \ldots \land \obl \beta_n \in \Delta$, by item~1. But 
        $$\vdash_{\DmbC} (\obl\beta_1 \land \ldots \land \obl \beta_n) \to \obl\beta,$$ 
        hence $\obl \beta \in \Delta$. Using again that $\Delta$ is closed under logical consequences, we infer that  $\obl \alpha \in \Delta$, which contradicts our initial assumption. 

        We conclude, therefore, that $Den(\Delta) \nvdash_{\DmbC} \alpha$. But then, there is some $\Delta\prim$ such that $Den(\Delta) \subseteq \Delta\prim$ and $\Delta\prim$ is $\alpha$-saturated. Therefore, there is $\Delta\prim \in W_{can}$ such that $\Delta R_{can} \Delta\prim$ and $\alpha \notin \Delta\prim$.

        The proof of the left-to-right direction for 6. is immediate from the definitions. Indeed, if   $\obl \alpha \in \Delta$  and  $\Delta R_{can} \Delta\prim$ then $\alpha \in \Delta\prim$, given that $\alpha \in  Den(\Delta)$.
    \end{proof}
    
	\begin{prop}
		\label{prop:TruthLemma}
		The triple  $\M = \langle W_{can}, R_{can}, \{v_\Delta\}_{\Delta \in W_{can}} \rangle$ is a swap Kripke model for {\DmbC} such that $v_\Delta(\varphi) \in \mathcal{D}$ if and only if $\varphi \in \Delta$ if and only if $\alpha_{(1,\Delta)}=1$.
	\end{prop}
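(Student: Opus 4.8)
The plan is to check three things: that each $v_\Delta$ of Definition~\ref{def:can-bival-truth} is a well-defined map $For(\Sigma)\to A$; that $R_{can}$ is serial, so that $\M$ indeed qualifies as a swap Kripke model in the sense of Definition~\ref{def:swap-Kripke-DmbC}; and that $\{v_\Delta\}_{\Delta\in W_{can}}$ satisfies clauses 1--6 of Definition~\ref{def:nu_w}. The two biconditionals in the statement I would dispatch first, essentially by inspection: from Definition~\ref{def:can-bival-truth}, $v_\Delta(\varphi)\in\{T,t\}=\D$ exactly when $\varphi\in\Delta$, and since $\pi_1(T)=\pi_1(t)=1$ while $\pi_1(F)=0$ this is in turn equivalent to $\varphi_{(1,\Delta)}=1$ (this is Remark~\ref{prop:designated-dmbc}). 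Well-definedness of $v_\Delta$ reduces to exhaustiveness of the three cases in Definition~\ref{def:can-bival-truth}, i.e.\ to ruling out ``$\varphi\notin\Delta$ and $\neg\varphi\notin\Delta$'', which is precisely item~4 of Lemma~\ref{lemma:truth-sat} (itself a consequence of (EM) together with the fact that $\Delta$, being $\psi$-saturated, is a closed theory).

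For the swap-valuation clauses I would proceed clause by clause, each time rewriting the relevant multioperation of Definition~\ref{def:swapOperations} as a condition on the first (and, in the case of $\neg$, the second) projection and then invoking the matching item of Lemma~\ref{lemma:truth-sat} via the membership biconditional above; since all projections take values in $2$, it suffices in each case to compare when the two sides equal $1$. Concretely: for $\land,\lor,\to$ one needs $(\varphi\land\psi)_{(1,\Delta)}=\varphi_{(1,\Delta)}\sqcap\psi_{(1,\Delta)}$ and the analogues, which follow from items 1--3 of Lemma~\ref{lemma:truth-sat}; for $\neg$ one checks directly from Definition~\ref{def:can-bival-truth} that $\pi_2(v_\Delta(\varphi))=1$ iff $\neg\varphi\in\Delta$ iff $\pi_1(v_\Delta(\neg\varphi))=1$, which gives $v_\Delta(\neg\varphi)\in\aneg v_\Delta(\varphi)$; for $\circ$ the required inequality $(\circ\varphi)_{(1,\Delta)}\le{\sim}(\varphi_{(1,\Delta)}\sqcap\varphi_{(2,\Delta)})$ is just the contrapositive of item~5; and for $\obl$, granted that $X:=\{v_{\Delta'}(\varphi):\Delta R_{can}\Delta'\}$ is non-empty (which is where seriality is used), one has $\bigsqcap\{x_1:x\in X\}=1$ iff $\varphi\in\Delta'$ for every $\Delta'$ with $\Delta R_{can}\Delta'$ iff (item~6 of Lemma~\ref{lemma:truth-sat}) $\obl\varphi\in\Delta$ iff $(\obl\varphi)_{(1,\Delta)}=1$, which is exactly clause 6 against $\aobl(X)$.

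The main obstacle is the seriality of $R_{can}$, and this is precisely where axiom ($\obl$-E) enters --- it plays here the role that axiom (D) plays in standard deontic logic. Fixing $\Delta\in W_{can}$, I would show $Den(\Delta)\nvdash_{\DmbC}\bot_\alpha$ for some $\bot_\alpha=(\alpha\land\neg\alpha)\land\circ\alpha$, since then Remark~\ref{prop:phi-sat} yields a $\bot_\alpha$-saturated $\Delta'\supseteq Den(\Delta)$, i.e.\ $\Delta'\in W_{can}$ with $\Delta R_{can}\Delta'$. Suppose instead $Den(\Delta)\vdash_{\DmbC}\bot_\alpha$. By Definition~\ref{def:derivations} this means either $\vdash_{\DmbC}\bot_\alpha$, whence $\vdash_{\DmbC}\obl\bot_\alpha$ by $\obl$-necessitation, or there are $\beta_1,\dots,\beta_n\in Den(\Delta)$ with $\vdash_{\DmbC}(\beta_1\land\cdots\land\beta_n)\to\bot_\alpha$, whence $\obl\bot_\alpha\in\Delta$ by the very computation already carried out for item~6 of Lemma~\ref{lemma:truth-sat} (necessitation, \axK, $\obl\beta_i\in\Delta$ by definition of $Den$, and closure of $\Delta$). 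Either way $\obl\bot_\alpha\in\Delta$, so by the instance $\obl\bot_\alpha\to\bot_\alpha$ of ($\obl$-E) and closure of $\Delta$ under Modus Ponens, $\bot_\alpha\in\Delta$; two applications of item~1 of Lemma~\ref{lemma:truth-sat} then put $\alpha,\neg\alpha$ and $\circ\alpha$ all in $\Delta$, contradicting item~5. Hence $R_{can}$ is serial, all three requirements are met, and $\M=\langle W_{can},R_{can},\{v_\Delta\}_{\Delta\in W_{can}}\rangle$ is a swap Kripke model for \DmbC\ enjoying the stated biconditionals.
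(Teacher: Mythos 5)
Your proof is correct and follows essentially the same route as the paper's: the two biconditionals and the swap-valuation clauses are read off from \cref{def:can-bival-truth} together with \cref{lemma:truth-sat} (the paper compresses this part into ``immediate consequence of \cref{prop:designated-dmbc} and \cref{def:can-bival-truth}''), and seriality of $R_{can}$ is derived from axiom ($\obl$-E). The only difference is cosmetic and lies in the seriality step: the paper first shows that $\obl\alpha \notin \Delta$ for some $\alpha$ (otherwise $\obl\bot_\alpha \in \Delta$, hence $\bot_\alpha \in \Delta$ and $\Delta$ is trivial) and then reuses the argument of item~6 of \cref{lemma:truth-sat} to conclude $Den(\Delta) \nvdash_{\DmbC} \alpha$, whereas you argue directly that $Den(\Delta) \nvdash_{\DmbC} \bot_\alpha$; both variants hinge on ($\obl$-E) together with item~5 of \cref{lemma:truth-sat} in exactly the same way.
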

    \begin{proof}
		Observe first that $R_{can}$ is serial. To see this, let $\Delta \in W_{can}$. Then, $\Delta$ is $\varphi$-saturated, for some formula $\varphi$. Suppose that $\obl\alpha \in \Delta$, for every formula $\alpha$. Then, $\obl\alpha, \obl\neg\alpha, \obl{\circ}\alpha \in \Delta$ and so, since $\Delta$ is a closed theory, $\obl \bot_\alpha \in \Delta$. This shows that $\bot_\alpha \in \Delta$, by $(\obl-E)$ and so $\varphi \in \Delta$, a contradiction. From this, $\obl\alpha \notin\Delta$ for some $\alpha$. By reasoning as in the proof of item~6. of \cref{lemma:truth-sat}, we infer that $Den(\Delta) \nvdash_{\DmbC} \alpha$ and so there exists  some $\Delta\prim$ such that $Den(\Delta) \subseteq \Delta\prim$ and $\alpha \notin   \Delta\prim$. This shows that  $R_{can}$ is serial. The rest of the proof is an immediate consequence from \cref{prop:designated-dmbc} and \cref{def:can-bival-truth}.
	\end{proof}

    \begin{thm}[Completeness of \DmbC\ w.r.t. swap Kripke models] \ \\
		For any set $\Gamma \cup \{\varphi\} \subseteq For_\Sigma$, if $\Gamma \vDash_{\DmbC} \varphi$, then $\Gamma \vdash_{\DmbC} \varphi$.
	\end{thm}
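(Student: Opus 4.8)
The plan is to prove the contrapositive: assuming $\Gamma \nvdash_{\DmbC} \varphi$, I will produce a swap Kripke model $\M$ together with a world $w$ at which every formula of $\Gamma$ is true but $\varphi$ is not, which by \cref{def:l-cons} yields $\Gamma \nvDash_{\DmbC} \varphi$. The model in question is of course the canonical model already built above; so the argument is essentially a packaging of \cref{prop:TruthLemma}.

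First, since \DmbC\ is Tarskian and finitary (indeed it enjoys the deduction metatheorem, by Coniglio~\cite{Coniglio2009b}), \cref{prop:phi-sat} applies: from $\Gamma \nvdash_{\DmbC} \varphi$ we obtain a set $\Delta \supseteq \Gamma$ which is $\varphi$-saturated in \DmbC. By \cref{def:W-can} we have $\Delta \in W_{can}$, and by reflexivity of $\vdash_{\DmbC}$ together with $\Delta \nvdash_{\DmbC} \varphi$ it follows that $\varphi \notin \Delta$.

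Second, invoke \cref{prop:TruthLemma}: the triple $\M = \langle W_{can}, R_{can}, \{v_\Theta\}_{\Theta \in W_{can}} \rangle$ is a swap Kripke model for \DmbC, and for every $\Theta \in W_{can}$ and every $\psi \in For(\Sigma)$ one has $v_\Theta(\psi) \in \mathcal{D}$ iff $\psi \in \Theta$. All the substantive work — seriality of $R_{can}$, the verification that each $v_\Delta$ from \cref{def:can-bival-truth} is a swap valuation in the sense of \cref{def:nu_w} (in particular that the excluded pair $(0,0)$ never arises, thanks to axiom \textbf{(EM)}), and the truth lemma itself — is already discharged there, so here it is enough to read off the consequences. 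Taking $w := \Delta$: for each $\gamma \in \Gamma \subseteq \Delta$ we get $v_\Delta(\gamma) \in \mathcal{D}$, i.e.\ $\M, \Delta \vDash \gamma$, hence $\M, \Delta \vDash \Gamma$; and since $\varphi \notin \Delta$ we get $v_\Delta(\varphi) \notin \mathcal{D}$, i.e.\ $\M, \Delta \nvDash \varphi$. This establishes $\Gamma \nvDash_{\DmbC} \varphi$, completing the contrapositive and hence the theorem.

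I expect the only delicate point in this wrap-up to be the alignment between the adjusted notion of derivation from premises in \cref{def:derivations} (phrased through a single implication $\vdash_{\DmbC}(\gamma_1 \land \cdots \land \gamma_k) \to \varphi$) and the $\psi$-saturation machinery: one must be sure that $\vdash_{\DmbC}$ with this definition is genuinely Tarskian and finitary, so that \cref{prop:phi-sat} legitimately delivers a $\varphi$-saturated superset of $\Gamma$ which is a closed theory. Granting that, the rest is a direct appeal to \cref{prop:TruthLemma}, and no further computation is needed.
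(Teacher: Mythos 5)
Your proof is correct and follows essentially the same route as the paper: obtain a $\varphi$-saturated $\Delta \supseteq \Gamma$ via \cref{prop:phi-sat}, note $\varphi \notin \Delta$, and read off the conclusion from the canonical model and \cref{prop:TruthLemma}. The Tarskian/finitary caveat you flag is exactly what \cref{prop:phi-sat} asserts for \DmbC, so nothing further is needed.
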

	\begin{proof}
		Suppose, to the contrary, that $\Gamma \nvdash_{\DmbC} \varphi$. Thus, by \cref{prop:phi-sat}, there is some $\varphi$-saturated $\Delta \in W_{can}$ such that $\Gamma \subseteq \Delta$. Since $\Delta \nvdash_{\DmbC} \varphi$, then $\varphi \notin \Delta$. Let $\mathcal{M}^{K-S}_{DmbC}$ be the canonical swap model for DmbC. Then, $\mathcal{M}^{K-S}_{DmbC}, \Delta \vDash \Gamma$, since $\Gamma \subseteq \Delta$, but $\mathcal{M}^{K-S}_{DmbC}, \Delta \nvDash \varphi$.   This proves that $\Gamma \nvDash_{\DmbC} \varphi$.
	\end{proof}

\section{Some extensions of {\DmbC}}

\begin{defn}
	\label{def:ciw}
	Consider the following axioms over the signature $\Sigma$:
    \begin{description}
        \item[(ciw)] ${\circ}\alpha \lor (\alpha \land \neg \alpha)$ 
        \item[(ci)]  $\neg {\circ} \alpha \to (\alpha \land \neg \alpha)$
        \item[(cf)] $\neg \neg \alpha \to \alpha$
    \end{description}
    The following systems can be thus defined:
    \begin{itemize}
        \item ${\DmbCciw} \defeq \DmbC\cup\{\textbf{(ciw)}\}$ 
        \item ${\DmbCci} \defeq \DmbC\cup \{\textbf{(ci)}\}$
        \item ${\DbC} \defeq \DmbC\cup\{\textbf{(cf)}\}$
        \item ${\DCi} \defeq \DmbCci \cup \{\textbf{(cf)}\}$
    \end{itemize}
\end{defn}

\begin{rmk}
   This section will talk about results that can be easily adaptable to each of the aforementioned cases. From now on, by $\textbf{L}$ we will refer to any system in the set $\{\DmbCciw,\DmbCci,\DbC,\DCi\}$.
The notion of derivation $\Gamma \vdash_{\bf L}\varphi$ in {\bf L} is as in \cref{def:derivations} (with the corresponding set of axioms of each logic).
\end{rmk}

The multialgebras to accommodate each of the axioms are defined as follows:

\begin{defn} \label{def:swap-negbC}
    Let $\A_3$ be the swap structure for {\DmbC}. The multioperators of the swap structure for {\DbC} are defined as in \cref{def:swapOperations}, with the exception of $\aneg$, which is substituted for $$\aneg_1 a = \{c \in A \ : \  c_1 = a_2 \text{ and } c_2 \leq a_1 \}.$$
\end{defn}

The non-deterministic truth-table for $\aneg_1$ is as follows:

\begin{center}
    
\begin{tabular}{|l|c|}
\hline
 & $\aneg_1$ \\ \hline
$T$ & $\{F\}$ \\ \hline
$t$ & $\{T,t\}$\\ \hline
$F$ & $\{T\}$\\ \hline
\end{tabular}

\end{center}

\begin{defn}
    \label{def:swapOp-ext}
    Let $\A_3$ be the swap structure for {\DmbC}. The multioperators of the swap structure for {\DmbCciw}, {\DmbCci} and  {\DCi} are defined as in \cref{def:swapOperations}, with exception of the multioperators that are mentioned in this definition, which are substituted accordingly.

    \begin{enumerate}
        \item  In {\DmbCciw}: $\acirc_1 a := \{c \in A \ : \ c_1 = {\sim} (a_1 \sqcap a_2)\}$.
        \item  In {\DmbCci}: $\acirc_2 a = \{({\sim} (a_1 \sqcap a_2), a_1\sqcap a_2) \}$.
        \item In {\DCi}, take $\acirc_2$ from {\DmbCci} and $\aneg_1$ from {\DbC}.
    \end{enumerate}
\end{defn}

The non-deterministic truth-tables for $\acirc_1$ and $\acirc_2$ are as follows:

\begin{center}
\begin{tabular}{|l|c|}
\hline
 & $\acirc_1$ \\ \hline
$T$ & $\{T,t\}$ \\ \hline
$t$ & $\{F\}$\\ \hline
$F$ & $\{T,t\}$\\ \hline
\end{tabular}
\hspace{2cm}
\begin{tabular}{|l|c|}
\hline
 & $\acirc_2$ \\ \hline
$T$ & $\{T\}$ \\ \hline
$t$ & $\{F\}$\\ \hline
$F$ & $\{T\}$\\ \hline
\end{tabular}

\end{center}

\begin{defn}
\label{def:swapval-ext}
       For each $w \in W$, we establish the following:
       
       \begin{itemize}
           \item The swap valuations for {\DmbCciw}, $\nuciw_w$, are defined  as in \cref{def:nu_w} for all operators, except for $\circ$, which satisfies the following condition:$$\nuciw_w({\circ} \alpha) \in \acirc_1 \nuciw_w (\alpha). $$
           \item The swap valuations for {\DmbCci}, $\nuci_w$, are defined as  in \cref{def:nu_w} for all operators, except for $\circ$, which satisfies the following condition:$$\nuci_w({\circ} \alpha) \in \acirc_2 \nuci_w (\alpha). $$
           \item The swap valuations for {\DbC}, $\nubc_w$, are defined as  in \cref{def:nu_w} for all operators, except for $\neg$, which satisfies the following condition:$$\nubc_w({\neg} \alpha) \in \aneg_1 \nubc_w (\alpha). $$
           \item The swap valuations for {\DCi}, $\nuCi_w$, are  defined as  in \cref{def:nu_w} for all operators, except for $\neg$, which is defined using $\aneg_1$ as in the case of \DbC\ and for $\circ$, which is defined using $\acirc_2$ as in the case of {\DmbCci}.
       \end{itemize}
\end{defn}

\begin{defn}
    The structure $\M = \langle W, R, \{v_w^{\textbf{L}}\}_{w \in W} \rangle$ is a swap Kripke model for the logic $\textbf{L}$.
\end{defn}

We maintain an analogous notation to the one presented in \cref{rmk:notation}. Notice that this implies that $v_w^{\textbf{L}}(\alpha) \in \D$ if and only if $\alpha_{(1,w)} = 1$. We use this fact in the next proof.

\begin{lemma}
    \label{lemma:proj-nu-ext}
    Conditions 1.-6. listed on \cref{lemma:proj-nu_w} hold for $\textbf{L}$. Moreover, consider the following conditions:
    \begin{description}
        \item[5$\prim$.]  $(\circ \alpha)_{(1,w)} = {\sim} (\alpha_{(1,w)} \sqcap \alpha_{(2,w)})$.
        \item[5$^*$.] $(\circ\alpha)_{(1,w)} = {\sim} (\alpha_{(1,w)} \sqcap \alpha_{(2,w)})$ and $(\circ\alpha)_{(2,w)} = (\alpha_{(1,w)} \sqcap \alpha_{(2,w)}) $.
        \item[7.] $(\neg\alpha)_{(2,w)} \leq \alpha_{(1,w)}$.
    \end{description}
    Then, condition $5\prim$ holds in {\DmbCciw}; condition $5^*$ holds in {\DmbCci}; condition $7$ holds in {\DbC}; and conditions $5^*$ and $7$ hold in {\DCi}.
\end{lemma}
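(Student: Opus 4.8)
The plan is to exploit the fact that moving from \DmbC\ to any $\textbf{L} \in \{\DmbCciw, \DmbCci, \DbC, \DCi\}$ changes only the interpretation of $\neg$ and/or $\circ$. By \cref{def:swap-negbC,def:swapOp-ext,def:swapval-ext}, the multioperations $\aland, \alor, \ato, \aobl$ are verbatim those of \cref{def:swapOperations}, and the swap valuations $v_w^{\textbf{L}}$ satisfy clauses~1, 2, 3 and~6 of \cref{def:nu_w} unchanged. Since the proofs of items~1, 2, 3 and~6 of \cref{lemma:proj-nu_w} never mention $\neg$ or $\circ$ (in particular, the computation of $(\obl\alpha)_{(1,w)}$ only uses clause~6 of \cref{def:nu_w} and the definition of $\aobl$), they carry over word for word. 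Hence the only real task is to re-derive items~4 and~5 for the modified operators and to read off conditions $5'$, $5^*$ and~7.

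For negation: in \DmbCciw\ and \DmbCci, $\aneg$ is unchanged, so item~4 holds exactly as before. In \DbC\ and \DCi, $\neg$ is interpreted through $\aneg_1 a = \{c \in A : c_1 = a_2 \text{ and } c_2 \leq a_1\}$; instantiating $a := v_w^{\textbf{L}}(\alpha)$ and $c := v_w^{\textbf{L}}(\neg\alpha) \in \aneg_1 a$ immediately gives $(\neg\alpha)_{(1,w)} = \alpha_{(2,w)}$ (item~4) together with $(\neg\alpha)_{(2,w)} \leq \alpha_{(1,w)}$ (condition~7).

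For consistency: in \DbC, $\acirc$ is unchanged, so item~5 holds as before. In \DmbCciw, $\acirc_1 a = \{c \in A : c_1 = {\sim}(a_1 \sqcap a_2)\}$ forces $(\circ\alpha)_{(1,w)} = {\sim}(\alpha_{(1,w)} \sqcap \alpha_{(2,w)})$, which is condition~$5'$ and, since equality implies $\leq$, also item~5. In \DmbCci\ (and likewise for the $\circ$-part of \DCi), $\acirc_2 a = \{({\sim}(a_1 \sqcap a_2), a_1 \sqcap a_2)\}$ is a singleton, so $v_w^{\textbf{L}}(\circ\alpha)$ must equal $({\sim}(\alpha_{(1,w)} \sqcap \alpha_{(2,w)}),\, \alpha_{(1,w)} \sqcap \alpha_{(2,w)})$, which is precisely condition~$5^*$, hence again item~5. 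Combining this with the negation clause, \DCi\ satisfies both $5^*$ and~7, as claimed.

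There is essentially no obstacle beyond bookkeeping; the one point worth a line of verification is that the modified multioperations actually take values in $A = \B_{\A_3}^{\DmbC}$, i.e.\ that every output pair $(c_1,c_2)$ satisfies $c_1 \sqcup c_2 = 1$: for $\acirc_1$ and $\acirc_2$ this is the Boolean identity ${\sim}x \sqcup x = 1$, and for $\aneg_1$ it follows from a short case analysis on $(a_1,a_2) \in A$. This ensures the clauses of \cref{def:swapval-ext} are coherent and that all the projections computed above are well-defined.
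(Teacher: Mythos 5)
Your proposal is correct and follows essentially the same route as the paper: the conditions $5'$, $5^*$ and $7$ are read off directly from the definitions of $\acirc_1$, $\acirc_2$ and $\aneg_1$ together with the valuation clauses, and the \DCi\ case is obtained by combining the \DmbCci\ and \DbC\ cases. You are somewhat more explicit than the paper (which leaves the carry-over of items 1--6 and the well-definedness of the modified multioperations implicit), but this is added bookkeeping rather than a different argument.
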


\begin{proof}
 For {\DbC},  condition $7$ follows by  \cref{def:swap-negbC} and \cref{def:swapval-ext}: $\nubc_w (\neg\alpha) \in \aneg_1 \nubc_w (\alpha)$, and so $(\neg \alpha)_{(2,w)} \leq \alpha_{(1,w)}$.
 
    For {\DmbCciw} and  {\DmbCci}, condition $5\prim$ and $5^*$ follow, respectively, by \cref{def:swapOp-ext} and \cref{def:swapval-ext}. 

    The case of {\DCi} follows from  {\DmbCci} and {\DbC}. 
\end{proof}

\begin{defn}
		Let $\mathcal{M}^{K-S}_{\textbf{L}} = \langle W^{\textbf{L}}, R^{\textbf{L}}, \{\nu^{\textbf{L}}_w\}_{w \in W}  \rangle$ be as above. Then, a formula $\alpha \in For_\Sigma$ is said to be $\mathcal{M}^{K-S}_{\textbf{L}}$-true in a world $w$, denoted by $\mathcal{M}^{K-S}_{\textbf{L}}, w \vDash \alpha$, if it is the case that $\nu^{\textbf{L}}_w(\alpha) \in \mathcal{D}$.
	\end{defn}

	\begin{defn}
		Let $\Gamma \cup\{\alpha\} \subseteq For_\Sigma$. We say that $\alpha$ is a logical consequence of $\Gamma$ in {\bf L}, denoted by $\Gamma \vDash_{\bf L} \alpha$, if for all $\mathcal{M}^{K-S}_{\textbf{L}}$ and $w \in W$:  $\mathcal{M}^{K-S}_{\textbf{L}}, w \vDash \Gamma$ implies that $\mathcal{M}^{K-S}_{\textbf{L}}, w \vDash \alpha$.
	\end{defn}

    \begin{thm}[Soundness]
        For every $\Gamma \cup \{\varphi\} \subseteq For(\Sigma)$, $$\text{if }\Gamma \vdash_{\textbf{L}} \varphi \text{, then } \Gamma \vDash_{\textbf{L}} \varphi$$
    \end{thm}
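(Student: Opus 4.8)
The plan is to reuse, almost verbatim, the soundness proof for \DmbC\ (the first Soundness theorem), adapting only the new cases corresponding to the additional axioms \textbf{(ciw)}, \textbf{(ci)} and \textbf{(cf)}, according to which system \textbf{L} denotes. Since the positive fragment, \textbf{(EM)}, \textbf{(bc)}, \textbf{($\obl$-K)}, \textbf{($\obl$-E)}, Modus Ponens and $\obl$-necessitation are validated exactly as before --- the relevant clauses of \cref{lemma:proj-nu-ext} coincide with those of \cref{lemma:proj-nu_w}, and the modal operators and the accessibility relation are untouched --- it suffices to verify soundness of the one or two extra axioms in each case, using the refined projection conditions $5\prim$, $5^*$ and $7$. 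As usual, it is enough to check that every instance of an axiom receives first coordinate $1$ at every world $w$ of every swap Kripke model for \textbf{L}, and then to note that the rules preserve this property (which is literally the argument from the \DmbC\ proof).

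First I would handle \textbf{(ciw)}, i.e. ${\circ}\alpha \lor (\alpha \land \neg\alpha)$, in \DmbCciw: by item~2 of \cref{lemma:proj-nu-ext} its first coordinate is $(\circ\alpha)_{(1,w)} \sqcup (\alpha \land \neg\alpha)_{(1,w)}$; by condition $5\prim$ this equals ${\sim}(\alpha_{(1,w)} \sqcap \alpha_{(2,w)}) \sqcup (\alpha_{(1,w)} \sqcap \alpha_{(2,w)})$, which is $1$ in $2$. Next, for \textbf{(ci)}, i.e. $\neg{\circ}\alpha \to (\alpha \land \neg\alpha)$, in \DmbCci\ (and \DCi): using items 1, 3, 4 of \cref{lemma:proj-nu-ext} the first coordinate is $(\circ\alpha)_{(2,w)} \supset (\alpha_{(1,w)} \sqcap \alpha_{(2,w)})$; by condition $5^*$, $(\circ\alpha)_{(2,w)} = \alpha_{(1,w)} \sqcap \alpha_{(2,w)}$, so this is of the form $x \supset x = 1$. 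Finally, for \textbf{(cf)}, i.e. $\neg\neg\alpha \to \alpha$, in \DbC\ (and \DCi): by item~3 the first coordinate is $(\neg\neg\alpha)_{(1,w)} \supset \alpha_{(1,w)}$; by item~4 applied twice, $(\neg\neg\alpha)_{(1,w)} = (\neg\alpha)_{(2,w)}$, and by condition $7$ we have $(\neg\alpha)_{(2,w)} \leq \alpha_{(1,w)}$, so again the implication has first coordinate $1$. For \DCi\ both \textbf{(ci)} and \textbf{(cf)} are covered by the two computations just given, since both conditions $5^*$ and $7$ hold there.

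To close, I would remark that the cases of \textbf{(EM)}, \textbf{(bc)}, \textbf{($\obl$-K)}, \textbf{($\obl$-E)} and the two inference rules go through word for word as in the soundness proof for \DmbC, because the clauses of \cref{lemma:proj-nu-ext} they invoke (items 1--6, and item~5 in its weak inequality form for \textbf{(bc)}) hold in every \textbf{L}; in particular seriality of $R^{\textbf{L}}$ is still assumed in the definition of a swap Kripke model, so \textbf{($\obl$-E)} is handled by the analogue of \cref{rem:O-E}. The proof then finishes by the standard induction on the length of a derivation from $\Gamma$, exactly as before. I do not anticipate a genuine obstacle here: the only point requiring any care is bookkeeping --- making sure that for the composite system \DCi\ one invokes the \emph{strengthened} conditions $5^*$ and $7$ simultaneously rather than the weaker ones inherited from \DmbC, and that one does not accidentally use condition $5\prim$ (valid only in \DmbCciw) when reasoning about \DmbCci\ or \DCi.
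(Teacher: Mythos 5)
Your proposal is correct and follows essentially the same route as the paper: the paper likewise reduces the problem to verifying the new axioms \textbf{(ciw)}, \textbf{(ci)} and \textbf{(cf)} via conditions $5\prim$, $5^*$ and $7$ of \cref{lemma:proj-nu-ext}, with \DCi\ inheriting both verifications. Your write-up merely spells out the coordinate computations (e.g.\ ${\sim}x \sqcup x = 1$ and $x \supset x = 1$) a bit more explicitly than the paper does.
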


    \begin{proof} Consider first \DmbCciw. Given a valuation $v_w$ and a formula $\alpha$, it must be shown that $(\circ\alpha \vee (\alpha \land \neg \alpha))_{(1,w)}=1$. By the  definition of the multioperator $\tilde{\vee}$ in \cref{def:swapOperations}, the latter is equivalent to prove that either $(\circ\alpha)_{(1,w)}=1$ or $(\alpha \land \neg \alpha)_{(1,w)}=1$. But this is immediate, by property $5\prim$ of $v_w$ given in \cref{lemma:proj-nu-ext} and the fact that $(\alpha \land \neg \alpha)_{(1,w)}=\alpha_{(1,w)} \sqcap \alpha_{(2,w)}$.

        In {\DmbCci}, it must be shown that \textbf{(ci)} is valid. Given $v_w$, assume that $(\neg{\circ}\alpha)_{(1,w)}=1$. Thus, we have $$(\neg{\circ}\alpha)_{(1,w)} = (\circ\alpha)_{(2,w)} = (\alpha_{(1,w)} \sqcap \alpha_{(2,w)}) =1,$$ 
    by property $5^*$ of $v_w$. This shows that $v_w$ satisfies any instance of axiom  \textbf{(ci)}.
    
       For \DbC, the case  for \textbf{(cf)} is immediate from  property $7$ in \cref{lemma:proj-nu-ext}.

       Clearly, soundness of {\DCi} follows from soundness of {\DmbCci} and {\DbC}.
    \end{proof}

    For every logic in $\textbf{L}$, we define $W^\textbf{L}_{can}, R^\textbf{L}_{can}$ and $\nu^\textbf{L}_{\Delta}$ following the definitions for {\DmbC} and adapting to each \textbf{L} accordingly. Notice that each $\Delta$ is now a $\psi$-saturated set in $W^\textbf{L}_{can}$ This allows us to state the following lemma:  

    \begin{lemma}
        \label{lemma:truth-sat-ext}
        For any $\Delta \in W^\textbf{L}_{can}$, all statements 1. through 6. in \cref{lemma:truth-sat} hold.
        
        For \DmbCciw, we have the following strengthening of statement 5.:
        \begin{description}
            \item[5$^+_1$.]  $\alpha \in \Delta$ and $\neg \alpha \in \Delta$ iff $\circ \alpha \notin \Delta$
        \end{description}

        For \DmbCci, we have an additional condition for $\circ$: 
        \begin{description}
            \item[5$^+_2$.]  If $\neg{\circ}\alpha \in \Delta$, then $\alpha \in \Delta$ and $\neg \alpha \in \Delta$.
        \end{description}

        For \DbC, we have an additional condition for $\neg$: 
        \begin{description}
            \item[5$^+_3$.]  If $\neg \neg \alpha \in \Delta$, then $\alpha \in \Delta$.
        \end{description}

        For \DCi, both conditions $5^+_2$ and $5^+_3$ are added.
    \end{lemma}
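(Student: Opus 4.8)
The plan is to treat this as a bookkeeping statement: the canonical construction for each \textbf{L} differs from the one for \DmbC\ only in the non-modal fragment, while the canonical valuation $\nu^\textbf{L}_\Delta$ is still defined by membership exactly as in \cref{def:can-bival-truth}. Accordingly, I would first argue that items 1.\ through 6.\ of \cref{lemma:truth-sat} carry over verbatim. The proofs of items 1.--5.\ use only that $\Delta$ is a closed theory of the logic in question (\cref{prop:phi-sat}) — hence contains every instance of the \cplp\ axioms, of \textbf{(EM)} and of \textbf{(bc)}, and is closed under Modus Ponens — and each \textbf{L} extends \DmbC\ merely by propositional axioms, so all of these remain available. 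For item 6.\ the only ingredients are the $\obl$-necessitation rule and \axK, which every logic in $\{\DmbCciw,\DmbCci,\DbC,\DCi\}$ shares with \DmbC; note in particular that the right-to-left direction needs only the extension property of $\psi$-saturated sets (\cref{prop:phi-sat}) applied to $Den(\Delta)$, and not the seriality of $R^\textbf{L}_{can}$, which is not part of this statement.

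Next I would dispatch the four strengthenings, each being a one-line consequence of closure under logical consequence together with the extra axiom of the relevant logic and the already-established items 1.\ and 2. For $5^+_1$ in \DmbCciw, the left-to-right direction is exactly item 5.; for the converse, $\Delta$ contains the instance ${\circ}\alpha \lor (\alpha \land \neg \alpha)$ of \textbf{(ciw)}, so by item 2.\ either ${\circ}\alpha \in \Delta$ or $\alpha \land \neg\alpha \in \Delta$, and if ${\circ}\alpha \notin \Delta$ then item 1.\ applied to the second disjunct yields $\alpha,\neg\alpha \in \Delta$. For $5^+_2$ in \DmbCci, $\Delta$ contains the instance $\neg{\circ}\alpha \to (\alpha \land \neg\alpha)$ of \textbf{(ci)}, so $\neg{\circ}\alpha \in \Delta$ gives $\alpha \land \neg\alpha \in \Delta$ by Modus Ponens, hence $\alpha,\neg\alpha \in \Delta$ by item 1. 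For $5^+_3$ in \DbC, $\Delta$ contains the instance $\neg\neg\alpha \to \alpha$ of \textbf{(cf)}, so $\neg\neg\alpha \in \Delta$ implies $\alpha \in \Delta$ by Modus Ponens. Finally, the axiom set of \DCi\ contains both \textbf{(ci)} and \textbf{(cf)}, so the arguments for $5^+_2$ and $5^+_3$ apply unchanged.

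There is no real obstacle here; the only point requiring a moment's care is conceptual rather than technical. One must observe that the replacements of the swap multioperators ($\aneg_1$, $\acirc_1$, $\acirc_2$) made in \cref{def:swap-negbC} and \cref{def:swapOp-ext} play no role in this lemma, since the canonical valuation is pinned down by membership; the work of verifying that this valuation actually respects the new, tighter multioperators is deferred to the subsequent truth lemma and completeness argument, where precisely the additional syntactic conditions $5^+_1$, $5^+_2$, $5^+_3$ proved here (mirroring the semantic conditions $5\prim$, $5^*$ and $7$ of \cref{lemma:proj-nu-ext}) are what will be invoked.
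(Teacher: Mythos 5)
Your proposal is correct and follows exactly the route the paper takes: the paper's own proof is a one-line appeal to ``the respective new axiom of $\textbf{L}$ and the fact that $\Delta$ is saturated (hence a closed theory),'' which is precisely the argument you spell out in detail. Your additional observations --- that items 1.--6.\ carry over because each $\textbf{L}$ extends \DmbC\ only by propositional axioms, and that the modified multioperators play no role until the subsequent truth lemma --- are accurate elaborations of what the paper leaves implicit.
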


    \begin{proof}
       All conditions  are easily proved by using the respective new axiom of {\bf L}, and the fact that $\Delta$ is saturated (hence it is a closed theory). 
    \end{proof}

    \begin{prop}
		\label{prop:TruthLemma-ext}
		The triple  $\M_\textbf{L} = \{W^\textbf{L}_{can}, R^\textbf{L}_{can}, \{\nu^\textbf{L}_\Delta\}_{\Delta \in W_{can}}\}$ is a swap Kripke model for \textbf{L} such that $\nu^{\textbf{L}}_\Delta(\varphi) \in \mathcal{D}$ if and only if $\varphi \in \Delta$ if and only if $\alpha_{(1,\Delta)}=1$.
	\end{prop}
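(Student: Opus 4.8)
The plan is to follow the proof of \cref{prop:TruthLemma} almost verbatim, so there are two things I would establish: that $\M_\textbf{L}$ really is a swap Kripke model for \textbf{L} (i.e. $R^\textbf{L}_{can}$ is serial and each $\nu^\textbf{L}_\Delta$ is a swap valuation for \textbf{L}), and that the two biconditionals hold. The biconditionals come for free: by \cref{def:can-bival-truth}, $\nu^\textbf{L}_\Delta(\varphi)\in\D$ iff $\varphi\in\Delta$, and since every modified swap structure still has $\D=\{z\in A : z_1=1\}$, the analogue of \cref{prop:designated-dmbc} gives $\nu^\textbf{L}_\Delta(\varphi)\in\D$ iff $\varphi_{(1,\Delta)}=1$. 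For seriality of $R^\textbf{L}_{can}$ I would reuse the argument from \cref{prop:TruthLemma} without change: every logic in \textbf{L} extends \DmbC, hence contains $(\obl\text{-}K)$, $\obl$-necessitation and $(\obl\text{-}E)$, and the reasoning that derives a contradiction from assuming $\obl\alpha\in\Delta$ for every $\alpha$ --- passing through $\obl\bot_\alpha\in\Delta$ and $(\obl\text{-}E)$ to $\bot_\alpha\in\Delta$ and then to triviality of the saturated set $\Delta$ --- uses only these ingredients.

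Next I would check that $\nu^\textbf{L}_\Delta$ is a swap valuation for \textbf{L}. The clauses for $\land,\lor,\to,\obl$, the clause for $\neg$ in \DmbCciw\ and \DmbCci, and the clause for $\circ$ in \DbC\ are handled exactly as in \cref{prop:TruthLemma}, invoking statements 1.--6. of \cref{lemma:truth-sat-ext} (which coincide with those of \cref{lemma:truth-sat}). The genuinely new work is to show that $\nu^\textbf{L}_\Delta$ respects the modified multioperators $\aneg_1,\acirc_1,\acirc_2$ of \cref{def:swap-negbC} and \cref{def:swapOp-ext}. I would argue by cases on $\nu^\textbf{L}_\Delta(\alpha)\in\{T,t,F\}$, using $\alpha_{(1,\Delta)}=1$ iff $\alpha\in\Delta$ and $\alpha_{(2,\Delta)}=1$ iff $\neg\alpha\in\Delta$, and verifying membership in the relevant (sometimes singleton) set via conditions $5^+_1,5^+_2,5^+_3$ of \cref{lemma:truth-sat-ext} together with closure of $\Delta$ under logical consequence --- in particular under (EM).

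This last step is the one I expect to require care, precisely because $\acirc_2$ (and $\aneg_1$) are singleton-valued, so the canonical value is forced and must be matched exactly rather than merely be one of several admissible values. Two sample cases indicate what is needed. In \DmbCci\ with $\nu_\Delta(\alpha)=t$, i.e. $\alpha,\neg\alpha\in\Delta$, we have $\acirc_2 t=\{F\}$, so we need both $\circ\alpha\notin\Delta$ and $\neg\circ\alpha\in\Delta$; the former is condition $5^+_1$ (which holds in \DmbCci\ because \textbf{(ci)} plus (EM) yields \textbf{(ciw)}), while the latter does not follow from \textbf{(ci)} directly but from (EM) applied to $\circ\alpha$: $\circ\alpha\lor\neg\circ\alpha\in\Delta$ and $\circ\alpha\notin\Delta$ force $\neg\circ\alpha\in\Delta$. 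In \DbC\ with $\nu_\Delta(\alpha)=F$, i.e. $\neg\alpha\in\Delta$ and $\alpha\notin\Delta$, we have $\aneg_1 F=\{T\}$, so we need $\neg\neg\alpha\notin\Delta$, which is exactly the contrapositive of condition $5^+_3$.

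Finally, the case of \DCi\ just combines the \DmbCci\ and \DbC\ verifications, since its swap valuations use $\acirc_2$ and $\aneg_1$ simultaneously, and conditions $5^+_2$ and $5^+_3$ both hold there. Beyond the cases above, I would also invoke (EM) at the outset to guarantee that $\nu^\textbf{L}_\Delta$ is well-defined in the first place --- that at least one of $\alpha,\neg\alpha$ lies in $\Delta$ --- and to supply the "positive" instances of the modified connectives. Once $\nu^\textbf{L}_\Delta$ is shown to be a swap valuation for \textbf{L}, $\M_\textbf{L}$ is a swap Kripke model for \textbf{L} and the stated equivalences follow as above.
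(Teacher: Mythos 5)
Your proposal is correct and follows essentially the same route as the paper's proof: both reduce everything to \cref{lemma:truth-sat-ext} and \cref{def:can-bival-truth}, and then verify by cases on $\nu^{\textbf{L}}_\Delta(\alpha)\in\{T,t,F\}$ that the canonical valuation lands in the (often singleton) sets prescribed by $\aneg_1$, $\acirc_1$, $\acirc_2$, using $5^+_1$, $5^+_2$, $5^+_3$ together with the (EM)-derived fact that $\beta\notin\Delta$ forces $\neg\beta\in\Delta$. You are in fact slightly more explicit than the paper, which treats seriality and well-definedness as inherited from the \DmbC\ case and only works out the $\acirc_2$ and $\aneg_1$ cases as samples.
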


    \begin{proof}
		It is an immediate consequence from \cref{lemma:truth-sat-ext} and \cref{def:can-bival-truth}. For instance, in order to prove that $\nu^{\DmbCci}_\Delta(\circ\alpha) \in \acirc_2 \nu^{\DmbCci}_\Delta(\alpha)$, let $z:=\nu^{\DmbCci}_\Delta(\alpha)$. Suppose first that  $z \in \{T,F\}$. By \cref{def:can-bival-truth}, either $\alpha \notin \Delta$ or $\neg\alpha \notin \Delta$. By 5$^+_1$ and  5$^+_2$ of  \cref{lemma:truth-sat-ext}, $\circ\alpha  \in \Delta$ and $\neg {\circ}\alpha  \notin \Delta$. From this, $\nu^{\DmbCci}_\Delta(\circ\alpha)=T \in \{T\}= \acirc_2 z$. Now, if  $z =t$ then $\alpha,\neg\alpha \in \Delta$ and so, by 5$^+_1$, $\circ\alpha  \notin \Delta$. Hence, $\nu^{\DmbCci}_\Delta(\circ\alpha)=F \in \{F\}= \acirc_2 z$. In turn, in order to prove that  $\nu^{\DbC}_\Delta(\neg\alpha) \in \aneg_1 \nu^{\DbC}_\Delta(\alpha)$, let $z:=\nu^{\DbC}_\Delta(\alpha)$. If $z=T$ then $\alpha \in \Delta$ and $\neg\alpha \notin\Delta$. From this, $\nu^{\DbC}_\Delta(\neg\alpha)=F \in \{F\}= \aneg_1 z$.  If $z=t$ then $\alpha, \neg\alpha \in\Delta$. From this, $\nu^{\DbC}_\Delta(\neg\alpha) \in \{T,t\}= \aneg_1 z$. Finally, if  $z=F$ then $\alpha \notin \Delta$ and $\neg\alpha \in\Delta$. By 5$^+_3$ of  \cref{lemma:truth-sat-ext}, $\neg\neg\alpha  \notin \Delta$ and so $\nu^{\DbC}_\Delta(\neg\alpha)=T \in \{T\}= \aneg_1 z$. The other cases are treated analogously.
	\end{proof}

    \begin{thm}[Completeness]
		For any set $\Gamma \cup \{\varphi\} \subseteq For(\Sigma)$, if $\Gamma \vDash_\textbf{L} \varphi$, then $\Gamma \vdash_\textbf{L} \varphi$
	\end{thm}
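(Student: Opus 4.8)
The approach is the standard canonical-model argument, exactly mirroring the completeness proof already given for \DmbC, but now invoking the extended machinery from this section. I would prove the contrapositive: assume $\Gamma \nvdash_{\textbf{L}} \varphi$ and construct a swap Kripke model for \textbf{L} that separates $\Gamma$ from $\varphi$.

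\textbf{Key steps.} First I would appeal to Remark~\ref{prop:phi-sat} (which is stated to hold for all logics considered in the paper, \textbf{L} included): since $\Gamma \nvdash_{\textbf{L}} \varphi$, there is a $\varphi$-saturated set $\Delta \in W^{\textbf{L}}_{can}$ with $\Gamma \subseteq \Delta$, and since $\Delta \nvdash_{\textbf{L}} \varphi$ we get $\varphi \notin \Delta$. Second, I would form the canonical swap Kripke model $\M_{\textbf{L}} = \langle W^{\textbf{L}}_{can}, R^{\textbf{L}}_{can}, \{\nu^{\textbf{L}}_\Delta\}_{\Delta \in W^{\textbf{L}}_{can}}\rangle$ defined just above Lemma~\ref{lemma:truth-sat-ext}, and note that by Proposition~\ref{prop:TruthLemma-ext} this really is a swap Kripke model for \textbf{L}, with the Truth Lemma $\nu^{\textbf{L}}_\Delta(\psi) \in \mathcal{D}$ iff $\psi \in \Delta$ for every formula $\psi$. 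One still needs seriality of $R^{\textbf{L}}_{can}$: the argument from Proposition~\ref{prop:TruthLemma} carries over verbatim, since each \textbf{L} extends \DmbC\ and in particular retains $(\obl\text{-}E)$ and the $\obl$-necessitation rule, so the same reasoning (if $\obl\alpha \in \Delta$ for all $\alpha$, then $\obl\bot_\alpha \in \Delta$, whence $\bot_\alpha \in \Delta$, contradicting $\varphi$-saturation) shows $\mathrm{Den}(\Delta)$ is nontrivial and hence some $\alpha$-saturated successor exists. Third, from the Truth Lemma and $\Gamma \subseteq \Delta$ we get $\M_{\textbf{L}}, \Delta \vDash \Gamma$, while $\varphi \notin \Delta$ gives $\M_{\textbf{L}}, \Delta \nvDash \varphi$. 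Hence $\Gamma \nvDash_{\textbf{L}} \varphi$, which is the desired contrapositive.

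\textbf{Main obstacle.} The only genuine content beyond the \DmbC\ case is verifying that the canonical valuations $\nu^{\textbf{L}}_\Delta$ respect the modified multioperators ($\aneg_1$, $\acirc_1$, $\acirc_2$) rather than the original ones — i.e., that $\M_{\textbf{L}}$ is indeed a swap Kripke model for \textbf{L} and not merely for \DmbC. But this is precisely the content of Proposition~\ref{prop:TruthLemma-ext}, whose proof uses the strengthened saturation conditions $5^+_1$, $5^+_2$, $5^+_3$ of Lemma~\ref{lemma:truth-sat-ext}; so once those are in hand the completeness proof is a routine transcription. I would simply remark that the proof is identical to that for \DmbC, replacing Lemma~\ref{lemma:truth-sat} and Proposition~\ref{prop:TruthLemma} by their extended counterparts Lemma~\ref{lemma:truth-sat-ext} and Proposition~\ref{prop:TruthLemma-ext}.
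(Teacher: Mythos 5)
Your proposal is correct and matches the paper's approach exactly: the paper simply states that completeness for $\textbf{L}$ is proved analogously to the \DmbC\ case, relying on \cref{lemma:truth-sat-ext} and \cref{prop:TruthLemma-ext} in place of their \DmbC\ counterparts, which is precisely the canonical-model argument you spell out (including the seriality check carrying over since each $\textbf{L}$ retains $(\obl\text{-}E)$).
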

	\begin{proof}
		This is proved in analogous way as to the {\DmbC} case.
	\end{proof}

\section{The da Costa axiom: the case of \DmbCcl}

In this section, as well as in Sections~\ref{sect:DCila}, \ref{sect:C1D} and~\ref{sect:DCn}, we will consider axiomatic extensions of  \DmbC\ which include, among others, the so-called {\em da Costa axiom}
$$\textbf{(cl)}  \ \  \neg (\alpha \land \neg \alpha) \to {\circ} \alpha $$
This move has strong consequences: as it was shown by Avron in~\cite[Theorem~11]{Avron2007a}, the logic  \mbCcl\ obtained by adding \textbf{(cl)} to \mbC, a well as other extensions of  \mbCcl\ (including \Cila, the version of da Costa's system $C_1$ in a signature with $\circ$) cannot be semantically characterized by a single finite $N$matrix. As shown by Coniglio and Toledo in~\cite{ConiglioToledo2022}, this issue can be overcome by considering, for each of such systems, a suitable (finite--valued)  Nmatrix and restrict the set  of permitted valuations by a (decidable) criterion, through the notion of {\em restricted Nmatrices}  (or RNmatrices).

We will adapt this technique to our swap Kripke models, in order to deal with the deontic expansion of (some of) such systems. 

This section starts our analysis by considering \DmbCcl. The logic {\DmbCcl} is the extension of \DmbC\ by adding axiom \textbf{(cl)}. It is easy to show that \DmbCcl\ is a proper extension of {\DmbCciw}: this follows from the fact that the system {\mbCcl} is a proper extension of the system {\mbCciw}, which is obtained from \mbC\ by adding \textbf{(ciw)} \cite[Corollary~3.3.30]{Carnielli2016}. 
We thus present a swap Kripke semantics for {\DmbCcl} as the corresponding one for  {\DmbCciw}, together with a restriction on their valuations.


   \begin{defn}
        \label{def:nu_cL}
    A swap Kripke model $\M^{K-S}_{\DmbCcl} = \langle W,R,\{\nucL_w\}_{w \in W}\rangle$ for \DmbCcl\ is a swap Kripke model  for \DmbCciw\ such that each valuation $\nucL_w$ satisfies, in addition, the following condition:
    $$\text{If }\nucL_w(\alpha) = t \text{, \ then \ } \nucL_w(\alpha \land \neg \alpha) = T.$$ 
  \end{defn}

    \begin{defn}
		Let $\mathcal{M}^{K-S}_{DmbCcl} = \langle W, R, \{\nucL_w\}_{w \in W}  \rangle$ be a  swap Kripke model for \DmbCcl. We say that a formula $\alpha \in For(\Sigma)$ is $\mathcal{M}^{K-S}_{\DmbCcl}$-true in a world $w$, denoted by $\mathcal{M}^{K-S}_{\DmbCcl}, w \vDash \alpha$, if $\nucL_w(\alpha) \in \mathcal{D}$.
	\end{defn}
 
\begin{defn}
		Let $\Gamma \cup \{\alpha\ \subseteq For(\Sigma)$. We say that $\alpha$ is a logical consequence of $\Gamma$ in \DmbCcl, denoted by $\Gamma \vDash_{\DmbCcl} \alpha$, if for all $\mathcal{M}^{K-S}_{\DmbCcl}$ and all $w \in W$: $\mathcal{M}^{K-S}_{\DmbCcl}, w \vDash \Gamma$ implies that $\mathcal{M}^{K-S}_{\DmbCcl}, w \vDash \alpha$.
\end{defn}

    The following lemma will be useful for showing soundness of  \DmbCcl\ w.r.t. swap Kripke models semantics.

    \begin{lemma}
        \label{lemma:proj-nu-c}
        Given the notation on \cref{rmk:notation}, let $\nucL_w$ be a valuation in a swap Kripke model $\M^{K-S}_{\DmbCcl}$ for \DmbCcl. Then, the following holds, for every formula $\alpha$:
        \begin{description}
            \item[5$^{**}$.] $({\circ}\alpha)_{(1,w)} = (\alpha \land \neg \alpha)_{(2,w)}$
        \end{description}
        Hence, any instance of axiom \textbf{(cl)} is true in any world of any swap Kripke model for \DmbCcl.
    \end{lemma}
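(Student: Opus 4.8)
The strategy is to verify condition $5^{**}$ directly from the definitions governing swap Kripke models for \DmbCcl, and then derive the validity of \textbf{(cl)} as a corollary. Recall from \cref{def:nu_cL} that such a model is a swap Kripke model for \DmbCciw\ with the extra constraint that $\nucL_w(\alpha)=t$ implies $\nucL_w(\alpha \land \neg\alpha)=T$. So I would split into cases according to the snapshot $z := \nucL_w(\alpha) \in \{T,t,F\}$, and in each case compute both sides of the claimed identity using \cref{lemma:proj-nu_w} (which, by \cref{lemma:proj-nu-ext}, applies here), the strengthened consistency clause $5\prim$ available in \DmbCciw\ (so $({\circ}\alpha)_{(1,w)} = {\sim}(\alpha_{(1,w)} \sqcap \alpha_{(2,w)})$), and items 1 and 4 of \cref{lemma:proj-nu_w} giving $(\alpha \land \neg\alpha)_{(1,w)} = \alpha_{(1,w)} \sqcap \alpha_{(2,w)}$ and $(\neg\alpha)_{(1,w)} = \alpha_{(2,w)}$.

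The heart of the matter is the right-hand side $(\alpha \land \neg\alpha)_{(2,w)}$, the \emph{second} coordinate of the snapshot of $\alpha \land \neg\alpha$, which the nondeterministic $\aland$ does not pin down on its own. Here is where the \textbf{(cl)}-restriction does the work. If $z=T$, then $\alpha_{(1,w)}=1$, $\alpha_{(2,w)}=0$, so $\alpha_{(1,w)}\sqcap\alpha_{(2,w)}=0$, hence $(\alpha\land\neg\alpha)_{(1,w)}=0$, i.e.\ $\nucL_w(\alpha\land\neg\alpha)=F=(0,1)$, so its second coordinate is $1$; and ${\sim}(\alpha_{(1,w)}\sqcap\alpha_{(2,w)})={\sim}0=1$, matching. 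If $z=F$, the computation is symmetric: $\alpha_{(1,w)}\sqcap\alpha_{(2,w)}=0$ again, $\nucL_w(\alpha\land\neg\alpha)=F$ (second coordinate $1$), and $({\circ}\alpha)_{(1,w)}=1$, matching. If $z=t$, then $\alpha_{(1,w)}=\alpha_{(2,w)}=1$, so $({\circ}\alpha)_{(1,w)}={\sim}1=0$; and by the \textbf{(cl)}-restriction $\nucL_w(\alpha\land\neg\alpha)=T=(1,0)$, whose second coordinate is $0$, matching. Thus $5^{**}$ holds in all three cases. The only genuinely delicate point — the one the restriction is designed to fix — is the $z=t$ case, which without the constraint could yield $\nucL_w(\alpha\land\neg\alpha)\in\{T,t\}$ and break the identity on the second coordinate.

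For the final clause, that every instance of \textbf{(cl)} is true at every world of every such model: \textbf{(cl)} is $\neg(\alpha\land\neg\alpha)\to{\circ}\alpha$, so by \cref{lemma:proj-nu_w} items 3 and 4 its first coordinate at $w$ equals $(\neg(\alpha\land\neg\alpha))_{(1,w)}\supset({\circ}\alpha)_{(1,w)} = (\alpha\land\neg\alpha)_{(2,w)}\supset({\circ}\alpha)_{(1,w)}$, and by $5^{**}$ just proved this is $({\circ}\alpha)_{(1,w)}\supset({\circ}\alpha)_{(1,w)}=1$. By \cref{prop:designated-dmbc}, having first coordinate $1$ means the value is designated, so $\M^{K-S}_{\DmbCcl},w\vDash \textbf{(cl)}$. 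I expect no real obstacle beyond bookkeeping the three snapshot cases; the conceptual content is entirely in recognizing that the \textbf{(cl)}-restriction is exactly what forces the second coordinate of $\alpha\land\neg\alpha$ to track $\alpha_{(1,w)}\sqcap\alpha_{(2,w)}$ rather than floating freely.
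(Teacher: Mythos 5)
Your proof is correct and follows essentially the same route as the paper's: verify $5^{**}$ by a case analysis (the paper splits on the value of $({\circ}\alpha)_{(1,w)}$, you split on the snapshot $\nucL_w(\alpha)$, which is an equivalent reorganization), using the exclusion of $(0,0)$ from the domain for the non-problematic cases and the \textbf{(cl)}-restriction for the case $\nucL_w(\alpha)=t$, then read off the validity of \textbf{(cl)}. Your explicit computation of the final implication via $(\neg(\alpha\land\neg\alpha))_{(1,w)}=(\alpha\land\neg\alpha)_{(2,w)}$ is a detail the paper leaves implicit, but it is the same argument.
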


    \begin{proof}
        By \cref{lemma:proj-nu-ext},  $(\circ \alpha)_{(1,w)} = {\sim} (\alpha_{(1,w)} \sqcap \alpha_{(2,w)})$. Suppose that  $({\circ}\alpha)_{(1,w)} =1$. Then, $\alpha_{(1,w)} \sqcap \alpha_{(2,w)}=(\alpha \land \neg\alpha)_{(1,w)}=0$. By definition of $A$, it follows that  $(\alpha \land \neg\alpha)_{(2,w)}=1=({\circ}\alpha)_{(1,w)}$. Now, suppose that $({\circ}\alpha)_{(1,w)} = 0$. Then, $\alpha_{(1,w)} \sqcap \alpha_{(2,w)}=\alpha_{(1,w)} \sqcap (\neg\alpha)_{(1,w)}=(\alpha \land \neg\alpha)_{(1,w)}=1$. From this, $\alpha_{(1,w)} = \alpha_{(2,w)}=1$, which means that $\nucL_w(\alpha) = (1,1)=t$. By \cref{def:nu_cL}, $\nucL_w(\alpha \land \neg \alpha) =T= (1,0)$. Hence, $(\alpha \land \neg\alpha)_{(2,w)}=0=({\circ}\alpha)_{(1,w)}$.

        The latter shows that any instance of axiom \textbf{(cl)} is true in any world of any swap Kripke model for \DmbCcl.
    \end{proof}

\begin{thm}[Soundness of \DmbCcl\ w.r.t. swap Kripke models] \ \\
        For every $\Gamma \cup \{\varphi\} \subset For(\Sigma)$, $$\text{if }\Gamma \vdash_{\DmbCcl} \varphi \text{, then } \Gamma \vDash_{\DmbCcl} \varphi.$$
\end{thm}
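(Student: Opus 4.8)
The plan is to run essentially the same argument as in the soundness proof for \DmbC, exploiting the fact that, by \cref{def:nu_cL}, every swap Kripke model for \DmbCcl\ is in particular a swap Kripke model for \DmbCciw, and hence — by \cref{lemma:proj-nu-ext} — also a swap Kripke model for \DmbC. Consequently, restricting attention from the class of all \DmbC-models (resp.\ \DmbCciw-models) to the smaller class of \DmbCcl-models cannot destroy validity, so the Soundness theorems already established for \DmbC\ and for the systems in \textbf{L} immediately give that every instance of every axiom of \DmbC, as well as every instance of \textbf{(ciw)}, is true at every world of every swap Kripke model for \DmbCcl. The only genuinely new axiom, \textbf{(cl)}, has already been shown to be true at every world of every swap Kripke model for \DmbCcl\ in \cref{lemma:proj-nu-c}.

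First I would prove, by induction on the length of a derivation, that $\vdash_{\DmbCcl}\varphi$ implies that $\varphi$ is true at every world of every swap Kripke model for \DmbCcl. The base case (axioms) is covered by the previous paragraph. For the inductive step, Modus Ponens is immediate from item~3 of \cref{lemma:proj-nu_w}: if $\alpha_{(1,w)}=1$ and $(\alpha\to\beta)_{(1,w)}=\alpha_{(1,w)}\supset\beta_{(1,w)}=1$, then $\beta_{(1,w)}=1$. For $\obl$-necessitation, suppose $\alpha$ is true at every world of every model; fixing a swap Kripke model $\M$ for \DmbCcl\ and a world $w$, we have $\alpha_{(1,w')}=1$ for every $w'$ with $wRw'$, and by seriality of $R$ at least one such $w'$ exists, so $(\obl\alpha)_{(1,w)}=\bigsqcap\{\alpha_{(1,w')} \ : \ wRw'\}=1$ by item~6 of \cref{lemma:proj-nu_w}; hence $\obl\alpha$ is true at $w$, and $w$ and $\M$ were arbitrary. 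Throughout I use \cref{prop:designated-dmbc}: a formula is true at $w$ iff its first coordinate at $w$ equals $1$.

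It then remains to pass from theorems to consequences from premises, using \cref{def:derivations}. Let $\Gamma\vdash_{\DmbCcl}\varphi$. If $\vdash_{\DmbCcl}\varphi$, the previous step already yields $\Gamma\vDash_{\DmbCcl}\varphi$. Otherwise there are $\gamma_1,\dots,\gamma_k\in\Gamma$ with $\vdash_{\DmbCcl}(\gamma_1\land\dots\land\gamma_k)\to\varphi$. Take any swap Kripke model $\M$ for \DmbCcl\ and any world $w$ with $\M,w\vDash\Gamma$; then $(\gamma_i)_{(1,w)}=1$ for each $i$, so iterating item~1 of \cref{lemma:proj-nu_w} gives $(\gamma_1\land\dots\land\gamma_k)_{(1,w)}=1$, while the implication, being a theorem, is true at $w$, whence $(\gamma_1\land\dots\land\gamma_k)_{(1,w)}\supset\varphi_{(1,w)}=1$ forces $\varphi_{(1,w)}=1$, i.e.\ $\M,w\vDash\varphi$. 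This proves $\Gamma\vDash_{\DmbCcl}\varphi$.

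I do not anticipate a real obstacle here: the proof is structurally identical to the one for \DmbC, and the only point deserving attention is to notice that the valuation restriction imposed in \cref{def:nu_cL} leaves all the inherited axioms valid, which is automatic because it merely shrinks the class of admissible models. The single new ingredient — the validity of the da Costa axiom \textbf{(cl)} — is precisely the content of \cref{lemma:proj-nu-c}, so no further computation is needed.
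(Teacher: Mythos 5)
Your proposal is correct and follows exactly the paper's route: the paper's proof is the one-line observation that soundness follows from soundness of \DmbCciw\ together with \cref{lemma:proj-nu-c}, and your argument is simply the fully spelled-out version of that same reduction (restricting the model class preserves validity of the inherited axioms, and \textbf{(cl)} is handled by \cref{lemma:proj-nu-c}). The additional detail on the induction over derivations and on passing from theorems to consequences from premises matches what the paper already established for \DmbC\ and implicitly reuses here.
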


    \begin{proof}
        It follows from soundness of $\DmbCciw$ and \cref{lemma:proj-nu-c}.
    \end{proof}

In order to prove completeness for {\DmbCcl}, we will use the canonical model once again, to be constructed as in the case for {\DmbCciw}. Since the restriction occurs only on the valuations, we take only those valuations that are restricted appropriately. Hence, $W^{\DmbCcl}_{can}, R^{\DmbCcl}_{can}$ and $\nu^{\DmbCcl}_{\Delta}$ are defined as in {\DmbCciw}. Now, each $\Delta \in W^{\DmbCcl}_{can}$ is a $\psi$-saturated set in ${\DmbCcl}$. Thus we have:

 \begin{lemma}
        \label{lemma:truth-sat-cl}
        For any $\Delta \in W^{\DmbCcl}_{can}$, all statements for {\DmbCciw} stated in \cref{lemma:truth-sat-ext} hold. Besides, we add the following statement:
        \begin{description}
            \item[7.] If $\neg (\alpha \land \neg \alpha) \in \Delta$, then $\circ\alpha \in \Delta$.
        \end{description}
    \end{lemma}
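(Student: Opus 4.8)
The plan is to observe that $W^{\DmbCcl}_{can}$ consists of sets $\Delta$ that are $\psi$-saturated in $\DmbCcl$, and since $\DmbCcl$ is an axiomatic extension of $\DmbCciw$, every such $\Delta$ remains closed under the consequence relation of $\DmbCciw$; hence all statements~1.--6. and the strengthening~5$^+_1$ listed in \cref{lemma:truth-sat-ext} for $\DmbCciw$ go through verbatim, by exactly the same arguments (saturated sets are closed theories, so they contain every instance of every axiom of $\DmbCciw$ and are closed under Modus Ponens and the derived $\obl$-behaviour). This is the routine part and I would dispatch it in one sentence, referring back to the proof of \cref{lemma:truth-sat-ext}.

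The only genuinely new content is statement~7. First I would fix $\Delta \in W^{\DmbCcl}_{can}$ and assume $\neg(\alpha \land \neg\alpha) \in \Delta$. Since $\Delta$ is $\psi$-saturated in $\DmbCcl$, it is a closed theory of $\DmbCcl$, so it contains every instance of the da Costa axiom \textbf{(cl)}, in particular $\neg(\alpha \land \neg\alpha) \to {\circ}\alpha$. Applying closure under Modus Ponens with the hypothesis $\neg(\alpha \land \neg\alpha) \in \Delta$ yields ${\circ}\alpha \in \Delta$, which is exactly statement~7. So the proof is simply: \emph{all conditions are proved by using the axiom \textbf{(cl)} together with the fact that $\Delta$ is saturated, hence a closed theory}; I expect this to be the entire body of the proof, perhaps two lines, in the same laconic style as the proof of \cref{lemma:truth-sat-ext}.

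There is no real obstacle here; the subtlety — if any — is purely bookkeeping, namely making sure that the restriction imposed in \cref{def:nu_cL} (the condition forcing $\nucL_w(\alpha\land\neg\alpha)=T$ whenever $\nucL_w(\alpha)=t$) is \emph{not} needed for this lemma, which is purely syntactic and concerns membership in saturated sets rather than valuations. The semantic restriction will instead be verified when one checks, in the companion proposition analogous to \cref{prop:TruthLemma-ext}, that the canonical valuation $\nu^{\DmbCcl}_\Delta$ actually is an admissible $\DmbCcl$-valuation (i.e. satisfies the restriction of \cref{def:nu_cL}); that verification uses precisely statement~7 together with 5$^+_1$ and \cref{lemma:proj-nu-c}, but it lies outside the present lemma. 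Thus the proof of this lemma itself can be kept to a single remark invoking \textbf{(cl)} and closure under logical consequence.
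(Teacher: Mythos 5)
Your proposal is correct and matches the paper's proof exactly: the paper likewise dispatches the lemma by noting that \DmbCcl\ extends \DmbCciw, so \cref{lemma:truth-sat-ext} carries over, and statement~7 follows from axiom \textbf{(cl)} together with the fact that $\Delta$ is a closed theory. Your additional observation that the semantic restriction of \cref{def:nu_cL} is not needed here but only in the subsequent canonical-model proposition is also accurate.
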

    \begin{proof}
        Given that \DmbCcl\ extends \DmbCciw, it is an immediate consequence of \cref{lemma:truth-sat-ext}, axiom \textbf{(cl)}, and the fact that $\Delta$ is a closed theory.
    \end{proof}

    \begin{prop}
		\label{prop:TruthLemma-cl}
		The triple  $\M = \{W^{\DmbCcl}_{can}, R^{\DmbCcl}_{can}, \{\nu^{\DmbCcl}_\Delta\}_{{\Delta \in W}^{\DmbCcl}_{can}}\}$, constructed as in the case of \DmbC, is a swap Kripke model for {\DmbCcl} such that $\nu^{{\DmbCcl}}_\Delta(\varphi) \in \mathcal{D}$ if and only if $\varphi \in \Delta$ if and only if $\alpha_{(1,\Delta)}=1$.
	\end{prop}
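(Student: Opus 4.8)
The plan is to recycle the canonical-model constructions already carried out for \DmbC\ in \cref{prop:TruthLemma} and for its extensions in \cref{prop:TruthLemma-ext}, since $W^{\DmbCcl}_{can}$, $R^{\DmbCcl}_{can}$ and $\nucL_\Delta$ are defined exactly as for \DmbCciw. Only two points need a fresh check: that $R^{\DmbCcl}_{can}$ is serial, and --- this is the genuinely new ingredient, coming from the RNmatrix setting --- that each canonical valuation $\nucL_\Delta$ is \emph{admissible}, i.e.\ satisfies the extra constraint imposed in \cref{def:nu_cL}. For seriality I would simply observe that \DmbCcl\ extends \DmbC, so the axioms ($\obl-K$), ($\obl-E$) and \textbf{(bc)} are all available; hence the argument in \cref{prop:TruthLemma} applies verbatim: if $\obl\alpha \in \Delta$ for every $\alpha$, then $\obl\bot_\alpha \in \Delta$, so $\bot_\alpha \in \Delta$ by ($\obl-E$), contradicting that $\Delta$ is $\psi$-saturated; thus $\obl\alpha \notin \Delta$ for some $\alpha$, and, reasoning as in clause 6.\ of \cref{lemma:truth-sat}, $Den(\Delta) \nvdash_{\DmbCcl} \alpha$, which yields an $R^{\DmbCcl}_{can}$-successor of $\Delta$.

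Next I would check that $\M$ is already a swap Kripke model for \DmbCciw. By \cref{lemma:truth-sat-cl} every clause of \cref{lemma:truth-sat-ext} relevant to \DmbCciw\ holds for all $\Delta \in W^{\DmbCcl}_{can}$; combining these with \cref{def:can-bival-truth}, exactly as in the proof of \cref{prop:TruthLemma-ext} for the \DmbCciw\ case, shows that each $\nucL_\Delta$ satisfies clauses 1.--6.\ of \cref{def:nu_w} (with the $\circ$-clause governed by $\acirc_1$, which follows from clause $5^+_1$ of \cref{lemma:truth-sat-ext}), and that $\{\nucL_\Delta\}_\Delta$ together with $R^{\DmbCcl}_{can}$ satisfies the $\obl$-clause (seriality is used here once more).

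The one new step is to verify the restriction of \cref{def:nu_cL}, namely that $\nucL_\Delta(\alpha) = t$ forces $\nucL_\Delta(\alpha \land \neg\alpha) = T$. Suppose $\nucL_\Delta(\alpha) = t$, i.e.\ $\alpha \in \Delta$ and $\neg\alpha \in \Delta$. Then $\alpha \land \neg\alpha \in \Delta$ by clause 1.\ of \cref{lemma:truth-sat}, and $\circ\alpha \notin \Delta$ by clause $5^+_1$ of \cref{lemma:truth-sat-ext}; the contrapositive of clause 7.\ of \cref{lemma:truth-sat-cl} then gives $\neg(\alpha \land \neg\alpha) \notin \Delta$. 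By \cref{def:can-bival-truth} this means precisely $\nucL_\Delta(\alpha \land \neg\alpha) = T$, so $\M$ is a bona fide swap Kripke model for \DmbCcl. The chain of equivalences $\nucL_\Delta(\varphi) \in \D$ iff $\varphi \in \Delta$ iff $\varphi_{(1,\Delta)} = 1$ is then immediate from \cref{def:can-bival-truth} (recalling $\D = \{T,t\}$) and \cref{prop:designated-dmbc}.

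I expect the main obstacle to be precisely this admissibility check: in contrast with the pure Nmatrix sections, a canonical valuation is not automatically permitted, and the argument succeeds only because \textbf{(bc)} (through $5^+_1$) and the da Costa axiom \textbf{(cl)} (through clause 7.\ of \cref{lemma:truth-sat-cl}) jointly prevent assigning the snapshot $t$ to $\alpha \land \neg\alpha$ whenever $\alpha$ itself receives $t$. Everything else is a routine transcription of the \DmbCciw\ argument.
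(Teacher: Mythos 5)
Your proposal is correct and follows essentially the same route as the paper: reduce to the already-established \DmbCciw\ canonical-model argument, then verify the extra admissibility condition of \cref{def:nu_cL} by showing that $\alpha,\neg\alpha\in\Delta$ forces $\neg(\alpha\land\neg\alpha)\notin\Delta$ via clause 7 of \cref{lemma:truth-sat-cl}. The only cosmetic difference is that you route this last step through the contrapositive of clause 7 together with $5^+_1$, whereas the paper runs a direct reductio using \textbf{(bc)}; these are the same argument.
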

    \begin{proof}
      Observe that each valuation $\nu^{\DmbCcl}_\Delta$ is defined according to \cref{def:can-bival-truth}. Since  \DmbCcl\ extends \DmbCciw, it follows that $\M$ is a swap  Kripke model for {\DmbCciw} such that $\nu^{{\DmbCcl}}_\Delta(\varphi) \in \mathcal{D}$ if and only if $\varphi \in \Delta$ if and only if $\alpha_{(1,\Delta)}=1$. In order to show that each  $\nu^{\DmbCcl}_\Delta$  satisfies the additional condition of \cref{def:nu_cL}, suppose that  $\nu^{{\DmbCcl}}_\Delta(\alpha)=t$. By \cref{def:can-bival-truth}, $\alpha, \neg\alpha \in \Delta$ and so $\alpha \land \neg\alpha \in \Delta$. Suppose that $\neg(\alpha \land \neg\alpha) \in \Delta$. By \cref{lemma:truth-sat-cl}, $\circ\alpha \in \Delta$. But then, by axiom \textbf{(bc)}, $\beta \in \Delta$, for every formula $\beta$, a contradiction. From this, $\neg(\alpha \land \neg\alpha) \notin \Delta$, therefore $\nu^{{\DmbCcl}}_\Delta(\alpha\land \neg\alpha)=T$, by \cref{def:can-bival-truth}. This shows that $\M$ is, in fact, a swap Kripke model for \DmbCcl.      
	\end{proof}

From the previous results and the construction above, it is easy to show that completeness holds for \DmbCcl.

\begin{thm}[Completeness of \DmbCcl\ w.r.t. swap Kripke models] \ \\
	For any set $\Gamma \cup \{\varphi\} \subseteq For(\Sigma)$, if $\Gamma \vDash_{\DmbCcl} \varphi$, then $\Gamma \vdash_{\DmbCcl} \varphi$
\end{thm}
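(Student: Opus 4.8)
The plan is to mirror exactly the completeness argument already carried out for \DmbC, \textbf{L}, and extended to \DmbCcl\ via \cref{prop:TruthLemma-cl}. I would argue by contraposition: assume $\Gamma \nvdash_{\DmbCcl} \varphi$ and produce a swap Kripke model for \DmbCcl\ together with a world refuting $\varphi$ while satisfying $\Gamma$.

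\begin{proof}
Suppose, to the contrary, that $\Gamma \nvdash_{\DmbCcl} \varphi$. By \cref{prop:phi-sat}, there is a $\varphi$-saturated set $\Delta \in W^{\DmbCcl}_{can}$ with $\Gamma \subseteq \Delta$; since $\Delta \nvdash_{\DmbCcl} \varphi$, we have $\varphi \notin \Delta$. Let $\M = \langle W^{\DmbCcl}_{can}, R^{\DmbCcl}_{can}, \{\nu^{\DmbCcl}_\Theta\}_{\Theta \in W^{\DmbCcl}_{can}} \rangle$ be the canonical swap Kripke model for \DmbCcl. By \cref{prop:TruthLemma-cl}, $\M$ is indeed a swap Kripke model for \DmbCcl, and $\nu^{\DmbCcl}_\Theta(\psi) \in \mathcal{D}$ iff $\psi \in \Theta$, for every $\Theta \in W^{\DmbCcl}_{can}$ and every $\psi \in For(\Sigma)$. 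Hence $\M, \Delta \vDash \Gamma$ (as $\Gamma \subseteq \Delta$), while $\M, \Delta \nvDash \varphi$ (as $\varphi \notin \Delta$). This shows $\Gamma \nvDash_{\DmbCcl} \varphi$, contradicting the hypothesis.
\end{proof}

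\noindent\textbf{Remarks on the approach.} The only genuinely new ingredient compared with the \DmbC\ case is the seriality of $R^{\DmbCcl}_{can}$ and the verification that the canonical valuations satisfy the extra clause of \cref{def:nu_cL}; both are already discharged inside \cref{prop:TruthLemma-cl} (seriality follows verbatim from the argument of \cref{prop:TruthLemma} using $(\obl\text{-E})$, since \DmbCcl\ still contains that axiom), so nothing remains to be proved here. The step most likely to require care is precisely that embedded in \cref{prop:TruthLemma-cl}: that whenever $\nu^{\DmbCcl}_\Delta(\alpha) = t$ one has $\neg(\alpha \land \neg\alpha) \notin \Delta$, which uses axiom \textbf{(bc)} to avoid triviality — but since we are allowed to invoke \cref{prop:TruthLemma-cl}, the present proof reduces to the routine contraposition above.
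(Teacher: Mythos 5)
Your proof is correct and takes exactly the route the paper intends: the paper leaves this theorem as an immediate consequence of \cref{prop:TruthLemma-cl} and the canonical construction, and your contraposition argument via a $\varphi$-saturated $\Delta \supseteq \Gamma$ is the same as the one given for \DmbC. Your remarks correctly identify that the only new work (seriality and the extra valuation clause) is already discharged in \cref{prop:TruthLemma-cl}.
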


\section{Swap Kripke models for \DCila} \label{sect:DCila}

Consider the following axiom schemas for {\em consistency propagation}:
    \begin{description}
        \item[(ca$_{\#}$)] $({\circ} \alpha \land {\circ} \beta) \to {\circ}(\alpha \# \beta)$,  where $\# \in \{\land, \lor, \to\}$
    \end{description}
   
We now add to {\DmbC} the axioms {\bf (ci)}, {\bf (cl)}, {\bf (cf)} and {\bf (ca$_\#$)}, obtaining a logic called \DCila. Equivalently, \DCila\ is obtained from \DmbCcl\ by adding axioms  {\bf (ci)}, {\bf (cf)} and {\bf (ca$_\#$)}. The non-modal fragment of \DCila\ is called \Cila, and corresponds to da Costa logic $C_1$ presented over the signature with $\circ$  (see~\cite[Section~5.2]{Marcos2007}). Indeed,  \Cila\ is a conservative expansion of da Costa's ${C}_1$. As proved in~\cite[Theorem~11 and Corollary~6]{Avron2007a},  \Cila\ and $C_1$ are not characterizable by a single finite $N$matrix.

Based on the results presented in the previous section, as well as the characterization of \Cila\ in terms of a 3-valued $RN$matrix found in~\cite{ConiglioToledo2022}, in the sequel we will characterize  \DCila\ by means of swap Kripke models based on a suitable 3-valued  $RN$matrix for \Cila.

\begin{defn}
        \label{def:nu_Cila}
    A swap Kripke model $\M^{K-S}_{\DCila} = \langle W,R,\{\nucila_w\}_{w \in W}\rangle$ for \DCila\ is a swap Kripke model  for \DCi\ such that each valuation $\nucila_w$ satisfies, in addition, the following conditions, for $\# \in \{\land, \lor, \to\}$:
    $$\text{If }\nucila_w(\alpha) = t \text{, \ then \ } \nucila_w(\alpha \land \neg \alpha) = T$$
    $$\text{If } \nucila_w(\alpha), \nucila_w(\beta) \in \{T,F\} \text{, then } \nucila_w(\alpha \# \beta) \in \{T,F\}.$$   
  \end{defn}

Note that the first condition on  $\nucila_w$ coincides with the one for  $\nucL_w$.

The notions of satisfaction of a formula $\alpha$ in a  world $w$ of a  swap Kripke model $\mathcal{M}^{K-S}_{DCila}$ for \DCila, denoted by $\mathcal{M}^{K-S}_{\DCila}, w \vDash \alpha$, as well as the semantical consequence  of \DCila\ w.r.t.  swap Kripke models, denoted by $\vDash_{\DCila}$, are defined as in the previous cases.

The above definitions guarantee that the axioms   $\textbf{(cl)}$ and {\bf (ca$_{\#}$)}  hold. Indeed:

\begin{lemma}
        \label{lemma:proj-nu-cila}
        Any instance of the axioms   $\textbf{(cl)}$ and {\bf (ca$_{\#}$)} are true in any world of any swap Kripke model for \DCila.
    \end{lemma}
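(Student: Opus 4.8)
The plan is to show the two axiom schemas hold by reducing them to the projection properties already established for the relevant swap valuations. For \textbf{(cl)}, I observe that any swap Kripke model for \DCila\ is in particular a swap Kripke model for \DCi\ (hence for \DmbCci, hence for \DmbCciw) whose valuations additionally satisfy the restriction ``if $\nucila_w(\alpha)=t$ then $\nucila_w(\alpha \land \neg\alpha)=T$''. This is exactly the restriction imposed in \cref{def:nu_cL} for \DmbCcl. Therefore the argument of \cref{lemma:proj-nu-c} applies verbatim: from property $5\prim$ (or $5^*$) of \cref{lemma:proj-nu-ext} one gets $(\circ\alpha)_{(1,w)} = {\sim}(\alpha_{(1,w)} \sqcap \alpha_{(2,w)})$, and combining this with the $t$-restriction yields $(\circ\alpha)_{(1,w)} = (\alpha \land \neg\alpha)_{(2,w)}$, i.e. property $5^{**}$; validity of \textbf{(cl)} then follows since $\neg(\alpha \land \neg\alpha)_{(1,w)} = (\alpha \land \neg\alpha)_{(2,w)}$ by item 4 of \cref{lemma:proj-nu_w}, so $(\neg(\alpha\land\neg\alpha))_{(1,w)} \supset (\circ\alpha)_{(1,w)} = 1$.

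For \textbf{(ca$_{\#}$)}, fix $\# \in \{\land,\lor,\to\}$ and a world $w$, and assume $(\circ\alpha)_{(1,w)}=1$ and $(\circ\beta)_{(1,w)}=1$; I must show $(\circ(\alpha\#\beta))_{(1,w)}=1$. Using property $5^*$ of \cref{lemma:proj-nu-ext} (available since \DCila\ extends \DCi, which uses $\acirc_2$), $(\circ\gamma)_{(1,w)}=1$ is equivalent to $\gamma_{(1,w)} \sqcap \gamma_{(2,w)}=0$, which by the representation of $A$ in \cref{rmk:values-DmbC} says precisely $v_w(\gamma) \in \{T,F\}$, i.e. $v_w(\gamma) \neq t$. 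So the hypotheses give $\nucila_w(\alpha),\nucila_w(\beta)\in\{T,F\}$. By the second restriction in \cref{def:nu_Cila}, $\nucila_w(\alpha\#\beta)\in\{T,F\}$, hence $(\alpha\#\beta)_{(1,w)} \sqcap (\alpha\#\beta)_{(2,w)}=0$, and applying $5^*$ once more gives $(\circ(\alpha\#\beta))_{(1,w)}=1$. Thus $((\circ\alpha \land \circ\beta) \to \circ(\alpha\#\beta))_{(1,w)}=1$ by items 1 and 3 of \cref{lemma:proj-nu_w}.

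The only subtlety — and the part deserving care rather than genuine difficulty — is bookkeeping the chain of extensions: one must confirm that a swap Kripke model for \DCila\ really does inherit property $5^*$ (not merely $5\prim$), which is guaranteed because \cref{def:nu_Cila} builds on \DCi\ and \cref{def:swapval-ext} stipulates that \DCi\ valuations use $\acirc_2$. Everything else is a direct translation between the three equivalent formulations ``$(\circ\gamma)_{(1,w)}=1$'', ``$\gamma_{(1,w)}\sqcap\gamma_{(2,w)}=0$'', and ``$v_w(\gamma)\neq t$'', together with the definitional restrictions on $\nucila_w$. No new combinatorial work is needed beyond what \cref{lemma:proj-nu-c} already does for \textbf{(cl)}.
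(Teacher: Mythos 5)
Your proof is correct and follows essentially the same route as the paper: for \textbf{(cl)} it appeals to the argument of \cref{lemma:proj-nu-c} (which transfers because the first condition of \cref{def:nu_Cila} coincides with the restriction in \cref{def:nu_cL} and property $5^*$ subsumes the first-coordinate content of $5\prim$), and for \textbf{(ca$_{\#}$)} it uses the equivalence between $(\circ\gamma)_{(1,w)}=1$ and $\nucila_w(\gamma)\in\{T,F\}$ together with the second restriction of \cref{def:nu_Cila}, exactly as in the paper's proof. No gaps.
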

    \begin{proof} Concerning   $\textbf{(cl)}$, the result holds by \cref{lemma:proj-nu-c} and \cref{def:nu_Cila}. Fix now $\# \in \{\land, \lor, \to\}$ and $\nucila_w$. Observe that, for any $\alpha$, $\nucila_w(\alpha) \in \{T,F\}$ iff $\alpha_{(1,w)} \neq \alpha_{(2,w)}$ iff $\alpha_{(1,w)} \sqcap \alpha_{(2,w)}=0$  iff $(\circ \alpha)_{(1,w)}={\sim} (\alpha_{(1,w)} \sqcap \alpha_{(2,w)})=1$. From this, $\nucila_w({\circ} \alpha \land {\circ} \beta) \in  \mathcal{D}$ implies that $({\circ} \alpha \land {\circ} \beta)_{(1,w)}=1$, which implies that  $({\circ} \alpha)_{(1,w)}= ({\circ} \beta)_{(1,w)}=1$. As observed above, the latter implies that $\nucila_w(\alpha), \nucila_w(\beta) \in \{T,F\}$, and so $\nucila_w(\alpha \# \beta) \in \{T,F\}$, by \cref{def:nu_Cila}. But this implies that $({\circ} (\alpha \# \beta))_{(1,w)}=1$, that is, $\nucila_w({\circ}(\alpha \# \beta)) \in  \mathcal{D}$. This shows that any instance of axiom {\bf (ca$_{\#}$)} is true in any world of any swap Kripke model for \DCila.
    \end{proof}

    As shown in \cite{ConiglioToledo2022}, the above characterization of \Cila\ by means of a 3-valued $RN$matrix induces a decision procedure for this logic. Given that SDL is decidable (for instance, by tableaux systems), so is  its modal extension \DCila.
   
    Now,  soundness of  {\DCila} w.r.t. swap Kripke models  follows easily from the previous results.

    \begin{thm}[Soundness of \DCila\ w.r.t. swap Kripke models] \ \\
        For every $\Gamma \cup \{\varphi\} \subseteq For(\Sigma)$, if $\Gamma \vdash_{\DCila} \varphi$, then $\Gamma \vDash_{\DCila} \varphi$.
    \end{thm}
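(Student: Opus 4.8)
The plan is to mimic the soundness proofs given for \DmbC\ and its extensions, reducing everything to the work already done. Since \DCila\ is obtained from \DCi\ (equivalently, from \DmbCcl) by adding only the axioms \textbf{(cl)}, \textbf{(cf)}, \textbf{(ci)} and the propagation schemas \textbf{(ca$_\#$)}, and since any swap Kripke model for \DCila\ is by \cref{def:nu_Cila} in particular a swap Kripke model for \DCi, every valuation $\nucila_w$ automatically satisfies the projection conditions 1.--6.\ of \cref{lemma:proj-nu_w} together with the strengthened conditions $5^*$ and $7$ from \cref{lemma:proj-nu-ext}. Hence all the \cplp\ axioms, \textbf{(bc)}, \textbf{(ci)}, \textbf{(cf)}, ($\obl$-K), ($\obl$-E), Modus Ponens and $\obl$-necessitation are validated exactly as in the soundness theorems for \DmbC\ and for the family \textbf{L}.

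The only genuinely new axioms to check are \textbf{(cl)} and \textbf{(ca$_\#$)}, and both have already been handled: \cref{lemma:proj-nu-cila} establishes that every instance of \textbf{(cl)} and every instance of \textbf{(ca$_\#$)} (for $\# \in \{\land,\lor,\to\}$) is true in every world of every swap Kripke model for \DCila. So the proof reduces to the following steps, in order. First, observe that validity of a formula in all swap Kripke models for \DCila, in the sense of \cref{def:l-cons} adapted to \DCila, is preserved by Modus Ponens and by $\obl$-necessitation — the former by the definition of $\tilde\to$ (if $\alpha_{(1,w)}=1$ and $(\alpha\to\beta)_{(1,w)}=1$ then $\beta_{(1,w)}=1$ by item 3 of \cref{lemma:proj-nu_w}), the latter because if $\alpha$ is true at every world then $\alpha_{(1,w')}=1$ for all $w'$ with $wRw'$, whence $(\obl\alpha)_{(1,w)}=\bigsqcap\{\alpha_{(1,w')} : wRw'\}=1$ by item 6. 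Second, each axiom instance is true at every world: for the \cplp\ axioms, \textbf{(bc)}, \textbf{(ci)}, \textbf{(cf)}, ($\obl$-K), ($\obl$-E) invoke the corresponding arguments from the soundness proofs for \DmbC\ and \textbf{L} verbatim (they only use conditions 1.--7.\ which hold for $\nucila_w$); for \textbf{(cl)} and \textbf{(ca$_\#$)} invoke \cref{lemma:proj-nu-cila}. Third, conclude by induction on the length of a derivation that $\Gamma\vdash_{\DCila}\varphi$ implies $\Gamma\vDash_{\DCila}\varphi$: if $\vdash_{\DCila}\varphi$ then $\varphi$ is true at every world of every model, so in particular at every world satisfying $\Gamma$; otherwise there are $\gamma_1,\dots,\gamma_k\in\Gamma$ with $\vdash_{\DCila}(\gamma_1\land\dots\land\gamma_k)\to\varphi$, and if a world $w$ satisfies $\Gamma$ then $(\gamma_1\land\dots\land\gamma_k)_{(1,w)}=1$ and $((\gamma_1\land\dots\land\gamma_k)\to\varphi)_{(1,w)}=1$, whence $\varphi_{(1,w)}=1$.

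There is no real obstacle here: the statement is a routine consolidation. The only point requiring a modicum of care is making sure that the adapted notion of derivation for \DCila\ (premises combined by conjunction, as in \cref{def:derivations}) interacts correctly with the semantic consequence relation, i.e.\ that $\obl$-necessitation is only applied to theorems — but this is exactly the same bookkeeping already carried out for \DmbC, and the fact that necessitation preserves truth-at-every-world (not merely truth-at-$w$) is what makes the global rule sound. So the proof can legitimately be compressed to: ``It follows from the soundness of \DCi\ together with \cref{lemma:proj-nu-cila},'' paralleling the one-line soundness proof already given for \DmbCcl.

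\begin{proof}
    By \cref{def:nu_Cila}, every swap Kripke model for \DCila\ is in particular a swap Kripke model for \DCi; hence each valuation $\nucila_w$ satisfies conditions 1.--6.\ of \cref{lemma:proj-nu_w} as well as conditions $5^*$ and $7$ of \cref{lemma:proj-nu-ext}. Consequently, all \cplp\ axioms and the axioms \textbf{(bc)}, ($\obl$-K), ($\obl$-E), \textbf{(ci)} and \textbf{(cf)} are true in every world of every swap Kripke model for \DCila, by the very same arguments used in the soundness proofs for \DmbC\ and for \textbf{L}. By \cref{lemma:proj-nu-cila}, every instance of \textbf{(cl)} and of {\bf (ca$_{\#}$)} is likewise true in every world of every such model.

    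It remains to observe that the two inference rules preserve the property of being true in every world of every swap Kripke model for \DCila. For Modus Ponens, if $\alpha_{(1,w)}=1$ and $(\alpha\to\beta)_{(1,w)}=1$, then $\beta_{(1,w)}=1$ by item 3 of \cref{lemma:proj-nu_w}. For $\obl$-necessitation, if $\alpha$ is true in every world, then in particular $\alpha_{(1,w\prim)}=1$ for every $w\prim$ with $wRw\prim$, and so $(\obl\alpha)_{(1,w)}=\bigsqcap\{\alpha_{(1,w\prim)} \ : \ wRw\prim\}=1$ by item 6 of \cref{lemma:proj-nu_w}.

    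By induction on the length of a derivation, it follows that $\vdash_{\DCila}\varphi$ implies that $\varphi$ is true in every world of every swap Kripke model for \DCila. Finally, suppose $\Gamma\vdash_{\DCila}\varphi$. If $\vdash_{\DCila}\varphi$, the conclusion is immediate. Otherwise, there are $\gamma_1,\ldots,\gamma_k\in\Gamma$ with $\vdash_{\DCila}(\gamma_1\land\ldots\land\gamma_k)\to\varphi$; if $\mathcal{M}^{K-S}_{DCila},w\vDash\Gamma$, then $(\gamma_1\land\ldots\land\gamma_k)_{(1,w)}=1$ by item 1 of \cref{lemma:proj-nu_w}, and $((\gamma_1\land\ldots\land\gamma_k)\to\varphi)_{(1,w)}=1$, so $\varphi_{(1,w)}=1$, i.e., $\mathcal{M}^{K-S}_{DCila},w\vDash\varphi$. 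Hence $\Gamma\vDash_{\DCila}\varphi$.
\end{proof}
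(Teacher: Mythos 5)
Your proof is correct and follows essentially the same route as the paper, which simply states that soundness of \DCila\ follows from soundness of \DCi\ together with \cref{lemma:proj-nu-cila}. Your version merely spells out the bookkeeping (rule preservation and the reduction of $\Gamma\vdash_{\DCila}\varphi$ to a theorem via \cref{def:derivations}) that the paper leaves implicit.
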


    \begin{proof}
        It follows from soundness of $\DCi$ and \cref{lemma:proj-nu-cila}.
    \end{proof}

    The proof of completeness is a straightforward adaptation of the case for $\DmbCcl$, by building the canonical model as in the case for {\DCi}, and by imposing suitable restrictions on the valuations. Thus, $W^{\DCila}_{can}, R^{\DCila}_{can}$ and $\nu^{\DCila}_{\Delta}$are defined as in {\DCi}, but now each $\Delta \in W^{\DCila}_{can}$ is a $\psi$-saturated set in ${\DCila}$. 

 \begin{lemma}
        \label{lemma:truth-sat-cila}
        For any $\Delta \in W^{\DCila}_{can}$, all statements for {\DCi} stated in \cref{lemma:truth-sat-ext} hold. Besides, $\Delta$ satisfies the following statements:
        \begin{description}
            \item[7.] If $\neg (\alpha \land \neg \alpha) \in \Delta$, then $\circ\alpha \in \Delta$.
            \item[8.] If ${\circ} \alpha, {\circ} \beta \in \Delta$, then ${\circ}(\alpha \# \beta) \in \Delta$,  where $\# \in \{\land, \lor, \to\}$.
        \end{description}
    \end{lemma}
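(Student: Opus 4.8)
The plan is to prove \cref{lemma:truth-sat-cila} exactly as the analogous lemmas (\cref{lemma:truth-sat-ext} and \cref{lemma:truth-sat-cl}) were proved: since \DCila\ extends \DCi\ by the addition of the axioms \textbf{(cl)} and \textbf{(ca$_\#$)}, and since every $\Delta \in W^{\DCila}_{can}$ is $\psi$-saturated in \DCila\ and hence (by \cref{prop:phi-sat}) is a closed theory containing all instances of the axioms of \DCila\ and closed under Modus Ponens and under logical consequence, the required statements follow by routine theory-closure arguments.

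Concretely, I would first observe that statements 1.--6. (and the extra conditions $5^+_1$, $5^+_2$, $5^+_3$ inherited from \DCi) hold verbatim, because \DCila\ contains all axioms of \DCi\ and $\Delta$ is a closed theory, so the proofs given for \cref{lemma:truth-sat-ext} apply unchanged. Then, for statement 7., I would argue: suppose $\neg(\alpha \land \neg\alpha) \in \Delta$; since \DCila\ contains axiom \textbf{(cl)}, the formula $\neg(\alpha \land \neg\alpha) \to {\circ}\alpha$ belongs to $\Delta$, and closure under Modus Ponens yields ${\circ}\alpha \in \Delta$ --- this is literally the proof of \cref{lemma:truth-sat-cl}. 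For statement 8., I would argue similarly: fix $\# \in \{\land, \lor, \to\}$ and suppose ${\circ}\alpha, {\circ}\beta \in \Delta$; by item 1. of \cref{lemma:truth-sat} we get ${\circ}\alpha \land {\circ}\beta \in \Delta$, and since \DCila\ contains axiom \textbf{(ca$_\#$)}, the implication $({\circ}\alpha \land {\circ}\beta) \to {\circ}(\alpha \# \beta)$ is in $\Delta$; closure under Modus Ponens then gives ${\circ}(\alpha \# \beta) \in \Delta$.

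There is no real obstacle here: the only point requiring the slightest care is confirming that the inherited statements $5^+_2$ and $5^+_3$ (the \textbf{(ci)} and \textbf{(cf)} conditions of \DCi) still go through in the larger logic, which is immediate since \DCila\ contains \textbf{(ci)} and \textbf{(cf)} among its axioms. Accordingly, the proof is a one-line remark that all conditions are obtained by invoking the respective new axioms of \DCila\ together with the fact that $\Delta$ is saturated (hence a closed theory closed under Modus Ponens), exactly as in the proof of \cref{lemma:truth-sat-ext}.
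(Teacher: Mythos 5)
Your proposal is correct and matches the paper's own argument, which likewise reduces everything to the facts that \DCila\ extends \DCi, that $\Delta$ is a closed theory, and that axioms \textbf{(cl)} and \textbf{(ca$_\#$)} together with closure under Modus Ponens yield statements 7.\ and 8. You merely spell out the routine steps (forming ${\circ}\alpha \land {\circ}\beta$ via item 1.\ before detaching) that the paper leaves implicit.
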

    \begin{proof}
        \DmbCcl\ extends \DCi. From this, the result is an immediate consequence of \cref{lemma:truth-sat-ext}, axioms \textbf{(cl)} and  {\bf (ca$_\#$)}, and the fact that $\Delta$ is a closed theory.
    \end{proof}

    \begin{prop}
		\label{prop:TruthLemma-cila}
		The triple  $\M = \{W^{\DCila}_{can}, R^{\DCila}_{can}, \{\nu^{\DCila}_\Delta\}_{{\Delta \in W}^{\DCila}_{can}}\}$, constructed as in the case of \DmbC, is a swap Kripke model for {\DCila} such that $\nu^{{\DCila}}_\Delta(\varphi)
 \in \mathcal{D}$ if and only if $\varphi \in \Delta$ if and only if $\alpha_{(1,\Delta)}=1$.
	\end{prop}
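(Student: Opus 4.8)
The plan is to mirror the argument given for \cref{prop:TruthLemma-cl}, using \cref{lemma:truth-sat-cila} in place of \cref{lemma:truth-sat-cl}. First I would recall that, by construction, $W^{\DCila}_{can}, R^{\DCila}_{can}$ and each $\nu^{\DCila}_\Delta$ are defined exactly as for \DCi; since \DCila\ extends \DCi, \cref{prop:TruthLemma-ext} (applied to the logic \DCi) already guarantees that $\M$ is a swap Kripke model for \DCi\ and that $\nu^{\DCila}_\Delta(\varphi) \in \mathcal{D}$ iff $\varphi \in \Delta$ iff $\alpha_{(1,\Delta)}=1$. It remains only to verify that each valuation $\nu^{\DCila}_\Delta$ satisfies the two extra restrictions imposed in \cref{def:nu_Cila}.

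For the first restriction, suppose $\nu^{\DCila}_\Delta(\alpha)=t$. By \cref{def:can-bival-truth} this means $\alpha,\neg\alpha \in \Delta$, hence $\alpha \land \neg\alpha \in \Delta$ by item~1 of \cref{lemma:truth-sat-ext}. If we had $\neg(\alpha \land \neg\alpha) \in \Delta$, then statement~7 of \cref{lemma:truth-sat-cila} would give $\circ\alpha \in \Delta$; together with $\alpha,\neg\alpha \in \Delta$ and axiom \textbf{(bc)} (plus closure of $\Delta$ under logical consequence), this yields $\beta \in \Delta$ for every $\beta$, contradicting that $\Delta$ is $\psi$-saturated. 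Hence $\neg(\alpha \land \neg\alpha) \notin \Delta$, so $\nu^{\DCila}_\Delta(\alpha \land \neg\alpha)=T$ by \cref{def:can-bival-truth} --- this is precisely the first condition of \cref{def:nu_Cila} (and coincides with the condition verified in \cref{prop:TruthLemma-cl}).

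For the second restriction, fix $\# \in \{\land,\lor,\to\}$ and suppose $\nu^{\DCila}_\Delta(\alpha), \nu^{\DCila}_\Delta(\beta) \in \{T,F\}$. By \cref{def:can-bival-truth}, $\nu^{\DCila}_\Delta(\alpha) \in \{T,F\}$ means that it is not the case that both $\alpha \in \Delta$ and $\neg\alpha \in \Delta$; by condition 5$^+_1$ of \cref{lemma:truth-sat-ext} (available since \DCila\ extends \DmbCciw) this is equivalent to $\circ\alpha \in \Delta$, and likewise $\circ\beta \in \Delta$. Then statement~8 of \cref{lemma:truth-sat-cila} gives $\circ(\alpha \# \beta) \in \Delta$, which by 5$^+_1$ again forces that $\alpha\#\beta$ and $\neg(\alpha\#\beta)$ are not both in $\Delta$, i.e.\ $\nu^{\DCila}_\Delta(\alpha \# \beta) \in \{T,F\}$. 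This establishes the second condition of \cref{def:nu_Cila}, so $\M$ is indeed a swap Kripke model for \DCila.

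The only mildly delicate point --- and the place I would be most careful --- is the translation between the set-membership description of the valuations (via \cref{def:can-bival-truth}) and the ``projection'' description ($\alpha_{(1,\Delta)}$, membership in $\{T,F\}$, etc.), together with keeping track of which auxiliary condition (5$^+_1$, 5$^+_2$, 5$^+_3$, 7, 8) is being invoked and from which containing logic it is inherited. Once that bookkeeping is laid out cleanly, the argument is routine; no genuine obstacle arises beyond it, since the substantive work has already been done in \cref{lemma:truth-sat-cila}.
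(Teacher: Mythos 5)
Your proposal is correct and follows essentially the same route as the paper's proof: inherit the \DCi\ structure, verify the first restriction exactly as in the \DmbCcl\ case, and reduce the second restriction to the equivalence ``$\nu^{\DCila}_\Delta(\alpha) \in \{T,F\}$ iff ${\circ}\alpha \in \Delta$'' before applying statement~8 of \cref{lemma:truth-sat-cila}. The only cosmetic difference is that you derive that equivalence from condition 5$^+_1$ (inherited from \DmbCciw), whereas the paper proves it directly as a ``Fact'' using axioms \textbf{(cl)} and \textbf{(bc)}; both are available in \DCila, so nothing is lost.
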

    \begin{proof}
      Notice that each valuation $\nu^{\DCila}_\Delta$ is defined according to \cref{def:can-bival-truth}. Given that  \DCila\ extends \DCi, it follows that $\M$ is a swap  Kripke model for {\DCi} such that $\nu^{{\DCila}}_\Delta(\varphi) \in \mathcal{D}$ if and only if $\varphi \in \Delta$ if and only if $\alpha_{(1,\Delta)}=1$. Let us prove now that every  $\nu^{\DCila}_\Delta$  satisfies the additional conditions of \cref{def:nu_Cila}. The first condition is proved analogously to the case for \DmbCcl\ (see the proof of \cref{prop:TruthLemma-cl}). In order to prove the second condition  of \cref{def:nu_Cila}, observe first the following:\\[1mm]
      {\bf Fact:}  $\nu^{\DCila}_\Delta(\alpha) \in \{T,F\}$ if and only if  $\circ\alpha \in \Delta$.\\[1mm]
      Indeed, suppose first that  $\nu^{\DCila}_\Delta(\alpha) \in \{T,F\}$. By  \cref{def:can-bival-truth}, either $\alpha \notin \Delta$ or  $\neg\alpha \notin \Delta$. In both cases, $\alpha \land \neg \alpha \notin \Delta$, hence  $\neg(\alpha \land \neg \alpha) \in \Delta$. By axiom {\bf (cl)} and the properties of $\Delta$, $\circ\alpha \in \Delta$. Conversely, suppose that $\circ\alpha \in \Delta$.  By axiom {\bf (bc)} and the properties of $\Delta$, either $\alpha \notin \Delta$ or $\neg\alpha \notin \Delta$. By \cref{def:can-bival-truth}, $\nu^{\DCila}_\Delta(\alpha) \in \{T,F\}$.

      Fix now  $\# \in \{\land, \lor, \to\}$, and suppose that $ \nucila_\Delta(\alpha), \nucila_\Delta(\beta) \in \{T,F\}$. By the {\bf Fact}, ${\circ} \alpha, {\circ} \beta \in \Delta$. By axiom  {\bf (ca$_{\#}$)} and by taking into account that $\Delta$ is a closed theory, ${\circ}(\alpha \# \beta) \in \Delta$. By the {\bf Fact} once again, we infer that $\nucila_\Delta(\alpha \# \beta) \in \{T,F\}$.
      
      This shows that $\M$ is, in fact, a swap Kripke model for \DCila.      
	\end{proof}

From the previous results and the construction above, it is easy to show that completeness holds for \DCila.

    \begin{thm}[Completeness of \DCila\ w.r.t. swap Kripke models] \ \\
	For any set $\Gamma \cup \{\varphi\} \subseteq For(\Sigma)$, if $\Gamma \vDash_{\DCila} \varphi$, then $\Gamma \vdash_{\DCila} \varphi$.
    \end{thm}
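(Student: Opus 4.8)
The statement to prove is completeness of \DCila\ with respect to swap Kripke models. Following the pattern used for \DmbC, \textbf{L}, and \DmbCcl, the plan is to argue by contraposition. Assume $\Gamma \nvdash_{\DCila} \varphi$. By \cref{prop:phi-sat}, which the excerpt asserts holds for all logics under consideration including \DCila, there exists a $\varphi$-saturated set $\Delta \in W^{\DCila}_{can}$ with $\Gamma \subseteq \Delta$; since $\Delta \nvdash_{\DCila} \varphi$ and $\Delta$ is a closed theory, $\varphi \notin \Delta$. The canonical swap Kripke model $\M = \langle W^{\DCila}_{can}, R^{\DCila}_{can}, \{\nu^{\DCila}_\Delta\}_{\Delta} \rangle$ is, by \cref{prop:TruthLemma-cila}, a genuine swap Kripke model for \DCila\ with the property that $\nu^{\DCila}_\Delta(\psi) \in \D$ iff $\psi \in \Delta$. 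Hence $\M, \Delta \vDash \Gamma$ (because $\Gamma \subseteq \Delta$) while $\M, \Delta \nvDash \varphi$ (because $\varphi \notin \Delta$), so $\Gamma \nvDash_{\DCila} \varphi$, which is the contrapositive of the claim.

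\textbf{Key steps in order.} First, recall the construction of $W^{\DCila}_{can}$, $R^{\DCila}_{can}$, and $\nu^{\DCila}_\Delta$, which are defined exactly as for \DCi\ but with saturation taken in \DCila. Second, invoke \cref{prop:phi-sat} to obtain the $\varphi$-saturated extension $\Delta$ of $\Gamma$. Third, invoke \cref{prop:TruthLemma-cila} to know that $\M$ is a legitimate swap Kripke model for \DCila\ and that the designated-value condition matches membership in $\Delta$. Fourth, conclude via \cref{def:truth} (adapted to \DCila) that $\M, \Delta \vDash \Gamma$ and $\M, \Delta \nvDash \varphi$, and read off $\Gamma \nvDash_{\DCila} \varphi$ from \cref{def:l-cons} (adapted to \DCila).

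\textbf{Main obstacle.} All the genuine work has already been discharged in \cref{prop:TruthLemma-cila}, where one verifies that the canonical valuations respect the extra restrictions of \cref{def:nu_Cila} (the $t \mapsto T$ condition on $\alpha \land \neg\alpha$ and the closure condition for $\#$ on $\{T,F\}$), using axioms \textbf{(cl)}, \textbf{(bc)}, and \textbf{(ca$_\#$)} together with the {\bf Fact} that $\nu^{\DCila}_\Delta(\alpha) \in \{T,F\}$ iff $\circ\alpha \in \Delta$. Given that proposition, the completeness proof itself is a routine contraposition, identical in structure to the \DmbC\ and \DmbCcl\ cases; so I would simply write "This is proved in an analogous way to the \DmbCcl\ case" or spell out the four-line argument above. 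The only point requiring the slightest care is making sure \cref{prop:phi-sat} genuinely applies to \DCila, i.e. that \DCila\ is Tarskian and finitary — but this is guaranteed since it is an axiomatic extension of \DmbC\ with the same two inference rules, and the excerpt already flags that the property "holds for \DmbC\ and all the other logics to be considered in this paper."
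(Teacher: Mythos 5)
Your proposal is correct and follows exactly the route the paper intends: the paper leaves this proof implicit ("it is easy to show"), relying on \cref{prop:TruthLemma-cila} and the canonical construction, and your four-step contraposition is precisely the argument that fills in that gap, matching the explicit proofs given for \DmbC\ and the logics \textbf{L}. No issues.
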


    \section{The pioneering system ${C}^D_1$} \label{sect:C1D}

   We highlight the fact that {\Cila} is a conservative expansion of ${C}_1$ by adding the connective $\circ$. While {\Cila} has $\circ$ in its signature, ${C}_1$ defines consistency in terms of non-contradictoriness. That is to say, $C_1$ is defined over the  signature $\Sigma^{C_1} = \{\to , \neg, \lor, \land\}$ such that $\alpha^\circ := \neg (\alpha \land \neg \alpha)$, for any $\alpha \in For(\Sigma^{C_1})$. However, it is easy to show that if we substitute any appearance of $\circ \alpha$ in the axioms or rules for {\Cila} for $\alpha^\circ$, we get ${C}_1$. Moreover, the valid inferences in \Cila\  in the signature $\Sigma^{C_1}$ coincide with the ones in $C_1$ (see~\cite[Theorem~110]{Marcos2007}). Also noticeable is the treatment of strong negation in ${C}_1$, usually defined as $\cneg \alpha := \neg \alpha \land \alpha^\circ$. Let $\Sigma^{C_1}_D := \{\to , \neg, \lor, \land, \obl\}$. Because of the close relationship between  \Cila\ and $C_1$, the $\Sigma^{C_1}_D$-reduct of the swap Kripke models for \DCila\ characterize the deontic expansion \DCone\ of $C_1$, defined over $\Sigma^{C_1}_D$ by adding to $C_1$ the modal deontic axioms of \cref{def:DmbC},  but now by considering $\bot_\alpha:=(\alpha \land  \neg \alpha) \land \alpha^\circ$.
   Observe that, in $C_1$, axioms {\bf (bc)} and {\bf (ca$_{\#}$)} are now replaced  by the following:

	\begin{description}
		\item[(bc)'] $\alpha^\circ \to (\alpha \to (\neg \alpha \to \beta))$
		\item[(ca$_{\#}$)'] $(\alpha^\circ \land \beta^\circ) \to (\alpha \# \beta)^\circ$,  where $\# \in \{\land, \lor, \to\}$
	\end{description}

   In turn, axioms  {\bf (cl)} and {\bf (ci)} are not considered in $C_1$ (since they hold by the very definition of $(\cdot)^\circ$, as well as by axiom {\bf (cf)}).
   
   One other striking fact is that the pioneering paraconsistent deontic system  ${C}^D_1$ (also defined over $\Sigma^{C_1}_D$) proposed in~\cite{CostaCarnielli1986} has one more axiom in addition to the ones of \DCone, namely

	\begin{description}
		\item[(ca$_{\obl}$)'] $\alpha^\circ \to (\obl\alpha)^{\circ}$
	\end{description}

   Hence, we should be able to accommodate axiom {\bf  (ca$_{\obl}$)'} in our system. However, the system proposed so far does not account for it. Consider the following possible model of ${C}^D_1$. Each node $x$ shows a set $\Gamma$ as a label. This indicates that for every $\varphi \in \Gamma$, $v^{{C}_1}_{\Gamma}(\varphi) \in \D$. We will use an analogous notation in a few more examples: 
    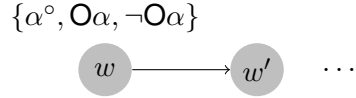
\begin{figure}[H]
    \centering
   \begin{tikzpicture}
       \begin{scope}%
        \node[vertex, label=above: {$\{\alpha^\circ,\obl \alpha, \neg \obl\alpha\}$}] (0) at (0,0) {$w$};
        \node[vertex] (2) at (2,0) {$w\prim$};
        \draw [->] (0) -- (2);
        \node (5) at (4,0) {};
        \path (2) -- (5) node [midway] {$\dots$};
  \end{scope}
   \end{tikzpicture}
   \caption{A representation of a counterexample to {\bf (ca$_{\obl}$)'}, when not adding the suitable restrictions to the models.}
    \end{figure}

   According to this model, $(\alpha^\circ)_{(1,w)} = 1$. But this does not say anything about any of the worlds $w\prim$ accessible to $w$. Notice that $v^{\DCone}_w((\obl \alpha)^{\circ}) \notin \D$ if and only if $v^{\DCone}_w(\obl \alpha) \in \D$ and $v^{\DCone}_w(\neg \obl \alpha) \in \D$. But this is perfectly possible, since when assigning a truth value for $\obl \alpha$, only the first coordinate of the snapshot is determined. Since the first coordinate of $v^{\DCone}_w(\neg \obl \alpha)$ is given by reading the second coordinate of $v^{\DCone}_w(\obl \alpha)$, then $v^{\DCone}_w(\obl \alpha) = (1,1)=t$ is the value that falsifies the formula correspondent to the axiom. Hence, in order to give a proper semantics for the original ${C}^D_1$, we need one more restriction. 
   
   From now on, for ease of notation, we will write $\nucone_w$ instead of $v^{C^D_1}_w$.

\begin{defn}
        \label{def:restriction-c1}
    A swap Kripke model $\M^{K-S}_{C^D_1} = \langle W,R,\{\nucone_w\}_{w \in W}\rangle$ for ${C}^D_1$ is a swap Kripke model  for \DCila\ (over the $\Sigma^{C_1}_D$-reduct) such that each valuation $\nucone_w$ satisfies, in addition, the following condition, for $\# \in \{\land, \lor, \to\}$:
     $$\text{If } \nucone_w(\alpha) \in \{T,F\} \text{, then } \nucone_w(\obl \alpha) \in \{T,F\}.$$ 
  \end{defn}

\begin{rmk} \label{rem:val-C1D}
    Observe that a family of maps $\nucone_w:For(\Sigma^{C_1}_D) \to A$ satisfies \cref{def:restriction-c1} iff it  satisfies the following conditions, for every $\alpha,\beta \in For(\Sigma^{C_1}_D)$ and $\# \in \{\land, \lor, \to\}$:
\begin{enumerate}
    \item items 1.-3. and 6. of \cref{def:nu_w};
    \item $\nucone_w({\neg} \alpha) \in \aneg_1 \nucone_w (\alpha)$, where $\aneg_1$ is as in \cref{def:swap-negbC};
    \item if $\nucone_w(\alpha) = t$, then $\nucone_w(\alpha \land \neg \alpha) = T$;  
    \item if  $\nucone_w(\alpha), \nucone_w(\beta) \in \{T,F\}$, then $\nucone_w(\alpha \# \beta) \in \{T,F\}$;
    \item if  $\nucone_w(\alpha) \in \{T,F\}$, then $\nucone_w(\obl \alpha) \in \{T,F\}$.
\end{enumerate}
\end{rmk}

    \begin{thm}[Soundness of ${C}^D_1$  w.r.t. swap Kripke models] \ \\
        For every $\Gamma \cup \{\varphi\} \subseteq For(\Sigma^{C_1}_D)$, if $\Gamma \vdash_{C^D_1} \varphi$, then $\Gamma \vDash_{C^D_1} \varphi.$
    \end{thm}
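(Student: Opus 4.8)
The plan is to reuse the two-step template of the earlier soundness proofs: first verify that every axiom of $C^D_1$ is true at every world of every swap Kripke model $\M^{K-S}_{C^D_1}$, then verify that Modus Ponens preserves truth at a world and $\obl$-necessitation preserves truth at all worlds, and finally pass from theorems to $\Gamma\vDash_{C^D_1}\varphi$ exactly as before: by \cref{def:derivations} a derivation of $\varphi$ from $\Gamma$ either yields a theorem $\varphi$ or a theorem $(\gamma_1\land\dots\land\gamma_k)\to\varphi$ with all $\gamma_i\in\Gamma$, and items~1 and~3 of \cref{lemma:proj-nu_w} then force $\varphi_{(1,w)}=1$ at any $w$ where $\Gamma$ is true. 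Modus Ponens and $\obl$-necessitation are handled verbatim as in the soundness proof of \DmbC, because they only concern the first coordinates of the valuations and these obey the same clauses here by \cref{rem:val-C1D}.

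The axioms inherited from $C_1$ together with $(\obl\text{-}K)$ and $(\obl\text{-}E)$ essentially come for free: by \cref{def:restriction-c1} a swap Kripke model for $C^D_1$ is (the $\Sigma^{C_1}_D$-reduct of) a swap Kripke model for \DCila, and since \Cila\ is a conservative expansion of $C_1$ and \DCila\ proves $\circ\alpha \sse \neg(\alpha\land\neg\alpha)$, every axiom of \DCone\ --- i.e.\ every $C_1$-axiom read with $\alpha^\circ := \neg(\alpha\land\neg\alpha)$, and $(\obl\text{-}K)$, $(\obl\text{-}E)$ with $\bot_\alpha := (\alpha\land\neg\alpha)\land\alpha^\circ$ --- is a theorem of \DCila, hence valid on all swap Kripke models for \DCila\ by the soundness theorem for that logic, hence valid on all swap Kripke models for $C^D_1$. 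For a self-contained argument, $(\mathbf{bc})'$, $(\mathbf{cf})$ and $(\mathbf{ca}_{\#})'$ can also be checked directly, copying the arguments for \DmbC, \DbC\ and \cref{lemma:proj-nu-cila} with $\alpha^\circ$ in place of $\circ\alpha$. The single ingredient these direct checks require is the following analogue of statement $5^{**}$ of \cref{lemma:proj-nu-c}, which I would isolate as a small lemma: for every $\alpha$ and $w$,
$$(\alpha^\circ)_{(1,w)} \;=\; {\sim}\big(\alpha_{(1,w)}\sqcap\alpha_{(2,w)}\big), \qquad\text{i.e.}\qquad (\alpha^\circ)_{(1,w)}=1 \ \Longleftrightarrow\ \nucone_w(\alpha)\in\{T,F\}.$$
This follows from a two-case split: item~4 of \cref{lemma:proj-nu_w} applied to $\alpha^\circ = \neg(\alpha\land\neg\alpha)$ gives $(\alpha^\circ)_{(1,w)} = (\alpha\land\neg\alpha)_{(2,w)}$; if $\alpha_{(1,w)}\sqcap\alpha_{(2,w)}=0$ then $(\alpha\land\neg\alpha)_{(1,w)}=0$, forcing $(\alpha\land\neg\alpha)_{(2,w)}=1$ since $(0,0)\notin A$; and if $\alpha_{(1,w)}\sqcap\alpha_{(2,w)}=1$ then $\nucone_w(\alpha)=t$, so condition~3 of \cref{rem:val-C1D} forces $\nucone_w(\alpha\land\neg\alpha)=T$, whence $(\alpha\land\neg\alpha)_{(2,w)}=0$. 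The same case split shows $(\bot_\alpha)_{(1,w)} = \big(\alpha_{(1,w)}\sqcap\alpha_{(2,w)}\big)\sqcap(\alpha^\circ)_{(1,w)} = 0$ for every $w$, so $(\obl\text{-}E)$ follows from the seriality argument of \cref{rem:O-E}.

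The genuinely new axiom is $(\mathbf{ca}_{\obl})'$, namely $\alpha^\circ \to (\obl\alpha)^\circ$. Its truth at $w$ says that $(\alpha^\circ)_{(1,w)}=1$ implies $((\obl\alpha)^\circ)_{(1,w)}=1$; applying the displayed identity once to $\alpha$ and once to $\obl\alpha$, this is precisely ``$\nucone_w(\alpha)\in\{T,F\}$ implies $\nucone_w(\obl\alpha)\in\{T,F\}$'', which is exactly the restriction built into \cref{def:restriction-c1} (condition~5 of \cref{rem:val-C1D}). So this axiom is valid by design, and the counterexample displayed just before \cref{def:restriction-c1} is precisely what the restriction rules out. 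I expect the only delicate point to be the projection identity for $(\cdot)^\circ$: one must notice that, although the second coordinate of $\nucone_w(\alpha\land\neg\alpha)$ is in general left nondeterministic by the underlying Nmatrix, the $t$-restriction inherited from \cref{def:nu_Cila} pins it down whenever the first coordinate is $1$, which is exactly what makes $(\alpha^\circ)_{(1,w)}$ behave like the deterministic value ${\sim}(\alpha_{(1,w)}\sqcap\alpha_{(2,w)})$. Once that observation is in place, soundness of $(\mathbf{ca}_{\obl})'$ is immediate, and every other axiom is either bookkeeping or a direct appeal to the soundness theorems already established for \DmbC, \DbC, \DmbCcl\ and \DCila.
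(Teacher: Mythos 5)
Your proposal is correct and follows essentially the same route as the paper: the paper's proof also isolates the key \textbf{Fact} that $\nucone_w(\alpha^\circ) \in \D$ iff $\nucone_w(\alpha)\in\{T,F\}$ (your displayed projection identity, proved by the same case split using the $t$-restriction inherited from \cref{def:nu_Cila}), uses it together with the restriction of \cref{def:restriction-c1} to validate \textbf{(ca$_{\obl}$)'}, checks \textbf{(bc)'} and \textbf{(ca$_{\#}$)'} from the same Fact, and delegates the remaining axioms to the soundness of \DCila. No substantive differences to report.
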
  
    \begin{proof} Let $\nucone_w$ be as in \cref{def:restriction-c1}. Let us start by showing the following: \\[1mm]
   {\bf Fact:}  $\nucone_w(\alpha^\circ) \in \D$ if and only if   $\nucone_w(\alpha) \in \{T,F\}$.\\[1mm]
   Indeed, suppose that $\nucone_w(\alpha^\circ) \in \D$. Then,  $(\alpha^\circ)_{(1,w)}=1$. Recalling that $\alpha^\circ=\neg(\alpha \land \neg\alpha)$, it follows that $(\alpha \land \neg\alpha)_{(2,w)}=1$, and so  $\nucone_w(\alpha \land \neg \alpha) \neq T$. By the first condition in  \cref{def:nu_Cila} we infer that $\nucone_w(\alpha) \neq t$, hence $\nucone_w(\alpha) \in \{T,F\}$. Conversely, if $\nucone_w(\alpha) \in \{T,F\}$ then  $\alpha_{(1,w)} \sqcap \alpha_{(2,w)}=(\alpha \land \neg\alpha)_{(1,w)}=0$. From this, $(\alpha^\circ)_{(1,w)}=(\alpha \land \neg\alpha)_{(2,w)}=1$ and so $\nucone_w(\alpha^\circ) \in \D$.
   
   Now, assume that $\nucone_w(\alpha^\circ) \in \D$. By the {\bf Fact},  $\nucone_w(\alpha) \in \{T,F\}$. By the  condition in \cref{def:restriction-c1}, $\nucone_w(\obl \alpha) \in \{T,F\}$. Using the {\bf Fact} once again, we infer that  $\nucone_w((\obl\alpha)^\circ) \in \D$, and so axiom  {\bf  (ca$_{\obl}$)'} is valid w.r.t. swap Kripke models  for  ${C}^D_1$. 

   The validity of axiom {\bf (bc)'} follows immediately from the {\bf Fact}. In turn, the validity of  axiom {\bf (ca$_{\#}$)'} is a consequence of the first condition stated in \cref{def:nu_Cila} and  the {\bf Fact}.
   The validity of the other axioms of ${C}^D_1$ follows from the soundness of \DCila\ w.r.t. swap Kripke models.        
    \end{proof}

In order to prove completeness of ${C}^D_1$ w.r.t. swap Kripke models, some adaptations are required in the construction of the canonical  swap Kripke model and the canonical valuations.

First, observe that the $\varphi$-saturated sets in ${C}^D_1$ are subsets of $For(\Sigma^{C_1}_D)$ ($\circ$ is now a defined connective).

 \begin{lemma}
        \label{lemma:truth-sat-C1D} Let $\Delta \subseteq For(\Sigma^{C_1}_D)$ be a $\varphi$-saturated set in ${C}^D_1$. Then, it satisfies  the following properties, for every $\alpha,\beta \in For(\Sigma^{C_1}_D)$:
        \begin{description}
            \item[I.] items 1.-4. and 6. of \cref{lemma:truth-sat};
            \item[II.]  item 5$^+_3$. of \cref{lemma:truth-sat-ext};
            \item[III.]  $\alpha^\circ \in \Delta$  iff $\alpha \notin \Delta$ or $\neg\alpha \notin \Delta$;
            \item[IV.]  if $\alpha^\circ, \beta^\circ \in \Delta$, then $(\alpha \# \beta)^\circ \in \Delta$,  where $\# \in \{\land, \lor, \to\}$;
            \item[V.] if $\alpha^\circ \in \Delta$, then $(\obl\alpha)^\circ \in \Delta$.
        \end{description}
    \end{lemma}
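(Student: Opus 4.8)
The plan is to reduce every clause to the single observation that, by \cref{prop:phi-sat}, a $\varphi$-saturated set $\Delta$ in ${C}^D_1$ is a closed theory: it contains every instance of every axiom of ${C}^D_1$, it is closed under Modus Ponens, and it contains every theorem of ${C}^D_1$ (in particular every theorem obtained by $\obl$-necessitation). Since ${C}^D_1$ contains all \cplp\ axioms, the excluded-middle schema $\alpha \lor \neg\alpha$ (a theorem of $C_1$, hence of \Cila), the modal axioms $(\obl$-$K)$ and $(\obl$-$E)$ and the $\obl$-necessitation rule, every argument used in the proofs of \cref{lemma:truth-sat} and \cref{lemma:truth-sat-ext} transfers to ${C}^D_1$. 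For item I, the equivalences 1.--4. of \cref{lemma:truth-sat} are purely propositional properties of a closed theory extending \cplp\ with \textbf{(EM)}, so their proofs apply verbatim; item 6. is obtained exactly as in \cref{lemma:truth-sat}: the left-to-right direction is immediate from the definitions of $Den$ and of the canonical relation $R_{can}$ for ${C}^D_1$ (given, as before, by $\Delta R_{can}\Theta$ iff $Den(\Delta)\subseteq\Theta$), and the right-to-left direction re-runs the denecessitation argument — from $\obl\alpha\notin\Delta$ one gets $\nvdash_{C^D_1}\alpha$ via $\obl$-necessitation, then $Den(\Delta)\nvdash_{C^D_1}\alpha$ using $(\obl$-$K)$ and $\obl$-necessitation, and finally one extends $Den(\Delta)$ to an $\alpha$-saturated $\Delta'$ with $\Delta R_{can}\Delta'$ and $\alpha\notin\Delta'$, invoking \cref{prop:phi-sat} for ${C}^D_1$. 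Item II (property $5^+_3$) is immediate from axiom \textbf{(cf)} and closure under Modus Ponens.

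It remains to treat items III--V, which are the clauses genuinely involving the defined operator $(\cdot)^\circ = \neg(\,\cdot\, \land \neg\,\cdot\,)$. For item III from right to left: if $\alpha\notin\Delta$ or $\neg\alpha\notin\Delta$, then $\alpha\land\neg\alpha\notin\Delta$ by item 1., whence $\alpha^\circ = \neg(\alpha\land\neg\alpha)\in\Delta$ by the contrapositive of item 4. For item III from left to right: assuming $\alpha^\circ\in\Delta$ but, towards a contradiction, $\alpha\in\Delta$ and $\neg\alpha\in\Delta$, axiom \textbf{(bc)'} together with closure under Modus Ponens forces $\beta\in\Delta$ for every $\beta$, contradicting that $\Delta$ is $\varphi$-saturated (so $\Delta\nvdash\varphi$). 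Items IV and V are then one-liners: from $\alpha^\circ,\beta^\circ\in\Delta$ and item 1. we get $\alpha^\circ\land\beta^\circ\in\Delta$, so axiom \textbf{(ca$_{\#}$)'} and Modus Ponens give $(\alpha\#\beta)^\circ\in\Delta$; and from $\alpha^\circ\in\Delta$, axiom \textbf{(ca$_{\obl}$)'} and Modus Ponens give $(\obl\alpha)^\circ\in\Delta$.

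I do not expect any real difficulty. The only clause that is not a direct application of an axiom plus Modus Ponens is item 6. inside I, and it is handled simply by transcribing the denecessitation argument from the proof of \cref{lemma:truth-sat}; the one point to keep in mind is that the extension property for $\varphi$-saturated sets (\cref{prop:phi-sat}) is applied to ${C}^D_1$ rather than to \DmbC, which the excerpt explicitly licenses, and that here $\circ$ is no longer primitive, so $\alpha^\circ$ must be read throughout as the abbreviation $\neg(\alpha\land\neg\alpha)$.
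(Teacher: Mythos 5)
Your proof is correct and follows essentially the same route as the paper: items I, II, IV and V are discharged exactly as there (closure of $\Delta$ under the axioms of $C^D_1$ plus Modus Ponens, with the denecessitation argument for clause 6), and the left-to-right direction of III is the same appeal to \textbf{(bc)'} and non-triviality. The only cosmetic difference is in the right-to-left direction of III, where you apply item 4 directly to $\alpha\land\neg\alpha$, while the paper applies it to $\alpha^\circ$ and then uses \textbf{(cf)} to strip the double negation; both are valid.
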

\begin{proof}
    Items I. and II. are immediate, given that ${C}^D_1$ contains all the schemas of \DmbC\ and \DbC\ over signature $\Sigma^{C_1}_D$.\\[1mm]
    Item III.: The `only if' part is a consequence of axiom {\bf (bc)'}. Now, suppose that  $\alpha^\circ \notin \Delta$. By property 4. of item I., $\neg(\alpha^\circ) \in \Delta$. That is, $\neg\neg(\alpha \land \neg\alpha) \in \Delta$ and so $\alpha \land \neg\alpha \in \Delta$, by item II. By property 1. of item I., $\alpha, \neg\alpha \in \Delta$.\\[1mm]
    Item IV. and V. follow immediately from axioms {\bf (ca$_{\#}$)'} and  {\bf  (ca$_{\obl}$)'}.
\end{proof}

Define now $W^{{C}_1}_{can}, R^{{C}_1}_{can}$ and $\nucone_{\Delta}$  as in {\DCila}, but now each $\Delta \in W^{{C}_1}_{can}$ is a $\varphi$-saturated set in ${C}^D_1$. Observe that each valuation $\nucone_\Delta:For(\Sigma^{C_1}_D) \to A$ is defined according to \cref{def:can-bival-truth}.

    \begin{prop}
        The structure $\M_{{C}^D_1} = \langle W^{{C}_1}_{can}, R^{{C}_1}_{can}, \{v^{C_1}_\Delta\}_{\Delta \in W^{{C}_1}_{can}} \rangle$ is a swap Kripke model for ${C}^D_1$ such that, for every $\alpha \in For(\Sigma^{C_1}_D)$, $\nucone_\Delta(\alpha) \in \D$ iff $\alpha \in \Delta$.
    \end{prop}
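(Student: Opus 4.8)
The plan is to reproduce the canonical-model argument of \cref{prop:TruthLemma} and its refinements \cref{prop:TruthLemma-ext}, \cref{prop:TruthLemma-cl} and \cref{prop:TruthLemma-cila}, now reading the valuation $\nucone_\Delta$ off membership in the $\varphi$-saturated sets via \cref{def:can-bival-truth} and checking, clause by clause, the characterization of ${C}^D_1$-swap valuations given in \cref{rem:val-C1D}. First I would establish that $R^{{C}_1}_{can}$ is serial, arguing as in \cref{prop:TruthLemma}: given a $\varphi$-saturated $\Delta$, if $\obl\gamma\in\Delta$ for every $\gamma$, then $\obl\alpha,\obl\neg\alpha,\obl(\alpha^\circ)\in\Delta$, so $\obl\bot_\alpha\in\Delta$ (since $\Delta$ is a closed theory and $\bot_\alpha=(\alpha\land\neg\alpha)\land\alpha^\circ$), hence $\bot_\alpha\in\Delta$ by $(\obl-E)$, and then $\varphi\in\Delta$ by \textbf{(bc)'}, a contradiction; so $\obl\alpha\notin\Delta$ for some $\alpha$, and reasoning as in the right-to-left direction of item~6 of \cref{lemma:truth-sat} gives $Den(\Delta)\nvdash_{{C}^D_1}\alpha$, hence an $\alpha$-saturated $\Delta'\supseteq Den(\Delta)$ with $\Delta R^{{C}_1}_{can}\Delta'$.

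Next I would check that each $\nucone_\Delta$ is a well-defined ${C}^D_1$-swap valuation. Well-definedness (exhaustiveness of the three cases of \cref{def:can-bival-truth}) follows from item~4 of \cref{lemma:truth-sat}, the saturated-set form of \textbf{(EM)}, which forces $\neg\neg\alpha\in\Delta$ whenever $\neg\alpha\notin\Delta$; conditions 1--3 and~6 of \cref{def:nu_w} then follow from items~1--3 and~6 of \cref{lemma:truth-sat} as in \cref{prop:TruthLemma}, and $\nucone_\Delta(\neg\alpha)\in\aneg_1\nucone_\Delta(\alpha)$ follows from item~4 of \cref{lemma:truth-sat} together with item~$5^+_3$ of \cref{lemma:truth-sat-ext}, exactly as in the $\DbC$-part of \cref{prop:TruthLemma-ext}. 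For the first restriction of \cref{rem:val-C1D}: if $\nucone_\Delta(\alpha)=t$ then $\alpha,\neg\alpha\in\Delta$, so $\alpha\land\neg\alpha\in\Delta$ and, by \textbf{(bc)'} (since $\varphi\notin\Delta$), $\alpha^\circ=\neg(\alpha\land\neg\alpha)\notin\Delta$; hence $\nucone_\Delta(\alpha\land\neg\alpha)=T$ by \cref{def:can-bival-truth}, just as in \cref{prop:TruthLemma-cl}.

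For the remaining two restrictions I would first record the \textbf{Fact} that $\nucone_\Delta(\alpha)\in\{T,F\}$ iff $\alpha^\circ\in\Delta$, which is immediate from \cref{def:can-bival-truth} and item~III of \cref{lemma:truth-sat-C1D}; with it, the restriction for $\#\in\{\land,\lor,\to\}$ follows from item~IV of \cref{lemma:truth-sat-C1D} and the one for $\obl$ from item~V, reproducing the argument of \cref{prop:TruthLemma-cila}. Finally $\nucone_\Delta(\alpha)\in\D\iff\alpha\in\Delta$ is immediate, since $\D=\{T,t\}$ and the only case of \cref{def:can-bival-truth} with $\alpha\notin\Delta$ returns $F$. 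I do not anticipate a real obstacle here: the proof is a routine transcription of the earlier canonical-model arguments, the only delicate points being that $\circ$ is now a defined symbol---so seriality must be re-derived using $(\obl-E)$ with the redefined $\bot_\alpha$ and using \textbf{(bc)'}---and that both the $\aneg_1$-clause and the well-definedness of $\nucone_\Delta$ rest essentially on \textbf{(EM)}; once the \textbf{Fact} is available, the three extra restriction conditions are each verified in a couple of lines.
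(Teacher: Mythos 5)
Your proposal is correct and follows essentially the same route as the paper's proof: the key step in both is the observation (your \textbf{Fact}) that $\nucone_\Delta(\alpha)\in\{T,F\}$ iff $\alpha^\circ\in\Delta$, from which the propagation restrictions for $\#$ and $\obl$ follow via items IV and V of \cref{lemma:truth-sat-C1D}, with the $t$-case of the first restriction handled exactly as in \cref{prop:TruthLemma-cl}. Your explicit re-derivation of seriality with the redefined $\bot_\alpha$ and \textbf{(bc)'} is a welcome extra detail that the paper leaves implicit by appeal to the earlier canonical constructions.
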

\begin{proof} 
    
		It is an immediate consequence of \cref{lemma:truth-sat-C1D} and \cref{def:can-bival-truth}, taking into account \cref{rem:val-C1D}. Indeed, by adapting the proofs for the previous systems, it follows that $v^{C_1}_\Delta(\alpha \# \beta) \in v^{C_1}_\Delta(\alpha) \tilde{\#} v^{C_1}_\Delta(\beta)$ (for $\# \in \{\land,\vee,\to\}$) and $v^{C_1}_\Delta(\neg\alpha) \in \aneg_1 v^{C_1}_\Delta(\alpha)$. In order to prove that $v^{C_1}_\Delta$ satisfies the requirements 3.-5. of \cref{rem:val-C1D}, suppose first that  $v^{C_1}_\Delta(\alpha)=t$. Then, $\alpha,\neg\alpha \in \Delta$ and so $\alpha \land \neg\alpha \in \Delta$ and $\neg(\alpha \land \neg\alpha)=\alpha^\circ \not\in\Delta$, by items I. and III. of \cref{lemma:truth-sat-C1D}. This means that 
        $v^{C_1}_\Delta(\alpha \land \neg\alpha)=T$, validating requirement 3. For 4., observe first that  $v^{C_1}_\Delta(\alpha) \in \{T,F\}$ iff either  $\alpha\notin \Delta$ or  $\neg\alpha \notin \Delta$ iff, by III., $\alpha^\circ \in \Delta$. Now, let  $\# \in \{\land, \lor, \to\}$ and suppose that $\nucone_\Delta(\alpha), \nucone_\Delta(\beta) \in \{T,F\}$. By the previous observation, it follows that $\alpha^\circ, \beta^\circ \in \Delta$.  By IV. of \cref{lemma:truth-sat-C1D},  $(\alpha \# \beta)^\circ \in \Delta$. Using the observation above once again, this implies that $\nucone_\Delta(\alpha \# \beta) \in \{T,F\}$. The proof for 5{.} is analogous. By the very definitions, $\nucone_\Delta(\alpha) \in \D$ iff $\alpha \in \Delta$.
\end{proof}    

   Completeness follows immediately, with slight adaptations, from the lemma above and completeness for \DCila.

   \begin{prop}[Completeness of $C_1^D$  w.r.t. swap Kripke models] \ \\
	For any set $\Gamma \cup \{\varphi\} \subseteq For(\Sigma^{C_1}_D)$, if $\Gamma \vDash_{{C}^D_1} \varphi$ then $\Gamma \vdash_{{C}^D_1} \varphi$.
    \end{prop}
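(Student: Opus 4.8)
The plan is to prove the contrapositive, mirroring the completeness arguments already given for \DmbC, \DmbCcl\ and \DCila. Suppose $\Gamma \nvdash_{C^D_1} \varphi$. Since $C^D_1$ is a Tarskian and finitary logic (like all the systems considered in this paper), \cref{prop:phi-sat} applies: there is a set $\Delta \subseteq For(\Sigma^{C_1}_D)$ which is $\varphi$-saturated in $C^D_1$ and satisfies $\Gamma \subseteq \Delta$. By the definition of $\varphi$-saturation, $\Delta \nvdash_{C^D_1} \varphi$; and as $\Delta$ is a closed theory, $\varphi \notin \Delta$. Note also that $\Delta \in W^{C_1}_{can}$ by the very definition of that set.

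Next I would invoke the canonical swap Kripke model $\M_{C^D_1} = \langle W^{C_1}_{can}, R^{C_1}_{can}, \{\nucone_\Delta\}_{\Delta \in W^{C_1}_{can}} \rangle$ constructed just above, together with the truth lemma established there: for every $\alpha \in For(\Sigma^{C_1}_D)$ and every $\Delta \in W^{C_1}_{can}$, $\nucone_\Delta(\alpha) \in \D$ iff $\alpha \in \Delta$. From $\Gamma \subseteq \Delta$ and the truth lemma, $\nucone_\Delta(\gamma) \in \D$ for every $\gamma \in \Gamma$, i.e.\ $\M_{C^D_1}, \Delta \vDash \Gamma$; on the other hand $\varphi \notin \Delta$ gives $\nucone_\Delta(\varphi) \notin \D$, i.e.\ $\M_{C^D_1}, \Delta \nvDash \varphi$. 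Hence there is a swap Kripke model for $C^D_1$ and a world at which every formula of $\Gamma$ is true but $\varphi$ is not, which is exactly $\Gamma \nvDash_{C^D_1} \varphi$. Contraposition yields the claim.

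There is essentially no serious obstacle left at this stage: all of the real content has been packed into the truth lemma for $\M_{C^D_1}$ (which in turn rests on \cref{lemma:truth-sat-C1D} and \cref{rem:val-C1D}) and into the general saturation fact \cref{prop:phi-sat}. The only points deserving a moment's care are (i) that the global character of $\obl$-necessitation does not disturb the argument — but this is already absorbed into the adjusted notion of derivation in \cref{def:derivations} and is inherited verbatim from the \DmbC\ case — and (ii) that ``$\varphi$-saturated in $C^D_1$'' is precisely the membership criterion defining $W^{C_1}_{can}$, so that the chosen $\Delta$ really is a world of the canonical model. Both are immediate. I would therefore present this proof in just two or three lines, remarking that it is obtained ``in the analogous way to the \DmbC\ and \DCila\ cases'', which is how the paper handles the parallel completeness theorems.
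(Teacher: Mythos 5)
Your proposal is correct and follows exactly the route the paper intends: the paper states that completeness "follows immediately, with slight adaptations, from the lemma above and completeness for \DCila," and your argument is precisely that standard canonical-model contrapositive — take a $\varphi$-saturated $\Delta \supseteq \Gamma$ via \cref{prop:phi-sat}, then apply the truth lemma for $\M_{C^D_1}$ to obtain a countermodel. Nothing is missing; you have simply written out the details the paper leaves implicit.
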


\section{Swap Kripke models for ${C}^D_n$} \label{sect:DCn}

We start this section by defining extensions of notions presented in the previous section  for the rest of the hierarchy ${C}_n$, for $n\geq 2$. For this, consider once again the signatures $\Sigma^{C_1}$ for $C_n$ and $\Sigma^{C_1}_D$ for the calculi $C^D_n$. We define the following  notation over $For(\Sigma^{C_1}_D)$:

\begin{itemize}
    \item $\alpha^0 = \alpha$
    \item $\alpha^{n+1} = \neg (\alpha^n \land \neg \alpha^{n})$
    \item $\alpha^{(n)}= \alpha^1 \land \dots \land \alpha^n$
\end{itemize}

We also follow the presentation of ${C}_n$ given in \cite{ConiglioToledo2022}. This comprises of all axioms for \cplp, plus (EM), (cf) and the following axioms:

\begin{description}
    \item[(bc$_n$)] $\alpha^{(n)} \to (\alpha \to  (\neg \alpha \to \beta))$ 
    \item[(P$_n$)] $(\alpha^{(n)} \land \beta^{(n)}) \to ((\alpha \to \beta)^{(n)} \land (\alpha \lor \beta)^{(n)} \land (\alpha \land \beta)^{(n)})$ 
\end{description}

Observe that the classical negation is represented in $C_n$ by means of the formula $\cneg^{(n)} \alpha := \neg \alpha \land \alpha^{(n)}$. From this, the new version of $\axD$ reads

\begin{description}
    \item[(D$_n$)] $\obl \alpha \to \cneg^{(n)} \obl \cneg^{(n)}\alpha$. 
\end{description}

We have to also highlight that, in order for a modal system to be constructed upon such logics in the way we have been presenting them, some decisions must be made. If we want to follow the presentation of $C^D_1$ on \cite{CostaCarnielli1986} and extend the ideas presented there, as we did in the previous section, we need to reformulate the classicality propagation axiom, namely, $\alpha^\circ \to (\obl \alpha)^\circ$, as follows: 

\begin{description}
    \item[(P$\obl_n$)] $\alpha^{(n)} \to (\obl \alpha)^{(n)}$. 
\end{description}

\noindent We call it the \textit{general classicality propagation} axiom in $C_n$. Notice that when $n = 1$, $\alpha^{(n)} = \alpha^\circ = \alpha^1 = \neg (\alpha \land \neg \alpha)$. Also notice that this has an influence on how strong a negation has to be in order to recover classicality. For $n = 1$, strong negation is already sufficient to introduce deontic explosion back into the system, but taking $n=2$, we have, besides $\neg$ and ${\sim}$, one more negation. We are in fact dealing with an increasing number of negations, or, more precisely, for each $n$, $C^{D}_n$ has $n+1$ negations.\footnote{Although the fact is easy to observe, the argument that each one of them is, in fact, a negation will be discussed in a future paper.}

We are now a position to characterize the family of systems  ${C}^D_n$, for $n \geq 1$. Also notice that the case where $n=1$ was already studied in the previous section. We also refrain in this from deeply investigating the philosophical considerations tied to these systems.  We opt for a technical development of a semantics for the systems proposed, with the general propagation of classicality and a distinct version of $\axD$. We attempt to maintain the general spirit of the system originally presentation in the paper by da Costa and Carnielli, while presenting a mix between RNmatrices and Kripke semantics.\footnote{It is possible to construct such a system by means of swap structures only, following the technique shown in \cite{ConPawSku2024}. The modal operator could then be assigned one dimension in the tuple, hence its truth value being fully nondeterministic. This permits to semantically characterize logics in which the modal operator does not satisfy any of the standard inference rules or axioms assumed for such an operator.}

We thus follow the presentation given in \cite{ConiglioToledo2022} to define the base system. So for each $n \geq 2$, the multialgebra for ${C}_n$ will have domain $A_n$ of size $n + 2$, where each element of $A_n \subseteq \textbf{2}^{n+1}$ is an ${n+1}$-tuple. Hence, the swap structures for ${C}_n$ is one where the set of \textit{snapshots} is: $$A_n = \{z \in \textbf{2}^{n+1} : (\bigwedge_{i \leq k} z_i) \lor z_{k + 1} = 1 \ \text{ for every } 1 \leq k \leq n \}$$  

This produces exactly the following $n+2$ truth values:
\begin{itemize}
    \item $T_n = (1, 0, 1, \dots, 1)$
    \item $t^n_0 = (1,1,0,1, \dots, 1)$
    \item $t^n_1 = (1,1,1,0,1,\dots, 1)$\\
    \vdots
    \item $t^n_{n-2} = (1,1,1,1,\dots,0)$
    \item $t^n_{n-1} = (1,1,1,1,\dots,1)$
    \item $F_n = (0,1,1,\dots,1)$.
\end{itemize}

\begin{defn}
    let $A_n$ be as in the definition above. We define the following subsets of $A_n$:
    \begin{description}
        \item[1.] $D_n \defeq A_n \setminus \{F_n\}$ (designated values)
        \item[2.] $U_n \defeq A_n \setminus D_n = \{F_n\}$  (undesignated values)
        \item[3.] $I_n \defeq A_n \setminus \{T_n, F_n\}$ (inconsistent values)
        \item[4.] $Boo_n = A_n \setminus I_n= \{T_n, F_n\}$ (Boolean or classical values)
    \end{description}
\end{defn}

Now we can introduce the multiagebra $\A_{{C}^D_n}$:

\begin{defn}
    \label{def:Cnop}
    Let $\A_{{C}^D_n} = (A_n, \aland, \alor, \ato, \aneg, \aobl)$ be the multialgebra over $\Sigma^{C_1}_D$ defined as follows, for any $a,b \in A_n$:
    \begin{description}
        \item[1.] $\aneg a = \{ c \in A_n \ : \ c_1 = a_2 \text{ and } c_2 \leq a_1\}$
        \item[2.] $a \aland b = \begin{dcases}
            \{c \in Boo_n \ : \ c_1 = a_1 \sqcap b_1\} & \text{ if } a, b \in Boo_n\\
            \{c \in A_n \ : \ c_1 = a_1 \sqcap b_1\} & \text{ otherwise } \\
        \end{dcases}$ 
        \item[3.] $a \alor b = \begin{dcases}
            \{c \in Boo_n \ : \ c_1 = a_1 \sqcup b_1\} & \text{ if } a, b \in Boo_n\\
            \{c \in A_n \ : \ c_1 = a_1 \sqcup b_1\} & \text{ otherwise }\\
        \end{dcases}$
        \item[4.] $a \ato b = \begin{dcases}
            \{c \in Boo_n \ : \ c_1 = a_1 \supset b_1\} & \text{ if } a, b \in Boo_n\\
            \{c \in A_n \ : \ c_1 = a_1 \supset b_1\} & \text{ otherwise }\\
        \end{dcases}$
        \item[5.] $\aobl(X) = \{c \in A_n \ : \ c_1 = \bigsqcap \{x_1 : x \in X\}\}$, where $X \neq \emptyset$ and $X \subseteq A_n$.
    \end{description}
\end{defn}

\begin{rmk} 
\label{obs:tables-Cn}
Observe that the non-deterministic truth-tables for the non-modal operators of $\A_{{C}^D_n}$ are the ones displayed below, where $0 \leq i,j \leq n-1$.

\begin{center}
	\begin{tabular}{| c | c | c | c |}
		\hline $\tilde{\land}$ & $T_n$ & $t^n_j$ & $F_n$ \\
		\hline
		
		$T_n$ & $\{T_n\}$ &  $D_n$  &  $\{F_n\}$\\
		\hline
		
		$t^n_i$ &  $D_n$  & $D_n$  & $\{F_n\}$\\
		\hline		
		
		$F_n$ &  $\{F_n\}$  & $\{F_n\}$  & $\{F_n\}$\\
		\hline
		
	\end{tabular}
	\hspace{1cm}
	\begin{tabular}{| c | c | c | c |}
		\hline $\tilde{\lor}$ & $T_n$ & $t^n_j$ & $F_n$ \\
		\hline
		
		$T_n$ & $\{T_n\}$ &  $D_n$  &  $\{T_n\}$\\
		\hline
		
		$t^n_i$ &  $D_n$  & $D_n$  & $D_n$\\
		\hline
		
		$F_n$ &  $\{T_n\}$  & $D_n$  & $\{F_n\}$\\
		\hline
		
	\end{tabular}
\end{center}

\begin{center}
	\begin{tabular}{| c | c | }
		\hline  & $\tilde{\neg}$  \\
		\hline
		
		$T_n$ & $\{F_n\}$\\
		\hline
		
		$t^n_i$ &  $D_n$\\
		\hline
		
		$F_n$ &  $\{T_n\}$\\
		\hline
		
	\end{tabular}
	\hspace{3cm}
	\begin{tabular}{| c | c | c | c |}
		\hline $\tilde{\to}$ & $T_n$ & $t^n_j$ & $F_n$ \\
		\hline
		
		$T_n$ & $\{T_n\}$ &  $D_n$ &  $\{F_n\}$\\
		\hline
		
		$t^n_i$ &  $D_n$  & $D_n$  & $\{F_n\}$\\
		\hline
		
		$F_n$ &  $\{T_n\}$ & $D_n$  & $\{T_n\}$\\
		\hline
		
	\end{tabular}
\end{center}

\end{rmk}

\

\begin{defn}
    \label{def:Cnval}
    Let $W\neq \emptyset$ be a set of worlds, $R \subseteq W \times W$ be a serial relation and $v^n_w = For(\Sigma^{C_1}_D) \to A_n$ for each $w \in W$, such that, for any $\alpha, \beta \in For(\Sigma^{C_1}_D)$, the following holds:
    \begin{description}
        \item[1.] $v_w^n (\neg \alpha) \in \aneg (v_w^n (\alpha))$
        \item[2.] $v_w^n (\alpha \# \beta) \in v_w^n(\alpha) \tilde{\#} v_w^n (\beta)$, for $\# \in \{\land, \to \neg\}$
        \item[3.] $v_w^n (\obl \alpha) \in \aobl \big(\{v_{w\prim}^n(\alpha) \ : \ w R w\prim\}\big)$.
    \end{description}
\end{defn}

\begin{defn}
    \label{def:pre-swqp-Cn}
A  structure $\M = (W, R, \{v^n_w\}_{w \in W})$ with properties as in \cref{def:Cnval} is said to be a {\em swap Kripke pre-model} for ${C}^D_n$. A formula $\alpha \in For(\Sigma^{C_1}_D)$ is {\em true in a world $w$} of $\M$, denoted by $\M,w \vDash \alpha$, if $v^n_w(\alpha) \in D_n$. A formula $\alpha$  is {\em valid} in a pre-model $\M$, denoted by  $\M \vDash \alpha$, if  $\M,w \vDash \alpha$ for every $w \in W$. As it was done before, given a non-empty set $\Gamma$ of formulas we will write $\M,w \vDash \Gamma$ to denote that $\M,w \vDash \alpha$ for every $\alpha \in \Gamma$.
\end{defn}

We recall the fact that for any $\alpha \in For(\Sigma^{C_1}_D)$, $w \in W$ and $v^n_w$, $v^n_w(\alpha) \in D_n$ if and only if $v^n_{(1,w)}(\alpha) = 1$, given a natural adaptation of the notation presented in \cref{rmk:notation} and the definitions above.

In order to characterize ${C}^D_n$ we add first the following restrictions, thus simulating the behavior of the $RN$matrix for $C_n$:

\begin{defn}
    \label{def:Cnrest} Given a swap Kripke pre-model for ${C}^D_n$, consider the following additional restrictions on the valuations $v^n_w$: 
    \begin{description}
        \item[1.] $v^n_w(\alpha) = t^n_0$ implies $v^n_w(\alpha \land \neg \alpha) = T_n$
        \item[2.] $v_w^n(\alpha) = t^n_k$ implies $v^n_w(\alpha \land \neg \alpha) \in I_n$ and $v^n_w(\alpha^1)=t^n_{k-1}$,\\ for every $1\leq k \leq n-1$. 
    \end{description}

\end{defn}

\begin{rmk}
\label{val=sound-Cn}
We observe that the additional restrictions in \cref{def:Cnrest} only consider valuations in which the values of $\alpha^{(n)}$ are in $Boo_n$, such as shown in \cite[pp. 621--622]{ConiglioToledo2022}, for each world $w \in W$. Moreover, in any  swap Kripke pre-model  for ${C}^D_n$ as in \cref{def:Cnrest}, and for a fixed $w \in W$, each valuation $v^n_w$ belongs to the set of valuations of the $RN$matrix characterizing $C_n$ introduced in~\cite{ConiglioToledo2022}. From this, all the results concerning the non-modal operators of $C_n^D$ will hold w.r.t. the valuations of such a  swap Kripke pre-models.
\end{rmk}

The restrictions on the valuations made in \cref{def:Cnrest}  can be displayed by means of a very useful table, see~\cite[Table~1, p.~622]{ConiglioToledo2022}. By convenience of the reader, we reproduce in Figure~\ref{Table~1} a slightly expanded version of that table, which represents the possible scenarios concerning restricted valuations for $C_n$ (and so, for the non-modal fragment of ${C}^D_n$), according to  \cref{def:Cnrest}. In that table, $X^*$ means that the value $X$ is forced by a restriction on the corresponding valuation.

	\begin{figure}[h!]
		\centering
		\includegraphics[width=1.0\textwidth]{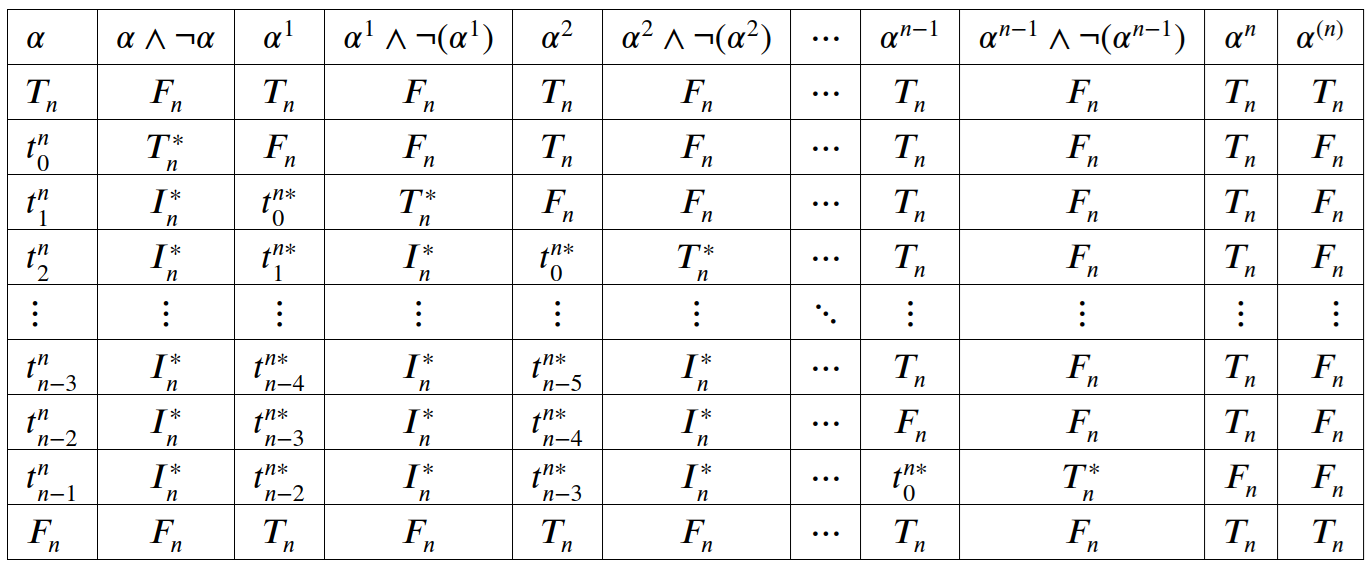}  \caption{}\label{Table~1}
	\end{figure}

In that case, it is worth observing that the truth-tables of the (defined) connectives $(\cdot)^{(n)}$ and   $\cneg^{(n)}\alpha=\neg\alpha \land \alpha^{(n)}$ are as follows, for $0 \leq i \leq n-1$:

\begin{center}
	\begin{tabular}{| c | c | }
		\hline $\alpha$ & $\alpha^{(n)}$  \\
		\hline
		
		$T_n$ & $\{T_n\}$\\
		\hline
		
		$t^n_i$ &  $\{F_n\}$\\
		\hline
		
		$F_n$ &  $\{T_n\}$\\
		\hline		
	\end{tabular}
\hspace{2cm}
	\begin{tabular}{| c | c | }
		\hline $\alpha$ & $\cneg^{(n)} \alpha$  \\
		\hline
		
		$T_n$ & $\{F_n\}$\\
		\hline
		
		$t^n_i$ &  $\{F_n\}$\\
		\hline
		
		$F_n$ &  $\{T_n\}$\\
		\hline		
	\end{tabular}
\end{center}

We need, however, to be sure that our restricted valuations preserve validity when looking at the modal operator. It is easy to see that axioms $\axK$ and the strong  version ({\bf D}$_n$) of $\axD$ are valid w.r.t.  the swap Kripke pre-models of \cref{def:Cnrest}:

\begin{lemma} 
\label{sound-axs-CnD}
Consider a swap Kripke pre-model $\M$ for ${C}^D_n$. Then, the following holds for any $\alpha, \beta \in For(\Sigma^{C_1}_D)$, and $w$ in $\M$:
   \begin{description}
       \item[1.] If $v^n_w(\obl (\alpha \to \beta)) \in D_n$ and $v^n_w(\obl \alpha) \in D_n$, then $v^n_w(\obl \beta) \in D_n$.
       \item[2.] If $v^n_w(\obl \alpha) \in D_n$, then $v^n_w(\cneg^{(n)} \obl \cneg^{(n)} \alpha) \in D_n$, assuming that $\M$ is  as in \cref{def:Cnrest}.
   \end{description}
\end{lemma}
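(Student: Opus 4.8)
The statement to be proved is Lemma~\ref{sound-axs-CnD}, which asserts the validity of axioms $\axK$ and the strong version $({\bf D}_n)$ of $\axD$ in an arbitrary swap Kripke pre-model $\M$ for ${C}^D_n$ (with the restrictions of \cref{def:Cnrest} assumed for the second item). The plan is to argue each of the two items separately, reducing the semantic claims to the arithmetic of first coordinates, exactly as was done for the $\obl$-$K$ axiom in the soundness proof for \DmbC\ and for the $({\bf D}_n)$-type computations via the truth-tables of $\cneg^{(n)}$.

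For item 1, I would unfold the hypotheses using the natural adaptation of Remark~\ref{rmk:notation}: $v^n_w(\obl(\alpha\to\beta))\in D_n$ means $(\obl(\alpha\to\beta))_{(1,w)}=1$, and $v^n_w(\obl\alpha)\in D_n$ means $(\obl\alpha)_{(1,w)}=1$. By the $\aobl$ clause of \cref{def:Cnval} (item~3) together with the definition of $\aobl$ in \cref{def:Cnop}, the first gives $(\alpha\to\beta)_{(1,w')}=1$ for every $w'$ with $wRw'$, and the second gives $\alpha_{(1,w')}=1$ for all such $w'$. Since $(\alpha\to\beta)_{(1,w')}=\alpha_{(1,w')}\supset\beta_{(1,w')}$ (the first-coordinate behaviour of $\ato$, which holds uniformly whether or not the snapshots are Boolean), it follows that $\beta_{(1,w')}=1$ for every $w'$ with $wRw'$, hence $(\obl\beta)_{(1,w)}=\bigsqcap\{\beta_{(1,w')}:wRw'\}=1$, i.e.\ $v^n_w(\obl\beta)\in D_n$. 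The seriality of $R$ is not even needed here, only the monotone reading of $\bigsqcap$ on a set all of whose members are $1$.

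For item 2, the key is the truth-table of $\cneg^{(n)}$ displayed just above the lemma: $\cneg^{(n)}\gamma$ has first coordinate $1$ precisely when $\gamma_{(1,w)}=0$, and first coordinate $0$ when $\gamma_{(1,w)}=1$ — in other words $\cneg^{(n)}$ acts as genuine Boolean complement on first coordinates, on \emph{every} snapshot, because under the restrictions of \cref{def:Cnrest} the value of $\alpha^{(n)}$ always lands in $Boo_n$ (Remark~\ref{val=sound-Cn}). So assume $v^n_w(\obl\alpha)\in D_n$, i.e.\ $(\obl\alpha)_{(1,w)}=1$; this gives $\alpha_{(1,w')}=1$ for every $w'$ with $wRw'$. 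Then $(\cneg^{(n)}\alpha)_{(1,w')}=0$ for every such $w'$, hence $(\obl\cneg^{(n)}\alpha)_{(1,w)}=\bigsqcap\{0:wRw'\}=0$ — here seriality of $R$ is what guarantees the indexing set is non-empty so that the meet is genuinely $0$ rather than the empty meet. Finally $(\cneg^{(n)}\obl\cneg^{(n)}\alpha)_{(1,w)}={\sim}0=1$, so $v^n_w(\cneg^{(n)}\obl\cneg^{(n)}\alpha)\in D_n$, as required.

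The only point deserving genuine care — and the step I would flag as the main obstacle — is justifying that $\cneg^{(n)}$ really behaves as Boolean complement on first coordinates at the snapshot $\obl\cneg^{(n)}\alpha$ (and at $\alpha$), since $\cneg^{(n)}\gamma:=\neg\gamma\land\gamma^{(n)}$ is a \emph{defined} connective and the non-modal multioperators $\aneg,\aland$ are nondeterministic. This is where Remark~\ref{val=sound-Cn} must be invoked explicitly: for a fixed world $w$ the valuation $v^n_w$ restricted to non-modal subformulas is a valuation of the $RN$matrix for $C_n$ of \cite{ConiglioToledo2022}, so the derived truth-table for $\cneg^{(n)}$ shown above is literally valid, and in particular $\gamma^{(n)}$ is always Boolean-valued; granting that, the two computations above are routine.
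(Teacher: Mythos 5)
Your proof is correct and follows essentially the same route as the paper's: item 1 is the same first-coordinate computation via the $\aobl$ clause, and item 2 is exactly the paper's appeal to the derived truth-table of $\cneg^{(n)}$ (which the paper states in one line and you spell out, correctly noting that its validity rests on \cref{def:Cnrest} via \cref{val=sound-Cn}). No gaps.
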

\begin{proof}
    For the first item, assume both $v^n_w(\obl (\alpha \to \beta)) \in D_n$ and $v^n_w(\obl \alpha) \in D_n$. This means that $v^n_{(1,w)}(\obl (\alpha \to \beta))=v^n_{(1,w)}(\obl \alpha)=1$. But then, by \cref{def:Cnop} and \cref{def:Cnval} we have that $v^n_{(1,w')}(\alpha \to \beta)=v^n_{(1,w')}(\alpha)=1$ and so $v^n_{(1,w')}(\beta)=1$, for every $w'$ such that $wRw'$. From this, $v^n_{(1,w)}(\obl \beta)=1$, that is,  $v^n_w(\obl \beta) \in D_n$. 

    The second item follows from the nature of $\cneg^{(n)}$ (see its truth-tables above).       
\end{proof}

\begin{rmk}
    Notice that, since $n \geq 2$, the restrictions in \cref{def:Cnrest}  fail to validate $\axD$ in its strong negation form, i.e., w.r.t. $\cneg\alpha :=\neg\alpha \land \alpha^1$. 
    
    As an example, consider ${C}^D_2$. Picture a model as below:

    \begin{figure}[H]
    \centering
   \begin{tikzpicture}
       \begin{scope}
        \node[vertex] (0) at (0,0) {$w$};
        \node[vertex] (2) at (2,0) {$w\prim$};
        \node (3) at (4,0) {};
        \draw [->] (0) -- (2);
        \draw [->] (2) -- (3);
        \node (5) at (5,0) {};
        \path (3) -- (5) node [midway] {$\dots$};
  \end{scope}
   \end{tikzpicture}
    \end{figure}

    Assume that $v^2_w(\obl\alpha) \in D_2$ and that $v^2_{w\prim}(\alpha) = t^2_1$. Thus, by \cref{def:Cnrest}, this means that $v^2_{w\prim}(\alpha^1) = t^2_0$ and that $v^2_{w\prim}(\neg\alpha) \in D_2$. Hence, $v^2_{(1,w\prim)}(\cneg\alpha) = 1$. That implies $v^2_w(\obl \cneg \alpha) \in D_2$, so when $v^2_w( \obl \cneg \alpha) = T_2$, $v^2_w(\cneg \obl \cneg \alpha) = F_2$. The argument for any $n \geq 2$ is similar. Besides, a similar counterexample can be found for the paraconsistent negation $\neg$.
\end{rmk}

\begin{rmk}
    Also notice that the version of $\axD$ with $\cneg^{(n)}$ is validated by requiring the restrictions in \cref{def:Cnrest}. Without the restrictions, it is possible to construct a model and a valuation in worlds that would falsify the axiom. Picture again the model used in the previous remark, while working in $C^D_2$. Assume further that $v^2_w(\obl\alpha) \in D_2$, with $v^2_{w\prim}(\alpha) \in I_2$. This means that $v^2_{w\prim}(\neg\alpha) \in D_2$, hence $v^2_{w\prim}(\alpha \land \neg\alpha) \in D_2$. Since no restriction is given to the value assigned to $(\alpha \land \neg \alpha)$, then it is possible that $v^2_{w\prim}(\alpha \land \neg\alpha) \in I_2$, and also that $v^2_{w\prim}(\neg (\alpha \land \neg\alpha)) = v^2_{w\prim}(\alpha^1) \in I_2$. These assignments allow for $v^2_{w\prim}(\neg (\neg (\alpha \land \neg\alpha)\land \neg \neg (\alpha \land \neg\alpha))) = v^2_{w\prim}(\alpha^2) \in D_2$. But then, $v^2_{w\prim}(\cneg^{(2)} \alpha) \in D_2$. As we did in the previous case, taking $v^2_w(\obl\cneg^{(2)}\alpha) = T_2$ implies $v^2_w(\cneg^{(2)}\obl\cneg^{(2)}\alpha) = F_2$. This is also similarly extended to any $n \geq 2$.

    To see that indeed the restrictions guarantee that the axiom holds, assume $v^n_w(\obl \alpha) \in D_n$. Thus for all $w\prim$ such that $w R w\prim$, $v_{w\prim}^n(\alpha) \in D_n$. Now either $v_{w\prim}^n(\alpha) = T_n$ or $v_{w\prim}^n(\alpha) \in I_n$. If the first case, then $v_{w\prim}^n(\neg \alpha) = F_n$. Otherwise, $v_{w\prim}^n(\alpha^{(n)}) = F_n$. In any case, $v_{w\prim}^n(\cneg^{(n)}\alpha) = F_n$, hence $v^n_w(\obl \cneg^{(n)} \alpha) = F_n$, thus $v^n_w(\cneg^{(n)} \obl \cneg^{(n)} \alpha) \in D_n$. 
\end{rmk}

 The following is easily proved, taking into consideration \cref{val=sound-Cn} and  Figure~\ref{Table~1}:
 
\begin{lemma}
    \label{lemma:CnRN}
     Consider a swap Kripke pre-model  for ${C}^D_n$ as in \cref{def:Cnrest}, and let $1 \leq k \leq n$. Then, the following holds for any $\alpha \in For(\Sigma^{C_1}_D)$ and any $w \in W$:
    \begin{description}
        \item[1.] If $v^n_w(\alpha) = T_n$, then $v^n_w(\alpha^k) = T_n$
        \item[2.] If $v^n_w(\alpha)= t^n_i$ for some $1 \leq i \leq k - 2$, then $v^n_w(\alpha^k)=T_n$
        \item[3.] If $v^n_w(\alpha)= t^n_{k-1}$, then $v^n_w(\alpha^k)=F_n$
        \item[4.] If $v^n_w(\alpha)= t^n_i$ for some $k \leq i \leq n - 1$, then $v^n_w(\alpha^k)=t^n_{i - k}$
        \item[5.] If $v^n_w(\alpha) = F_n$, then $v^n_w(\alpha^k) = T_n$
        \item[6.] If $v^n_w(\alpha)= t^n_i$ for some $0 \leq i \leq n - 1$, then: $v^n_w(\alpha^k)=F_n$ iff $k=i+1$.
    \end{description}
\end{lemma}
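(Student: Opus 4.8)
The plan is to reduce everything to a single application of the operator $(\cdot)^1$, recalling that $\beta^1 = \neg(\beta \wedge \neg\beta)$ and $\alpha^{m+1} = (\alpha^m)^1$. The first step is to establish a \emph{successor rule}: for every formula $\beta$ and every world $w$, the value $v^n_w(\beta^1)$ is a function of $v^n_w(\beta)$, namely
$$s(T_n) = s(F_n) = T_n, \qquad s(t^n_0) = F_n, \qquad s(t^n_j) = t^n_{j-1} \ \text{ for } 1 \leq j \leq n-1,$$
with $v^n_w(\beta^1) = s(v^n_w(\beta))$. When $v^n_w(\beta) \in Boo_n$ this is a direct computation with the truth-tables of \cref{obs:tables-Cn}: $\neg$ is deterministic on $T_n$ and $F_n$, and $\aland$ sends Boolean inputs to Boolean outputs, so $v^n_w(\beta \wedge \neg\beta) = F_n$ in both cases and hence $v^n_w(\beta^1) = T_n$. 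When $v^n_w(\beta) = t^n_0$, restriction~1 of \cref{def:Cnrest} forces $v^n_w(\beta \wedge \neg\beta) = T_n$, so $v^n_w(\beta^1) = F_n$; when $v^n_w(\beta) = t^n_j$ with $1 \leq j \leq n-1$, restriction~2 gives $v^n_w(\beta^1) = t^n_{j-1}$ outright. Crucially, both restrictions are imposed on \emph{all} formulas, so they may be applied with $\beta$ taken to be $\alpha^m$.

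Second, I would iterate: setting $a_m := v^n_w(\alpha^m)$, the successor rule gives $a_{m+1} = s(a_m)$, so it remains only to trace the orbits of $s$. One gets $s^m(T_n) = T_n$ for all $m$; $s^m(F_n) = T_n$ for all $m \geq 1$; and, for $0 \leq i \leq n-1$, $s^m(t^n_i) = t^n_{i-m}$ for $0 \leq m \leq i$, then $s^{i+1}(t^n_i) = F_n$, and $s^m(t^n_i) = T_n$ for $m \geq i+2$.

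Third and last, I would read off the six items by specialising to $m = k$ with $1 \leq k \leq n$: item~1 is the $T_n$-orbit; item~5 is the $F_n$-orbit (using $k \geq 1$); items~2,~3 and~4 are, respectively, the cases $k \geq i+2$, $k = i+1$ and $k \leq i$ of the $t^n_i$-orbit; and item~6 records that $a_k = F_n$ occurs exactly at $k = i+1$ (for $k \leq i$ we land in $I_n$, for $k \geq i+2$ we land on $T_n$). I expect no deep obstacle here; the only genuine point of care is the well-definedness of $s$ — that the restrictions of \cref{def:Cnrest} really do pin $v^n_w(\beta^1)$ down uniquely for \emph{every} $\beta$, not merely for propositional letters — which is exactly what legitimises feeding $\alpha^m$ into them. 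Alternatively, and even more briefly, the statement follows from \cref{val=sound-Cn}: for fixed $w$ the valuation $v^n_w$ restricted to non-modal formulas is a valuation of the $RN$matrix for $C_n$ of \cite{ConiglioToledo2022}, and items~1--6 are precisely the entries of Figure~\ref{Table~1}.
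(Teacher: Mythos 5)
Your argument is correct. The paper itself does not spell out a proof: it simply declares the lemma ``easily proved'' by appeal to \cref{val=sound-Cn} (the observation that each $v^n_w$, restricted to non-modal formulas, is a valuation of the $RN$matrix for $C_n$ of \cite{ConiglioToledo2022}) and to the table in Figure~\ref{Table~1} --- which is exactly the alternative you sketch in your closing sentence. Your main argument is a genuinely self-contained replacement for that citation: the ``successor rule'' $v^n_w(\beta^1)=s(v^n_w(\beta))$ is verified correctly in all three cases (the Boolean case via determinism of $\tilde{\neg}$ and $\tilde{\land}$ on $Boo_n$, the $t^n_0$ case via restriction~1, the $t^n_j$ case for $j\geq 1$ via restriction~2), and you are right that the one point needing care is that the restrictions of \cref{def:Cnrest} are schematic in $\alpha$ and hence may be instantiated at $\alpha^m$, which is what licenses the iteration $a_{m+1}=s(a_m)$. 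The orbit computation and the read-off of items 1--6 (including the observation that item~2's hypothesis $1\leq i\leq k-2$ lands in the $k\geq i+2$ regime, and that item~6 isolates $k=i+1$ as the unique step where $F_n$ appears) all check out. What your version buys is independence from the external reference and from the table; what the paper's version buys is brevity, at the cost of leaving the verification to the reader.
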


\begin{lemma}
    \label{lemma:CnBoo}
    Let $\M_n$ be a swap Kripke pre-model  for ${C}^D_n$ as in \cref{def:Cnrest}. Then, $v^n_w(\alpha^{(n)}) \in D_n$ if and only if $v^n_w(\alpha) \in Boo_n$.
\end{lemma}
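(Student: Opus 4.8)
The plan is to derive the statement directly from \cref{lemma:CnRN} together with the truth-table for $\aland$ displayed in \cref{obs:tables-Cn}, analysing the value $v^n_w(\alpha^{(n)})$ of the conjunction $\alpha^{(n)}=\alpha^1\land\dots\land\alpha^n$ by a short induction on its length and splitting on whether $v^n_w(\alpha)$ is Boolean or inconsistent.

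For the \emph{if} direction I would assume $v^n_w(\alpha)\in Boo_n$, i.e.\ $v^n_w(\alpha)=T_n$ or $v^n_w(\alpha)=F_n$. By items~1 and~5 of \cref{lemma:CnRN} this forces $v^n_w(\alpha^k)=T_n$ for every $1\le k\le n$. Since $T_n\in Boo_n$, the Boolean clause of the $\aland$-table applies and gives $T_n\aland T_n=\{T_n\}$; hence, reading the conjunction left to right, an immediate induction on $k$ yields $v^n_w(\alpha^1\land\dots\land\alpha^k)=T_n$ for all $1\le k\le n$, and in particular $v^n_w(\alpha^{(n)})=T_n\in D_n$.

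For the contrapositive of the \emph{only if} direction I would assume $v^n_w(\alpha)\notin Boo_n$, i.e.\ $v^n_w(\alpha)\in I_n$, say $v^n_w(\alpha)=t^n_i$ with $0\le i\le n-1$. Setting $k_0:=i+1$, so that $1\le k_0\le n$, item~6 of \cref{lemma:CnRN} gives $v^n_w(\alpha^{k_0})=F_n$. Since $\alpha^{k_0}$ occurs as one of the conjuncts of $\alpha^{(n)}$, and every entry of the $\aland$-table in which one of the arguments is $F_n$ equals $\{F_n\}$, once this conjunct is incorporated the corresponding partial conjunction receives the value $F_n$, which is thereafter absorbing; hence $v^n_w(\alpha^{(n)})=F_n\notin D_n$. (If $k_0=1$ one simply starts from $v^n_w(\alpha^{(1)})=v^n_w(\alpha^1)=F_n$.)

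I do not expect any genuine obstacle: this is essentially a corollary of \cref{lemma:CnRN}. The only point that needs a little care is that $v^n_w$ is a nondeterministic valuation, so in principle the induction ranges over all admissible readings of each conjunction; but the $\aland$-table entries that are actually used here — namely $T_n\aland T_n$ and every product involving $F_n$ — are the singletons $\{T_n\}$ and $\{F_n\}$, so the relevant values are forced at each step and the argument goes through regardless of how $\alpha^{(n)}$ is bracketed.
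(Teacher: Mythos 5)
Your proof is correct and follows essentially the same route as the paper: both directions are read off from \cref{lemma:CnRN} and the $\aland$-table, with the Boolean case forcing every $\alpha^k$ to $T_n$ and the inconsistent case producing an $F_n$ conjunct that is absorbing. If anything, your use of item~6 to isolate the unique index $k_0=i+1$ with $v^n_w(\alpha^{k_0})=F_n$ is more careful than the paper's wording, which loosely asserts that all the $\alpha^{k+1}$ take the value $F_n$; the conclusion is unaffected since one $F_n$ conjunct suffices.
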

\begin{proof}
    Suppose $v^n_w(\alpha) \notin Boo_n$. Then, $v^n_w(\alpha) \in I_n$. By item~3. of \cref{lemma:CnRN} it follows that, for $0 \leq k < n$, $v^n_w(\alpha^{k + 1}) = F_n$, hence $v_w^n(\alpha^{(n)}) = F_n$ and so  $v^n_w(\alpha^{(n)}) \notin D_n$. Now, if $v^n_w(\alpha) \in Boo_n$ then, for any $1 \leq k \leq n$, $v^n_w(\alpha^k) = T_n$, by items~1. and~5. of \cref{lemma:CnRN}. Thus, $v^n_w(\alpha^{(n)}) = T_n$, so $v^n_w(\alpha^{(n)}) \in D_n$.   
\end{proof}

There is only one more modal axiom to check, namely, ({\bf P$\obl$}$_n$). In its original formulation, that is, when $n = 1$, it was already covered in the previous section. We thus add a similar restriction in order to validate this version of the axiom:

\begin{defn}
   \label{def:Cnrest3} A swap Kripke pre-model  for ${C}^D_n$ is said to be a  {\em swap Kripke model  for ${C}^D_n$} if the valuations satisfy, in addition, the restrictions of  \cref{def:Cnrest} plus the following constraint:
    $$\text{If } v_w^n(\alpha) \in Boo_n \text{, then } v_w^n(\obl \alpha) \in Boo_n$$
\end{defn}

Now we can prove the validity of ({\bf P$\obl$}$_n$) w.r.t. swap Kripke models for  ${C}^D_n$.

\begin{lemma}
\label{lemma:Cnax3}
    The following holds in any swap Kripke model for  ${C}^D_n$:
\begin{description}
    \item[3.] If $v_w^n(\alpha^{(n)}) \in D_n$, then $v_w^n((\obl \alpha)^{(n)}) \in D_n$.
\end{description}
\end{lemma}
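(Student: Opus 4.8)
The plan is to obtain this as a two-step corollary of the preceding development together with the fresh constraint of \cref{def:Cnrest3}. First I would observe that, since $\M$ is a swap Kripke model for ${C}^D_n$ (and not merely a pre-model), every valuation $v^n_w$ satisfies both the restrictions of \cref{def:Cnrest} and the additional constraint ``$v_w^n(\alpha) \in Boo_n$ implies $v_w^n(\obl \alpha) \in Boo_n$''. Assume $v_w^n(\alpha^{(n)}) \in D_n$; by \cref{lemma:CnBoo} this is equivalent to $v_w^n(\alpha) \in Boo_n$. Applying the constraint of \cref{def:Cnrest3} then gives $v_w^n(\obl \alpha) \in Boo_n$. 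Finally, invoking \cref{lemma:CnBoo} once more, now with the formula $\obl\alpha \in For(\Sigma^{C_1}_D)$ in place of $\alpha$, we conclude $v_w^n((\obl\alpha)^{(n)}) \in D_n$, which is exactly the claim, and hence axiom ({\bf P$\obl$}$_n$) is validated.

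The one point that deserves explicit justification is that \cref{lemma:CnBoo} may legitimately be applied to the modal formula $\obl\alpha$. This is fine because that lemma is stated for an arbitrary $\alpha \in For(\Sigma^{C_1}_D)$ and its proof uses only how the operators $\neg$ and $\land$ act on the snapshot assigned to whatever formula is substituted in — through \cref{lemma:CnRN} and the restrictions of \cref{def:Cnrest} — and is entirely insensitive to the syntactic shape of that formula. So nothing changes when the principal connective happens to be $\obl$; in particular $(\obl\alpha)^{(n)} = (\obl\alpha)^1 \land \dots \land (\obl\alpha)^n$ is a perfectly ordinary formula to which the lemma speaks.

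I do not anticipate a genuine obstacle here. All the combinatorial content — the behaviour of iterating $(\cdot)^k$ on the truth values $T_n$, $t^n_i$ and $F_n$, collected in \cref{lemma:CnRN} and repackaged as the ``Boolean characterization'' \cref{lemma:CnBoo} — has already been carried out. The present lemma is essentially the observation that the constraint freshly imposed in \cref{def:Cnrest3} is exactly what transports ``being Boolean'' across the deontic operator, which is precisely what ({\bf P$\obl$}$_n$) demands; so the proof reduces to the short chain $\alpha^{(n)}$ designated $\Leftrightarrow$ $\alpha$ Boolean $\Rightarrow$ $\obl\alpha$ Boolean $\Leftrightarrow$ $(\obl\alpha)^{(n)}$ designated.
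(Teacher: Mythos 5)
Your proof is correct and follows exactly the paper's own argument: apply \cref{lemma:CnBoo} to get $v_w^n(\alpha) \in Boo_n$, invoke the constraint of \cref{def:Cnrest3} to transport Booleanness across $\obl$, and apply \cref{lemma:CnBoo} once more to $\obl\alpha$. Your extra remark that \cref{lemma:CnBoo} applies to arbitrary formulas, including modal ones, is a sound (if implicit in the paper) observation.
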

\begin{proof}
    From \cref{lemma:CnBoo}, $v_w^n(\alpha^{(n)}) \in D_n$ implies that  $v_w^n(\alpha) \in Boo_n$. By \cref{def:Cnrest3}, $v_w^n(\obl \alpha) \in Boo_n$. By \cref{lemma:CnBoo} once again, $v_w^n((\obl \alpha)^{(n)}) \in D_n$. 
\end{proof}

Recall the notions and notation introduced in \cref{def:pre-swqp-Cn}, which can be also applied to swap Kripke models for  ${C}^D_n$.

\begin{defn}
     Given a set $\Gamma \subseteq For(\Sigma^{C_1}_D)$, we say that $\alpha$ is a logical consequence of $\Gamma$ in ${C}^D_n$, denoted by $\Gamma \vDash_{C^D_n} \alpha$, if the following holds: for every swap Kripke model $\M$ for  ${C}^D_n$, and for every world $w$ in $\M$, if $\M,w \vDash \Gamma$ then $\M,w \vDash \alpha$.
\end{defn}

\begin{thm}[Soundness of $C^D_n$ w.r.t. swap Kripke models] \ \\
    Let $\Gamma \cup \{\varphi\} \subseteq For(\Sigma^{C_1}_D)$. Then: $\Gamma \vdash_{C^D_n} \varphi$  only if $\Gamma \vDash_{C^D_n} \varphi$.   
\end{thm}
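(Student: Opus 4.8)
The plan is to follow the same pattern as the soundness proofs for \DmbC, \DmbCcl, \DCila\ and $C^D_1$: show that every axiom of $C^D_n$ is true at every world of every swap Kripke model for $C^D_n$, show that Modus Ponens and $\obl$-necessitation preserve the appropriate properties, conclude by induction on the length of a derivation that every theorem of $C^D_n$ is valid, and finally handle arbitrary premise sets via the deduction-style clause of $\vdash_{C^D_n}$. For the non-modal axioms — all \cplp\ axioms together with \textbf{(EM)}, \textbf{(cf)}, \textbf{(bc$_n$)} and \textbf{(P$_n$)} — there is nothing new to do: by \cref{val=sound-Cn}, for each world $w$ the restricted valuation $v^n_w$ of a swap Kripke model for $C^D_n$ is one of the valuations of the $RN$matrix for $C_n$ of~\cite{ConiglioToledo2022}, so each such axiom takes a designated value at $w$ precisely because it is $C_n$-valid over that $RN$matrix. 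For the modal axioms, $\axK$ and the strong form \textbf{(D$_n$)} of $\axD$ are given by \cref{sound-axs-CnD}, and the general classicality propagation axiom \textbf{(P$\obl_n$)} by \cref{lemma:Cnax3} (which is exactly where the extra constraint of \cref{def:Cnrest3} is consumed, through \cref{lemma:CnBoo}). Hence every axiom instance is $\M$-true in every world of every swap Kripke model $\M$ for $C^D_n$.

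Next come the rules. If $v^n_w(\alpha)\in D_n$ and $v^n_w(\alpha\to\beta)\in D_n$, then passing to first coordinates and using the truth-table of $\ato$ gives $v^n_{(1,w)}(\beta)=1$, so Modus Ponens preserves truth at a world. Since $\obl$-necessitation is a global rule, it suffices to note that if $\alpha$ is valid in every swap Kripke model for $C^D_n$, then for any such $\M$, any $w$, and any $w'$ with $wRw'$ we have $v^n_{(1,w')}(\alpha)=1$, whence $v^n_{(1,w)}(\obl\alpha)=\bigsqcap\{v^n_{(1,w')}(\alpha) : wRw'\}=1$ (seriality of $R$ makes this a meet over a non-empty set, although $\bigsqcap\emptyset=1$ would suffice anyway), so $\obl\alpha$ is valid as well. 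A routine induction on the length of a derivation then yields that $\vdash_{C^D_n}\alpha$ implies $\alpha$ is valid in every swap Kripke model for $C^D_n$. Finally, suppose $\Gamma\vdash_{C^D_n}\varphi$. If $\vdash_{C^D_n}\varphi$, then validity of $\varphi$ gives $\Gamma\vDash_{C^D_n}\varphi$ trivially; otherwise there are $\gamma_1,\dots,\gamma_k\in\Gamma$ with $\vdash_{C^D_n}(\gamma_1\land\dots\land\gamma_k)\to\varphi$, so this implication is valid. Given any $\M$ and $w$ with $\M,w\vDash\Gamma$, we get $v^n_{(1,w)}(\gamma_i)=1$ for every $i$, hence $v^n_{(1,w)}(\gamma_1\land\dots\land\gamma_k)=1$ (the first coordinate of a conjunction being the meet of the first coordinates, regardless of which branch of \cref{def:Cnop} applies), and truth of the implication at $w$ forces $v^n_{(1,w)}(\varphi)=1$, i.e. $\M,w\vDash\varphi$. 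This establishes $\Gamma\vDash_{C^D_n}\varphi$.

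The proof is largely bookkeeping, since the genuinely delicate arguments are already isolated in \cref{sound-axs-CnD}, \cref{lemma:CnBoo} and \cref{lemma:Cnax3}. If there is a main obstacle, it is conceptual rather than computational: the global character of $\obl$-necessitation must be respected, so the induction has to establish validity of all \emph{theorems} first, and only afterwards can the non-empty-$\Gamma$ case be reduced to a single conjunction–implication theorem; and one should be careful to confirm that \cref{val=sound-Cn} genuinely applies, i.e. that the restrictions in \cref{def:Cnrest} and \cref{def:Cnrest3} really do cut the pre-model valuations down, worldwise, to valuations admitted by the $RN$matrix of~\cite{ConiglioToledo2022}, so that all non-modal soundness can be imported wholesale.
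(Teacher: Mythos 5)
Your proposal is correct and follows essentially the same route as the paper's proof: the non-modal axioms are imported from the $RN$matrix soundness of \cite{ConiglioToledo2022} via \cref{val=sound-Cn}, the modal axioms are discharged by \cref{sound-axs-CnD} and \cref{lemma:Cnax3}, and necessitation/Modus Ponens are handled directly. You merely spell out the bookkeeping (induction on derivation length and the reduction of the premise case to a conjunction--implication theorem) that the paper leaves implicit.
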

\begin{proof}
    The validity of the propositional (non-modal) axioms was already proven in~\cite{ConiglioToledo2022}. Since our construction is similar to that, we simply refer to the proof thus given, taking into consideration \cref{val=sound-Cn}. The cases for the modal axioms follow from lemmas~\ref{sound-axs-CnD} and~\ref{lemma:Cnax3}. Clearly, $\obl$-necessitation preserves validity, by item~6. of \cref{def:nu_w}.
\end{proof}

In order to prove completeness, on the other hand, we need a canonical construction that satisfies our new restrictions.
Let $W_{can}^{(n)}$ be the set of all the sets $\Delta \subseteq For(\Sigma^{C_1}_D)$ such that $\Delta$ is a  $\psi$-saturated sets in $C^D_n$, for some $\psi \in For(\Sigma^{C_1}_D)$. The binary relation $R_{can}^{(n)}$ on $W_{can}^{(n)}$ is defined as in the previous cases. Then:

\begin{lemma} [Truth Lemma for  $C^D_n$]
        \label{lemma:Cntruth}
        For any $\Delta \in W_{can}^{(n)}$, all the following statements hold, for every $\alpha,\beta \in For(\Sigma^{C_1}_D)$:
        \begin{description}
            \item[1.]  $\alpha \land \beta \in \Delta$ iff $\alpha,\beta \in \Delta$
            \item[2.]  $\alpha \lor \beta \in \Delta$ iff $\alpha \in \Delta$ or $\beta \in \Delta$
            \item[3.]  $\alpha \to \beta \in \Delta$ iff $\alpha \notin \Delta$ or $\beta \in \Delta$
            \item[4.]  If $\alpha \notin \Delta$, then $\neg \alpha \in \Delta$
            \item[5.]  If $\neg\neg\alpha \in \Delta$,  then $\alpha \in \Delta$
            \item[6.]  If  $\alpha \notin \Delta$ or $\neg \alpha \notin \Delta$, then $\alpha^1 \in \Delta$ and  $\neg(\alpha^1) \notin \Delta$
            \item[7.]  If  $\alpha, \neg \alpha \in \Delta$ then, for every $1 \leq i \leq n$: if $\alpha^i \notin \Delta$, then $\alpha^j \in \Delta$ for every $1 \leq j \leq n$ with $j \neq i$ 
            \item[8.]  If $\alpha, \neg\alpha \in \Delta$ then there exists a unique $1 \leq k \leq n$ such that $\alpha^{k} \notin \Delta$
            \item[9.]  $\alpha^{(n)} \in \Delta$ iff   $\alpha \notin \Delta$ or $\neg \alpha \notin \Delta$
            \item[10.]  $\obl \alpha \in \Delta$ iff $\alpha \in \Delta\prim$ for all $\Delta\prim \in W_{can}^{(n)}$ such that $\Delta R_{can}^{(n)} \Delta\prim$
            \item[11.]  If $\alpha^{(n)} \in \Delta$, then $(\obl\alpha)^{(n)} \in \Delta$
        \end{description}
    \end{lemma}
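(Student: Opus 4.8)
The plan is to handle the eleven clauses in three groups: the non-modal Boolean and negation clauses (items 1--6), the combinatorial clauses about the sequence $\alpha^1,\dots,\alpha^n$ (items 7--9), and the two modal clauses (items 10--11). Throughout, the only facts used are that a $\psi$-saturated set $\Delta$ in $C^D_n$ is a closed theory (Remark~\ref{prop:phi-sat}) containing every instance of the axioms of $C^D_n$ and closed under Modus Ponens and under the derivability relation of \cref{def:derivations}, together with $\psi \notin \Delta$.

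First, items 1--5 are established exactly as the corresponding clauses of \cref{lemma:truth-sat} and item~5$^+_3$ of \cref{lemma:truth-sat-ext}: items 1--3 come from the $\cplp$ axioms, item 4 from (EM) together with item 2, and item 5 from (cf). Item 6 then follows from items 1, 4, 5 and the definition $\alpha^1 = \neg(\alpha \land \neg\alpha)$ --- this is precisely the argument given for item~III of \cref{lemma:truth-sat-C1D}. For the modal clauses, item 10 is the standard canonical step and is proved verbatim as item~6 of \cref{lemma:truth-sat}: left-to-right is immediate from the definitions of $Den$ and $R^{(n)}_{can}$; for the converse, assuming $\obl\alpha \notin \Delta$ one shows $Den(\Delta) \nvdash_{C^D_n} \alpha$ --- otherwise $\vdash_{C^D_n} (\beta_1 \land \dots \land \beta_m) \to \alpha$ with each $\obl\beta_j \in \Delta$, and $\obl$-necessitation together with \axK\ would force $\obl\alpha \in \Delta$ --- and then one extends $Den(\Delta)$ to an $\alpha$-saturated $\Delta\prim \in W^{(n)}_{can}$, so that $\Delta R^{(n)}_{can} \Delta\prim$ and $\alpha \notin \Delta\prim$. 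Item 11 is immediate from axiom (P$\obl_n$), $\alpha^{(n)} \to (\obl\alpha)^{(n)}$, and closure of $\Delta$ under Modus Ponens.

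The heart of the lemma is items 7--9, which concern only the non-modal fragment. I would first record the iterable observation that, for every $k \geq 0$, if $\alpha^k \notin \Delta$ or $\neg(\alpha^k) \notin \Delta$, then $\alpha^{k+1} \in \Delta$ and $\neg(\alpha^{k+1}) \notin \Delta$ --- this is item 6 applied with $\alpha^k$ in place of $\alpha$. Iterating it from $k=0$ yields the right-to-left direction of item 9: if $\alpha \notin \Delta$ or $\neg\alpha \notin \Delta$, then $\alpha^i \in \Delta$ for all $1 \leq i \leq n$, hence $\alpha^{(n)} \in \Delta$ by item 1. The left-to-right direction of item 9 is the contrapositive: if $\alpha, \neg\alpha \in \Delta$, then (bc$_n$) forces $\alpha^{(n)} \notin \Delta$, for otherwise every formula would belong to $\Delta$, contradicting $\psi \notin \Delta$. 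For items 7 and 8, assume $\alpha, \neg\alpha \in \Delta$. By the previous point and item 1, at least one $\alpha^i$ with $1 \leq i \leq n$ is missing from $\Delta$. For uniqueness, suppose $\alpha^i \notin \Delta$; then $\neg(\alpha^i) \in \Delta$ by item 4, so $\alpha^i \land \neg(\alpha^i) \notin \Delta$ by item 1, hence $\alpha^{i+1} = \neg(\alpha^i \land \neg(\alpha^i)) \in \Delta$ by item 4, and $\neg(\alpha^{i+1}) \notin \Delta$ (else item 5 gives $\alpha^i \land \neg(\alpha^i) \in \Delta$, so $\alpha^i \in \Delta$, a contradiction); iterating, $\alpha^m \in \Delta$ for every $m > i$. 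Thus no two indices in $\{1,\dots,n\}$ can both be missing, so there is a unique $k$ with $\alpha^k \notin \Delta$ (item 8), and item 7 follows at once.

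The only real obstacle is keeping the bookkeeping in items 7--9 honest: one must check that the propagation along $\alpha^0, \alpha^1, \alpha^2, \dots$ uses only $\cplp$-reasoning together with (EM) and (cf), so that it is available in $C^D_n$, and one must see that the sole role of (bc$_n$) here is to pin down the unique ``defect'' in the list $\alpha^1, \dots, \alpha^n$ when $\alpha$ is contradictory in $\Delta$. Alternatively, since items 7--9 speak only about non-modal formulas and the non-modal fragment of $C^D_n$ is $C_n$, one may obtain them directly from the completeness of $C_n$ with respect to its $RN$matrix established in~\cite{ConiglioToledo2022}, in view of \cref{val=sound-Cn}. (Seriality of $R^{(n)}_{can}$, needed afterwards to turn the canonical structure into an actual swap Kripke model, is checked exactly as in the \DmbC\ case, now using (D$_n$) and the strong-negation behaviour of $\cneg^{(n)}$ in place of ($\obl$-E).) The modal items 10--11 present no difficulty beyond what was already done for \DmbC\ and $C^D_1$.
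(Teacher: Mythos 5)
Your proposal is correct, and for clauses 1--6, 9, 10 and 11 it matches the paper's proof essentially step for step: items 1--5 are inherited from the earlier truth lemmas, item 6 is derived from items 1, 4 and 5 together with the definition $\alpha^1=\neg(\alpha\land\neg\alpha)$, the `if' direction of item 9 is obtained by iterating item 6 along $\alpha^0,\alpha^1,\alpha^2,\dots$, the `only if' direction comes from \textbf{(bc$_n$)} and the non-triviality of $\Delta$, item 10 is the standard denecessitation argument, and item 11 is Modus Ponens on ({\bf P$\obl$}$_n$). Where you genuinely diverge is in the combinatorial core, items 7 and 8. The paper proves item 7 first, by induction on $i$, with an inductive step that propagates membership both forwards (via item 6) and backwards (via the contrapositive $(*)$ of item 6, recovering $\alpha^{k-1},\neg(\alpha^{k-1})\in\Delta$ from $\neg(\alpha^k)\in\Delta$), and then declares item 8 an immediate consequence. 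You reverse the dependency: you prove item 8 first --- existence of a missing index from the `only if' half of item 9 together with item 1, and at-most-one from forward propagation alone (if $\alpha^i\notin\Delta$ then $\alpha^m\in\Delta$ for all $m>i$, so two missing indices are impossible) --- and then read item 7 off uniqueness. This eliminates both the induction and the backward propagation, and it also makes explicit the existence half of item 8, which the paper leaves implicit in ``immediate consequence of item 7'' (existence really rests on \textbf{(bc$_n$)}, not on item 7). Both arguments run on the same engine, namely item 6 applied to $\alpha^k$; yours is the more economical bookkeeping. Your alternative of importing items 7--9 from the RNmatrix completeness of $C_n$ in \cite{ConiglioToledo2022} would also work, but it is not the route the paper takes.
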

        \begin{proof}
        Conditions 1.-5. and 10.-11. are proven as in the previous cases, taken into account the axioms and rules of  $C^D_n$.\\[1mm]
        6.: Suppose that  $\alpha \notin \Delta$ or $\neg \alpha \notin \Delta$. By item~1., $\alpha \land \neg\alpha \notin \Delta$ and so $\alpha^1=\neg(\alpha \land \neg\alpha) \in \Delta$, by item~4. Since $\alpha \land \neg\alpha \notin \Delta$ then, by item~5., $\neg(\alpha^1) = \neg\neg(\alpha \land \neg\alpha) \notin \Delta$.\\[1mm]
        7.: Observe that item~6. is equivalent to the following:
        $$(*) \ \ \ \ \mbox{If} \ \ \alpha^1 \notin \Delta \ \mbox{ or } \ \neg(\alpha^1) \in \Delta \ \mbox{ then } \ \alpha, \neg \alpha \in \Delta.$$       
        By induction on $1 \leq i \leq n$ it will be proven that \\[2mm]
\indent        
$\begin{array}{ll}
   P(i):=  & \mbox{ for every $\alpha$, if $\alpha, \neg \alpha \in \Delta$ and $\alpha^i \notin \Delta$, then $\alpha^j \in \Delta$} \\
     & \mbox{  for every $1 \leq j \leq n$ with $j \neq i$}
\end{array}$\\[2mm]
        holds, for every $1 \leq i \leq n$ (for a given $n \geq 2$).\\[1mm]
        Base $i=1$: Assume that $\alpha, \neg \alpha \in \Delta$ and $\alpha^1 \notin \Delta$. By item~6. (applied to $\alpha^1$) it follows that  $\alpha^2 \in \Delta$ and  $\neg(\alpha^2) \notin \Delta$. By applying iteratively the same reasoning,  we infer that  $\alpha^j \in \Delta$ for every $1 \leq j \leq n$ with $j \neq 1$. That is, $P(1)$ holds.\\[1mm] 
        Inductive step: Assume that $P(i)$ holds for every $1 \leq i \leq k \leq n-1$, for a given $1 \leq k \leq n-1$ (Inductive Hypothesis, IH). Let $\alpha, \neg \alpha \in \Delta$ and suppose that $\alpha^{k+1}=(\alpha^k)^1 \notin \Delta$. By item~4.,  $\neg((\alpha^k)^1) \in \Delta$ and so $\alpha^k, \neg (\alpha^k) \in \Delta$, by $(*)$. Since $(\alpha^k)^1 \notin \Delta$ then, by (IH) applied to $\alpha^k$, it follows that $(\alpha^k)^j \in \Delta$ for every $1 \leq j \leq n$ with $j \neq 1$. Since $\alpha^k \in \Delta$, this implies that $$(i) \ \ \ \alpha^j \in \Delta \ \mbox{ for every $k \leq j \leq n$ with $j \neq k+1$.}$$ 
        In turn, since  $\neg((\alpha^{k-1})^1) = \neg (\alpha^k) \in \Delta$, then $\alpha^{k-1}, \neg (\alpha^{k-1}) \in \Delta$, by $(*)$. By applying iteratively the same reasoning,  we infer that  
        $$(ii) \ \ \ \alpha^j \in \Delta \ \mbox{ for every $1 \leq j \leq k-1$.}$$ 
        From (i) and (ii) it follows that $\alpha^j \in \Delta$, for every $1 \leq j \leq n$ with $j \neq k+1$. That is, $P(k+1)$ holds.\\[1mm]
        8.: It is an immediate consequence of item~7.\\[1mm]
        9.: The `Only if' part is immediate, by axiom ({\bf bc}$_n$) and the fact that $\Delta$ is a closed, non-trivial theory. Now, assume that $\alpha \notin \Delta$ or $\neg \alpha \notin \Delta$. By item~6., $\alpha^1 \in \Delta$ and  $\neg(\alpha^1) \notin \Delta$. By  item~6. applied to $\alpha^1$, and taking into account that   $\neg(\alpha^1) \notin \Delta$, it follows that $\alpha^2 \in \Delta$ and  $\neg(\alpha^2) \notin \Delta$. By applying iteratively the same reasoning,  we infer that  $\alpha^j \in \Delta$ for every $1 \leq j \leq n$, hence $\alpha^{(n)} \in \Delta$, by item~1.
    \end{proof}

\begin{defn}
    \label{def:Cn-bival-truth}
		For each $\Delta \in W_{can}^{(n)}$, define $\nu_\Delta^n: For(\Sigma^{C_1}_D) \to \A_n$ such that for each $\Delta \in W_{can}^{(n)}$ we have:
        \begin{equation*}
            v_\Delta^n(\alpha) = 
            \begin{dcases}
                T_n,& \text{if } \alpha \in \Delta, \neg\alpha \notin \Delta\\[1mm]
                t^n_k, & \text{if } \alpha, \neg\alpha \in \Delta \text{ and } \alpha^{k + 1} \notin \Delta\\[1mm]
                F_n,& \text{if } \alpha \notin \Delta, \neg\alpha \in \Delta 
            \end{dcases}
        \end{equation*}  
	\end{defn}

\begin{cor} \label{coro:v_Delta-CnD}
Let  $\Delta \in W_{can}^{(n)}$. Then, the following holds:\\[1mm]
1. The function $ v_\Delta^n$ is well-defined.\\[1mm]
2. $v^n_\Delta(\alpha) \in \{T_n,F_n\}$ iff $\alpha \notin\Delta$ or  $\neg\alpha \notin\Delta$, iff $\alpha^{(n)} \in \Delta$.\\[1mm]
3. $v^n_\Delta(\alpha) =t^n_i$ iff $\alpha^{i+1} \notin \Delta$.
\end{cor}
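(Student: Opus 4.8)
The plan is to obtain all three items directly from the Truth Lemma (\cref{lemma:Cntruth}) together with \cref{def:Cn-bival-truth}; since nothing new is needed, the proof amounts to careful bookkeeping, the only subtle point being the index shift between $\alpha^k$ and $t^n_k$.

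For item~1, I would first argue that the three clauses of \cref{def:Cn-bival-truth} are jointly exhaustive — the missing combination ``$\alpha \notin \Delta$ and $\neg\alpha \notin \Delta$'' is excluded by item~4 of \cref{lemma:Cntruth} — and pairwise incompatible, since the three conditions on the pair $(\alpha \in \Delta?,\ \neg\alpha \in \Delta?)$ are mutually exclusive. It then remains to see that the middle clause selects a single value: when $\alpha, \neg\alpha \in \Delta$, item~8 of \cref{lemma:Cntruth} yields a unique $1 \le k \le n$ with $\alpha^{k} \notin \Delta$, equivalently a unique $0 \le k \le n-1$ with $\alpha^{k+1} \notin \Delta$, so exactly one $t^n_k$ is chosen. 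Hence $v^n_\Delta$ is a well-defined function.

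For item~2, by \cref{def:Cn-bival-truth} we have $v^n_\Delta(\alpha) \in \{T_n, F_n\}$ precisely when the middle clause fails, i.e. when it is not the case that both $\alpha \in \Delta$ and $\neg\alpha \in \Delta$; this gives the first equivalence. The second equivalence, ``$\alpha \notin \Delta$ or $\neg\alpha \notin \Delta$'' iff ``$\alpha^{(n)} \in \Delta$'', is exactly item~9 of \cref{lemma:Cntruth}, so nothing further is required.

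For item~3 (where $0 \le i \le n-1$ is implicit), the left-to-right direction is read off \cref{def:Cn-bival-truth} directly. For the converse, assuming $\alpha^{i+1} \notin \Delta$, I would first rule out the possibility that $\alpha \notin \Delta$ or $\neg\alpha \notin \Delta$: by iterating item~6 of \cref{lemma:Cntruth} exactly as in the proof of its item~9, this would force $\alpha^{j} \in \Delta$ for every $1 \le j \le n$, contradicting $\alpha^{i+1} \notin \Delta$. Hence $\alpha, \neg\alpha \in \Delta$, and then the unique $k$ with $\alpha^{k+1} \notin \Delta$ provided by item~8 must be $k = i$, whence $v^n_\Delta(\alpha) = t^n_i$. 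I do not expect any real obstacle here; the only care needed is to keep the $k \leftrightarrow k+1$ indexing consistent throughout.
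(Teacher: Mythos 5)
Your proof is correct and follows essentially the same route as the paper's (which simply cites item~8 of \cref{lemma:Cntruth} for well-definedness, item~9 plus \cref{def:Cn-bival-truth} for the equivalences, and the definition for item~3); you merely spell out the exhaustiveness/uniqueness bookkeeping and the right-to-left direction of item~3 that the paper leaves implicit. No gaps.
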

\begin{proof} Item 1. is an immediate consequence of item~8. of \cref{lemma:Cntruth}. In turn, item 2. follows by  item~9. of \cref{lemma:Cntruth} and the definition of $v^n_\Delta$. Finally, item~3. is a consequence of  item 1. and the definition of $v^n_\Delta$.
\end{proof}

Consider now the relation $R^{(n)}_{can} \subseteq  W_{can}^{(n)} \times  W_{can}^{(n)}$ defined as in the previous cases.

    \begin{prop} 
    \label{prop:can-model-CnD}
        The structure $\M_{n} = \langle W^{(n)}_{can}, R^{(n)}_{can}, \{v^{n}_\Delta\}_{\Delta \in W^{(n)}_{can}} \rangle$ is a swap Kripke model for ${C}^D_n$ such that, for every $\alpha \in For(\Sigma^{C_1}_D)$, $v^n_\Delta(\alpha) \in D_n$ iff $\alpha \in \Delta$.
    \end{prop}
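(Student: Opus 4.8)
The plan is to verify that the canonical structure $\M_n$ satisfies all the defining conditions of a swap Kripke model for $C^D_n$, namely the clauses of \cref{def:Cnval}, the restrictions of \cref{def:Cnrest}, the extra constraint of \cref{def:Cnrest3}, and that $R^{(n)}_{can}$ is serial; the ``truth equivalence'' $v^n_\Delta(\alpha) \in D_n$ iff $\alpha \in \Delta$ will then be immediate from \cref{def:Cn-bival-truth} together with item~2 of \cref{coro:v_Delta-CnD} (since $v^n_\Delta(\alpha) \in D_n$ iff $v^n_{\Delta}(\alpha) \neq F_n$ iff $\alpha \in \Delta$, the last equivalence being read off the definition of $v^n_\Delta$). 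First I would record that $v^n_\Delta$ is well-defined by \cref{coro:v_Delta-CnD}(1).

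Next I would check the non-modal clauses of \cref{def:Cnval}, i.e. $v^n_\Delta(\neg\alpha) \in \aneg(v^n_\Delta(\alpha))$ and $v^n_\Delta(\alpha \# \beta) \in v^n_\Delta(\alpha) \tilde\# v^n_\Delta(\beta)$ for $\# \in \{\land,\lor,\to\}$. These follow exactly as in the earlier canonical-model proofs (\cref{prop:TruthLemma}, \cref{prop:TruthLemma-ext}, \cref{prop:TruthLemma-cila}): using items 1.--5. of \cref{lemma:Cntruth} one checks, case by case on the values of $v^n_\Delta(\alpha)$ and $v^n_\Delta(\beta)$, that the first coordinate of $v^n_\Delta(\alpha\#\beta)$ is the Boolean combination of the first coordinates, that the negation clause behaves as prescribed by $\aneg$ (using item~5. of \cref{lemma:Cntruth} for the $c_2 \leq a_1$ part), and — for the $Boo_n$-restricted reducts of $\tilde\land,\tilde\lor,\tilde\to$ — that whenever $v^n_\Delta(\alpha), v^n_\Delta(\beta) \in Boo_n$ the value $v^n_\Delta(\alpha\#\beta)$ again lies in $Boo_n$; the latter is precisely item~9. of \cref{lemma:Cntruth} combined with axiom~({\bf P}$_n$) (equivalently, it is \cref{coro:v_Delta-CnD}(2) applied to $\alpha$, $\beta$ and $\alpha\#\beta$). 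For the modal clause, $v^n_\Delta(\obl\alpha) \in \aobl(\{v^n_{\Delta'}(\alpha) : \Delta R^{(n)}_{can}\Delta'\})$, I would argue as in the proof of \cref{lemma:truth-sat} item~6: the $\psi$-saturation of $\Delta$, the necessitation rule and axiom $\axK$ give $\obl\alpha\in\Delta$ iff $\alpha\in\Delta'$ for all $R^{(n)}_{can}$-successors $\Delta'$, hence $v^n_{(1,\Delta)}(\obl\alpha) = \bigsqcap\{v^n_{(1,\Delta')}(\alpha) : \Delta R^{(n)}_{can}\Delta'\}$, which is the required membership in $\aobl(\cdot)$.

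Then I would verify seriality of $R^{(n)}_{can}$ exactly as in \cref{prop:TruthLemma}: if some $\Delta$ had $\obl\alpha\in\Delta$ for every $\alpha$, then in particular $\obl\alpha,\obl\neg\alpha,\obl(\alpha^{(n)})\in\Delta$, so $\obl\bot_\alpha\in\Delta$ (with $\bot_\alpha := (\alpha\land\neg\alpha)\land\alpha^{(n)}$), whence $\bot_\alpha\in\Delta$ by the deontic axiom, contradicting non-triviality of the saturated $\Delta$; thus $Den(\Delta)\nvdash_{C^D_n}\alpha$ for some $\alpha$, and $\psi$-saturation produces an $R^{(n)}_{can}$-successor. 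Finally I would check the two genuinely ``restricted'' requirements. For \cref{def:Cnrest}: if $v^n_\Delta(\alpha)=t^n_0$ then $\alpha,\neg\alpha\in\Delta$ and $\alpha^1\notin\Delta$ (by \cref{coro:v_Delta-CnD}(3)); by item~6 of \cref{lemma:Cntruth} (applied contrapositively, i.e. the form $(*)$) one gets a contradiction if $\alpha^1\notin\Delta$ while $\neg(\alpha\land\neg\alpha)\in\Delta$, so $\alpha\land\neg\alpha\in\Delta$, hence by items~1,9 of \cref{lemma:Cntruth} $\alpha\land\neg\alpha$ is ``Boolean-valued'', giving $v^n_\Delta(\alpha\land\neg\alpha)=T_n$; and if $v^n_\Delta(\alpha)=t^n_k$ with $1\le k\le n-1$ then items~7,8 of \cref{lemma:Cntruth} yield $v^n_\Delta(\alpha\land\neg\alpha)\in I_n$ and, via $\alpha^1,\neg(\alpha^1)$ analysis, $v^n_\Delta(\alpha^1)=t^n_{k-1}$. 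For \cref{def:Cnrest3}: if $v^n_\Delta(\alpha)\in Boo_n$ then $\alpha^{(n)}\in\Delta$ by \cref{coro:v_Delta-CnD}(2), so $(\obl\alpha)^{(n)}\in\Delta$ by item~11 of \cref{lemma:Cntruth}, hence $v^n_\Delta(\obl\alpha)\in Boo_n$ by \cref{coro:v_Delta-CnD}(2) again. The main obstacle is the bookkeeping in the restricted clauses of \cref{def:Cnrest}, where one must translate the index-chasing of \cref{lemma:Cntruth} items~6--8 into statements about $t^n_k$-values; everything else is a routine transcription of the earlier canonical-model arguments.
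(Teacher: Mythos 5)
Your proposal is correct and follows essentially the same route as the paper's proof: verify the clauses of \cref{def:Cnval} via items 1--5 and 10 of \cref{lemma:Cntruth} (with ({\bf P}$_n$) and \cref{coro:v_Delta-CnD}(2) handling the $Boo_n$-closure of the binary connectives), then check the restrictions of \cref{def:Cnrest} and \cref{def:Cnrest3} by the same index-chasing with items 6--8 and 11, the truth equivalence being immediate from \cref{def:Cn-bival-truth}. Your explicit seriality check is a welcome extra the paper leaves implicit here, though note that in $C^D_n$ the deontic axiom is ({\bf D}$_n$) rather than ($\obl$-E), so the step ``$\obl\bot_\alpha\in\Delta$ whence $\bot_\alpha\in\Delta$'' needs the easy observation that $\obl\bot_\alpha\to\bot_\alpha$ is derivable from ({\bf D}$_n$) via the classicality of $\cneg^{(n)}$.
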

\begin{proof}
Observe first that, by \cref{def:Cn-bival-truth},  for every $\Delta$ and every $\alpha$ it holds:
$$(*) \hspace{1cm} v^n_\Delta(\alpha) \in D_n  \ \mbox{ iff } \ \alpha \in \Delta.$$
(I) Let us prove now that each function $v^n_\Delta$ satisfies the properties stated in \cref{def:Cnval}. Concerning conjunction, observe that, by $(*)$ and item~1. of \cref{lemma:Cntruth}, $v^n_\Delta(\alpha \land \beta) \in D_n$ iff $v^n_\Delta(\alpha), v^n_\Delta(\beta) \in D_n$. In turn,  $z,w \in D_n$ implies that $z \tilde{\land} w \subseteq D_n$, and $z=F_n$ or $w=F_n$ implies that $z \tilde{\land} w =\{F_n\}$. Moreover, by item~2. of \cref{coro:v_Delta-CnD}: $v^n_\Delta(\alpha), v^n_\Delta(\beta) \in Boo_n$ implies that  $\alpha^{(n)}, \beta^{(n)} \in \Delta$ and so  $(\alpha \land \beta)^{(n)} \in \Delta$, by ({\bf P}$_n$), then $v^n_\Delta(\alpha \land\beta) \in Boo_n$. Given that $z,w \in Boo_n$ implies that  $z \tilde{\land} w \subseteq Boo_n$, we infer from the previous considerations that $v^n_\Delta(\alpha \land \beta) \in v^n_\Delta(\alpha) \tilde{\land} v^n_\Delta(\alpha)$. Analogously, we prove that $v^n_\Delta(\alpha \# \beta) \in v^n_\Delta(\alpha) \tilde{\#} v^n_\Delta(\alpha)$ for $\# \in \{\vee,\to\}$. Concerning negation, suppose that $v^n_\Delta(\alpha)=T_n$. Then, $\alpha \in \Delta$ and $\neg\alpha \notin \Delta$, and so $v^n_\Delta(\neg\alpha)=F_n \in \{F_n\}=\tilde{\neg} \, T_n=\tilde{\neg} \, v^n_\Delta(\alpha)$. If $v^n_\Delta(\alpha)=F_n$ the proof is analogous. Now, suppose that $v^n_\Delta(\alpha)=t^n_i$. Then, $\neg\alpha \in \Delta$ and so, by $(*)$,  $v^n_\Delta(\neg\alpha) \in D_n=\tilde{\neg} \, t^n_i=\tilde{\neg} \, v^n_\Delta(\alpha)$.  Finally, by $(*)$, if $v^n_\Delta(\obl\alpha) \in D_n$ then $\obl\alpha \in \Delta$  and so $\alpha \in \Delta\prim$ for all $\Delta\prim \in W_{can}^{(n)}$ such that $\Delta R_{can}^{(n)} \Delta\prim$, by item~10. of \cref{lemma:Cntruth}. This means that  $v^n_{\Delta\prim}(\alpha) \in D_n$ for all $\Delta\prim \in W_{can}^{(n)}$ such that $\Delta R_{can}^{(n)} \Delta\prim$, by $(*)$ once again, therefore 
$v_\Delta^n (\obl \alpha) \in D_n=\aobl \big(\{v_{\Delta\prim}^n(\alpha) \ : \ \Delta R_{can}^{(n)} \Delta\prim\}\big)$. Now, if $v^n_\Delta(\obl\alpha) = F_n$ then $\obl\alpha \notin \Delta$, by $(*)$, hence there exists some $\Delta\prim \in W_{can}^{(n)}$ such that $\Delta R_{can}^{(n)} \Delta\prim$ and $\alpha \notin \Delta\prim$, by item~10. of \cref{lemma:Cntruth}. This means that  $v^n_{\Delta\prim}(\alpha) = F_n$ for some $\Delta\prim \in W_{can}^{(n)}$ such that $\Delta R_{can}^{(n)} \Delta\prim$, by $(*)$ once again. From this, $v_\Delta^n (\obl \alpha) \in \{F_n\}=\aobl \big(\{v_{\Delta\prim}^n(\alpha) \ : \ \Delta R_{can}^{(n)} \Delta\prim\}\big)$.
\\[1mm]
(II) Let us see now that  each  $v^n_\Delta$ satisfies the restrictions imposed in  \cref{def:Cnrest} and \cref{def:Cnrest3}. Thus, assume first that $v^n_\Delta(\alpha) = t^n_0$. Then, $\alpha, \neg\alpha \in \Delta$ and $\alpha^1 \notin \Delta$. Hence, $(\alpha \land \neg \alpha) \in \Delta$, by item~1. of \cref{lemma:Cntruth}, and $\neg(\alpha \land \neg \alpha) = \alpha^1 \notin \Delta$. From this, $v^n_\Delta(\alpha \land \neg \alpha) = T_n$. Now, suppose that $v^n_\Delta(\alpha) = t^n_k$ for some $1 \leq k \leq n-1$. By \cref{def:Cn-bival-truth}, $\alpha, \neg\alpha \in \Delta$ and $\alpha^{k+1} \notin \Delta$. From this, $(\alpha \land \neg \alpha) \in \Delta$ and $\neg(\alpha \land \neg \alpha) = \alpha^1 \in \Delta$, by items~1. and~8. of \cref{lemma:Cntruth}. By  \cref{def:Cn-bival-truth},  $v^n_\Delta(\alpha \land \neg \alpha) \in I_n$. Suppose that $\neg(\alpha^1) \notin \Delta$. By item~6. of \cref{lemma:Cntruth} applied to $\alpha^1$,  it follows that $\alpha^2 \in \Delta$ and  $\neg(\alpha^2) \notin \Delta$. By applying iteratively item~6. of \cref{lemma:Cntruth} to $\alpha^2$, $\alpha^3$ and so on, we conclude that $\alpha^{k+1} \in \Delta$, a contradiction. This means that  $\neg(\alpha^1) \in \Delta$. Since $\alpha^1 \in \Delta$ and $(\alpha^1)^k=\alpha^{k+1} \notin \Delta$, we conclude by  \cref{def:Cn-bival-truth} that  $v^n_\Delta(\alpha^1) = t^n_{k-1}$. This shows that the restrictions of  \cref{def:Cnrest} are satisfied by the functions $v^n_\Delta$. Finally, suppose that $v^n_\Delta(\alpha) \in Boo_n$. By item~2 of \cref{coro:v_Delta-CnD}, $\alpha^{(n)} \in \Delta$. By  ({\bf P$\obl$}$_n$), $(\obl \alpha)^{(n)} \in \Delta$ and so, by item~2 of \cref{coro:v_Delta-CnD} once again, $v^n_\Delta(\obl \alpha) \in Boo_n$. This shows that the condition of \cref{def:Cnrest3} are also satisfied by the functions $v^n_\Delta$.

This concludes the proof.  
\end{proof}

   \begin{thm}[Completeness of $C^D_n$ w.r.t. swap Kripke models] \ \\
	For any set $\Gamma \cup \{\varphi\} \subseteq For(\Sigma^{C_1}_D)$, if $\Gamma \vDash_{C^D_n} \varphi$ then $\Gamma \vdash_{C^D_n} \varphi$.
    \end{thm}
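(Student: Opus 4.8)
The plan is to argue by contraposition, using the canonical swap Kripke model $\M_n$ built in \cref{prop:can-model-CnD}, exactly as was done for all the previous systems. So I would assume $\Gamma \nvdash_{C^D_n} \varphi$ and produce a swap Kripke model for ${C}^D_n$, together with a world in it, that refutes $\Gamma \vDash_{C^D_n} \varphi$.

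First I would bring in the $\psi$-saturation machinery. Since ${C}^D_n$ is Tarskian and finitary (as are all the logics considered here; cf.\ \cref{prop:phi-sat}), from $\Gamma \nvdash_{C^D_n} \varphi$ it follows that there is a $\varphi$-saturated set $\Delta$ in ${C}^D_n$ with $\Gamma \subseteq \Delta$. In particular $\Delta \in W^{(n)}_{can}$, and since $\Delta$ is a closed theory with $\Delta \nvdash_{C^D_n} \varphi$ we get $\varphi \notin \Delta$, while every $\gamma \in \Gamma$ belongs to $\Delta$.

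Next I would feed $\Delta$ into the canonical model. By \cref{prop:can-model-CnD}, the structure $\M_n = \langle W^{(n)}_{can}, R^{(n)}_{can}, \{v^n_\Delta\}_{\Delta \in W^{(n)}_{can}} \rangle$ is a genuine swap Kripke model for ${C}^D_n$ (so $R^{(n)}_{can}$ is serial and every $v^n_\Delta$ respects the nondeterministic tables of \cref{def:Cnop} as well as the restrictions of \cref{def:Cnrest} and \cref{def:Cnrest3}), and it satisfies the truth condition $v^n_\Delta(\alpha) \in D_n$ iff $\alpha \in \Delta$. Applying this: for each $\gamma \in \Gamma$ we have $v^n_\Delta(\gamma) \in D_n$, i.e.\ $\M_n, \Delta \vDash \Gamma$; and since $\varphi \notin \Delta$ we have $v^n_\Delta(\varphi) \notin D_n$, i.e.\ $\M_n, \Delta \nvDash \varphi$. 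This witnesses $\Gamma \nvDash_{C^D_n} \varphi$, and by contraposition the theorem follows.

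I expect no genuine obstacle here: the substantive verifications — that the canonical valuations obey the tables for $\aneg, \aland, \alor, \ato, \aobl$ and the three extra constraints, and that $R^{(n)}_{can}$ is serial — are already discharged in \cref{lemma:Cntruth}, \cref{coro:v_Delta-CnD} and \cref{prop:can-model-CnD}. The only point needing a moment's care is reconciling the two notions of derivability: by \cref{def:derivations}, $\Gamma \nvdash_{C^D_n} \varphi$ means precisely that $\nvdash_{C^D_n} \varphi$ and there is no finite $\{\gamma_1,\dots,\gamma_k\} \subseteq \Gamma$ with $\vdash_{C^D_n}(\gamma_1 \land \dots \land \gamma_k) \to \varphi$, which is exactly the input the finitary $\psi$-saturation lemma requires to yield a $\varphi$-saturated $\Delta \supseteq \Gamma$ with $\varphi \notin \Delta$; once that is secured, the rest is a one-line appeal to \cref{prop:can-model-CnD}.
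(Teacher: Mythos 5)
Your proposal is correct and follows essentially the same route as the paper: contraposition, extraction of a $\varphi$-saturated $\Delta \supseteq \Gamma$ via \cref{prop:phi-sat}, and a direct appeal to \cref{prop:can-model-CnD} to obtain the refuting world in the canonical model $\M_n$. The extra remark about unfolding \cref{def:derivations} to match the hypothesis of the saturation lemma is a harmless elaboration of what the paper leaves implicit.
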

    \begin{proof}
        Suppose that $\Gamma \nvdash_{C^D_n} \varphi$. Then, there is a $\varphi$-saturated set $\Gamma \subseteq \Delta$ such that $\varphi \notin \Delta$. From \cref{prop:can-model-CnD},  $\M_n$ is a swap Kripke model for $C^D_n$  and  $\Delta$ is a world in $\M_n$ such that $\M_n,\Delta \vDash\Gamma$ but $\M_n,\Delta \nvDash\varphi$. This implies that $\Gamma \nvDash_{C^D_n} \varphi$.
    \end{proof}

\subsection{A small addition}

We briefly mention that in order to validate $\axD$ in standard formulation, that is, using the primitive paraconsistent negation $\neg$ of $C_n$, we need one more restriction added to our valuations, namely:

\begin{defn}
    \label{def:Cnrest2}
     A swap Kripke model  for ${C}^D_n$ is said to be {\em strict} if the valuations satisfy, in addition, the following constraint:
         $$\mbox{If } v_w^n(\obl \alpha) \in D_n \ \mbox{ then, for every $w\prim \in W$ such that $w R w\prim$, } \ v^n_{w\prim}(\alpha)= T_n.$$
\end{defn}

Then it is easy to see that $\axD$ (formulated with $\neg$) is valid w.r.t.  strict  swap Kripke models  for ${C}^D_n$. 

\begin{prop}
   Axiom schema 
   $$({\bf SD}_n) \ \ \ \obl \alpha \to \neg\obl\neg\alpha$$ 
   is valid w.r.t. strict  swap Kripke models  for ${C}^D_n$. 
\end{prop}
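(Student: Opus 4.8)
The plan is to argue directly at the level of snapshots, using the projection notation adapted from \cref{rmk:notation} and the fact recalled just before \cref{def:Cnrest}, namely that $v^n_w(\beta)\in D_n$ iff $v^n_{(1,w)}(\beta)=1$. Fix a strict swap Kripke model $\M_n=\langle W,R,\{v^n_w\}_{w\in W}\rangle$ for $C^D_n$ and a world $w\in W$. From clause~2 of \cref{def:Cnval} together with the definition of $\tilde{\to}$ in \cref{def:Cnop} one has $v^n_{(1,w)}(\obl\alpha\to\neg\obl\neg\alpha)=v^n_{(1,w)}(\obl\alpha)\supset v^n_{(1,w)}(\neg\obl\neg\alpha)$, so it suffices to show that $v^n_{(1,w)}(\obl\alpha)=1$ forces $v^n_{(1,w)}(\neg\obl\neg\alpha)=1$.

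So I would assume $v^n_w(\obl\alpha)\in D_n$. By strictness (\cref{def:Cnrest2}), $v^n_{w\prim}(\alpha)=T_n$ for every $w\prim$ with $wRw\prim$; applying $\aneg$ (clause~1 of \cref{def:Cnop}, equivalently the $\tilde{\neg}$ table of \cref{obs:tables-Cn}) gives $v^n_{w\prim}(\neg\alpha)=F_n$, hence $v^n_{(1,w\prim)}(\neg\alpha)=0$, for each such $w\prim$.

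Next, since $R$ is serial the set $\{w\prim:wRw\prim\}$ is non-empty, so by the clause for $\aobl$ (clause~5 of \cref{def:Cnop} together with clause~3 of \cref{def:Cnval}) we obtain $v^n_{(1,w)}(\obl\neg\alpha)=\bigsqcap\{v^n_{(1,w\prim)}(\neg\alpha):wRw\prim\}=0$. Unwinding the defining condition of $A_n$ shows that $F_n$ is the only snapshot in $A_n$ whose first coordinate is $0$; hence $v^n_w(\obl\neg\alpha)=F_n$, and applying $\aneg$ once more yields $v^n_w(\neg\obl\neg\alpha)=T_n$, that is, $v^n_{(1,w)}(\neg\obl\neg\alpha)=1$. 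Therefore $v^n_{(1,w)}(\obl\alpha\to\neg\obl\neg\alpha)=1\supset 1=1$, so $\M_n,w\vDash\obl\alpha\to\neg\obl\neg\alpha$; as $w$ and $\M_n$ were arbitrary, $(\mathbf{SD}_n)$ is valid w.r.t.\ strict swap Kripke models for $C^D_n$.

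I do not expect a genuine obstacle here: the whole argument is a short chase through the snapshot definitions, parallel to the proof of item~2 of \cref{sound-axs-CnD}. The only two points deserving a word of care are the use of seriality---without it the meet defining $(\obl\neg\alpha)_{(1,w)}$ would be taken over the empty family and would be the top value $1$, and the argument would collapse---and the elementary observation that a snapshot in $A_n$ with first coordinate $0$ must equal $F_n$, which one checks by reading the constraint $(\bigwedge_{i\le k}z_i)\lor z_{k+1}=1$ successively for $k=1,\dots,n$.
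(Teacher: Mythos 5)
Your proof is correct and rests on the same key ingredients as the paper's: the strictness condition forcing $v^n_{w\prime}(\alpha)=T_n$ at accessible worlds, the fact that $\tilde{\neg}\,T_n=\{F_n\}$, and seriality. The only difference is organizational --- the paper argues by contradiction (assuming $v^n_w(\neg\obl\neg\alpha)=F_n$, hence $v^n_w(\obl\neg\alpha)=T_n$, and applying strictness to $\obl\neg\alpha$ as well), whereas you compute forward directly from $v^n_w(\obl\alpha)\in D_n$; both are equally valid and essentially the same argument.
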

\begin{proof}
    Let $\M$ be a  strict  swap Kripke model  for ${C}^D_n$, and suppose that, for some formula $\alpha$ and some world $w$ in $\M$, $v^n_w(\obl \alpha) \in D_n$ but $v^n_w(\neg\obl \neg\alpha) =F_n$. The latter implies that $v^n_w(\obl \neg\alpha) =T_n \in D_n$. By \cref{def:Cnrest2} it follows that, for every $w'$ in $\M$ such that $w R w'$, it is the case that $v^n_{w'}(\alpha)=T_n=v^n_{w'}(\neg\alpha)$. But this is a contradiction, since  $\tilde{\neg} \, T_n = \{F_n\}$. This shows that $\M \vDash \obl \alpha \to \neg\obl\neg\alpha$ for every strict  swap Kripke model  for ${C}^D_n$ and for every formula $\alpha$.
\end{proof}

It also interesting to notice that these strict models collapse the two notions of permission. Whereas in the first formulation of $C^n_D$ one would formally be able to characterize two distinct notions of permission, namely $\cneg^{(n)} \obl \cneg^{(n)}\alpha$ and $\neg \obl \neg\alpha$,  in this strict formulation there is a collapse, since whenever $\obl \alpha$ holds in $w$, in all the worlds accessible to $w$, $\alpha$ behaves classically and is true. It is an easy exercise to see the preservation of soundness and completeness of  ${C}^D_n$ plus $({\bf SD}_n)$ w.r.t. strict  swap Kripke models. 

\section{Applications of $C^D_1$ and $C^D_n$ to Moral Dilemmas}

Since most of the logics here presented have already been studied when applied to Chisholm's paradox \cite{Coniglio2009b} and to Moral Dilemmas \cite{VazMaruchi25}, we focus on discussing the applications of $C^D_1$ and $C^D_n$ to moral dilemmas. The special attention to moral dilemmas follows the argument presented in \cite{VazMaruchi25}, where the authors argue that paraconsistent deontic logics are better off applied to moral dilemmas, since they can further enlighten the subject and also deal with conflicting obligations more directly, while also giving an explanation to the phenomena. 

Shortly, moral dilemmas are usually stated in bimodal logics, having two operators, one for obligation, $\obl$, and another alethic operator, namely $\Diamond$, which denotes possibility in the actual world. Moral dilemmas can thus be formalized as $$\obl \alpha \land \obl \beta \land \neg \Diamond (\alpha \land \beta)$$ 
Clearly, when $\beta = \neg \alpha$, we have conflicting obligations. A standard example of moral dilemmas is Sophie's Dilemma, in which a prisoner of a Nazi camp has to decide to save either her daughter or her son, who are scheduled to be executed and if she decides not to pick between one of them, both are executed.  

There are a few remarks to be done here. First, is that the argument in \cite{VazMaruchi25} points towards the conflicting obligations being the root cause of the problem in such scenarios, thus relegating a secondary role to the alethic operator. Second, is that the negation appearing before the alethic operator could also be regarded as a different negation, which behaves classically, so that it would take away the possibility of solving the problem by paraconsistent logic alone. Our focus is to solve the problem deontically, with paraconsistency being a feature of the deontic systems we are studying. In other words, we want that the focus of our discussion are formulas that are deontic and have negations inside the scope of the modal operator.

Having laid down these considerations, we can overview how the systems $C^D_1$ and $C^D_n$ deal with moral dilemmas.

If we look at system $C^D_1$, there are a few options on how to solve moral dilemmas. The first and obvious route is to try and differentiate the negation happening in the scope of the modal operator. Thus, for example, if the formula is $\obl \alpha \land \obl \neg \alpha$, then this is perfectly acceptable in our model for $C^D_1$. However, this could be seen as \textit{not enough}. If we take $\obl \neg \alpha$ to represent a \textit{weak} prohibition, then nothing like the scenario of Sophie's Dilemma could be pictured. In fact, this could, in some sense, be pictured as a minor dilemma. For example, picture the following scenario: you have a class on Friday night and a friend calls you offering a ticket for a concert that they can not attend anymore due to personal reasons. By attending the concert, you miss class and potentially fail your course, but it happens that is a band you really like, and might be your last opportunity to see them live, and as a big fan of art, you have a principle to always support the artists you like whenever possible. This could count as a minor dilemma, since the consequences of this act would not have big consequences, such as somebody's death or the starting of a war. So this motivates us to analyze cases in which $\obl \alpha \land \obl \cneg \alpha$ occurs. When taking prohibition to be $\obl \cneg \alpha$, we are forced to analyze $\alpha$ classically in a deontic setting and thus explosion is recovered. Hence, in such scenarios, $C^D_1$ allows for ``weak'' dilemmas without explosion to occur, but not for strong ones, as in the case of Sophie's Dilemma.

Going up the hierarchy, that is, talking about all of $C^D_n$, it might be the case that these logics integrate even finer distinctions of ``strengths'' of permissible dilemmas. What then $C^D_n$ allows is that stronger dilemmas can be accounted for in the system. Since we move the classicality up the hierarchy, now the strong negation understood as $\neg \alpha \land \alpha^1$ can work as part of a definition of a ``strong obligation'', which can then account for such a distinction. In this sense, we would have the conflicting obligations in Sophie's Dilemma being assigned a designated value in the system without trivialization.

Another discussion necessary in order to bring these systems to their full potential is whether or not these distinctions between weak and strong obligations in fact play a role in the actual situations we are trying to formalize. This, however, is a discussion that the authors will delve into in further papers. 

In summary, while $C^D_1$ allows for some distinction between weak and strong prohibitions in a naive sense, it does not accommodate for a conflict between an obligation and a strong prohibition, thus limiting its usefulness to formalize and deal with moral dilemmas. On the other hand, it is just the first step of a whole hierarchy, which, in turn, allows for both cases to be formalized and dealt with, with trivialization.

\section{Concluding Remarks}

This paper presented a new way to give semantics to {\ldis} conflating possible worlds and nondeterministic endeavours via swap structures. Although our approach here is not fully deterministic, it maintains a good balance between Nmatrices, RNmatrices and Kripke semantics, showing that it is possible to mix them, by satisfactorily characterizing these logics in such a setting. In particular we described many logics along the {\ldis} hierarchy, beginning with {\DmbC}, the minimal \textit{LFI} equipped with the modal axioms for \sdl, and walking up the hierarchy basing our propositional semantics on Nmatrices.

By adding {\bf (cl)} to {\DmbC}, we strengthen our systems in such a way that it becomes impossible to characterize them in terms of finite Nmatrices alone. We then resort to a reading of RNmatrices adapted to swap structures, namely, a restriction in the admissible swap valuations. This move allows us to characterize {\DmbCcl} and also stronger logics, such as {\DCila}, and the whole of $C^D_n$ hierarchy.   

Regarding the latter, the developments here presented are entirely new. In the case of $C^D_1$, a sketch of its semantics, by means of bivaluations, was given in \cite{CostaCarnielli1986}. In this paper, we fully develop those proofs by means of the novel notion of  swap Kripke structures, giving proofs for both {\DCila} and $C^D_1$. For $C^D_n$ in general, it is the first time this family of systems is fully developed and semantically characterized. We also discuss briefly the different systems that can be defined given the multiple notions of negation these system are able to express. A thorough survey of such systems would require a paper on its own, and our objective here is to lay down the technical grounds upon which this discussion is allowed to be attained.

We believe this combination between nondeterministic semantics and possible worlds semantics can be fruitful in the conception of new semantics for logical systems, since it allows for the introduction of new concepts into the logic, for example, detaching modal notions from possible world semantics, thus allowing for a higher expressivity of these nondeterministic modal systems. The mix between them also allow for systems in which each world is nondeterministic and, as seen in the case for $C^D_n$, the modal operator that has its truth conditions based on these nondeterministic worlds inherits the nondeterministic behavior, as well as being sufficiently expressive as to accommodate for two notions of prohibition in its full capabilities. This allows  a more or less fine-grained distinctions of dilemmas, depending on how far into the hierarchy the dilemma is modeled. We also point out that further investigation should look deeper into the philosophical aspects of such systems, as well how they fare when modeling other paradoxes of deontic logics. 

\

\

\noindent{\bf Acknowledgements:} Vaz holds a PhD scholarship from the S\~ao Paulo Research Foundation (FAPESP, Brazil), grant 2022/16816-9, and was also financed by the German Academic Exchange Service (DAAD, Germany), under the Bi-nationally supervised/Cotutelle Doctorate degree program.
Coniglio acknowledges support by an individual research grant from the National Council for Scientific and Technological Development (CNPq, Brazil), grant 309830/2023-0. All the authors were supported by the S\~ao Paulo Research Foundation (FAPESP, Brazil), thematic project {\em Rationality, logic and probability -- RatioLog}, grant  2020/16353-3.

\

\bibliographystyle{plain}

\

\

\noindent \textsc{Mahan Vaz}\\
{Instituto de Filosofia e Ciências Humanas (IFCH), and \\ Institut für Philosophie I, Logik und Erkenntnistheorie\\
Universidade Estadual de Campinas (UNICAMP), and\\ Ruhr-Universität Bochum\\
Universitätstraße 150, GB04/43, Bochum, Germany}\\
\texttt{Mahan.VazSilva@edu.ruhr-uni-bochum.edu}

\

\

\noindent \textsc{Marcelo E. Coniglio}\\
{Instituto de Filosofia e Ci\^{e}ncias Humanas (IFCH), and\\
Centro de L\'ogica, Epistemologia e Hist\'oria da Ci\^encia (CLE)\\  Universidade Estadual de Campinas (UNICAMP)\\ Rua S\'ergio Buarque de Holanda, 251, Campinas, SP, Brazil}\\
\texttt{coniglio@unicamp.br}

\end{document}